\DeclareMathOperator{\KL}{KL}
\newcommand*{\cD}{\mathcal{D}}
\newcommand*{\eps}{\varepsilon}
\DeclareMathOperator{\TV}{TV}
\newtheorem{theorem}{Theorem}[section]
\newtheorem{lemma}[theorem]{Lemma}
\newtheorem{proposition}[theorem]{Proposition}
\newtheorem{definition}[theorem]{Definition}
\newtheorem{remark}[theorem]{Remark}
\newcommand{\mysec}[1]{Sec.~\ref{sec:#1}}
\newcommand{\myfig}[1]{Figure~\ref{fig:#1}}
\newcommand{\myalgo}[1]{Alg.~\ref{algo:#1}}
\newcommand{\myapp}[1]{App.~\ref{sec:#1}}
\title{Sequential Algorithms for Testing Identity and Closeness of Distributions}
\author{
	Omar Fawzi\\
	Univ Lyon, ENS Lyon, UCBL\\ CNRS, Inria, LIP, F-69342\\ Lyon Cedex 07, France\\
	\texttt{omar.fawzi@ens-lyon.fr} \\
	\And
	Nicolas Flammarion \\
	EPFL \\
	Lausanne, Switzerland\\
	\texttt{nicolas.flammarion@epfl.ch} \\
	\And
	Aurélien Garivier\\
	UMPA UMR 5669 and LIP UMR 5668 CNRS, \\
	ENS de Lyon, UCB Lyon 1\\
	Lyon, France\\  \texttt{aurelien.garivier@ens-lyon.fr}
	\And
	Aadil Oufkir\\
	  LIP UMR 5668 CNRS\\
	ENS de Lyon, UCB Lyon 1\\
	Lyon, France
	\\\texttt{aadil.oufkir@ens-lyon.fr} 
}
\begin{document}
  \maketitle
  
\begin{abstract}
What advantage do \emph{sequential} procedures provide over batch algorithms for testing properties of unknown distributions? Focusing on the problem of testing whether two distributions $\mathcal{D}_1$ and $\mathcal{D}_2$ on $\{1,\dots, n\}$ are equal or $\eps$-far, we give several answers to this question. 
We show that for a small alphabet size $n$, there is a sequential algorithm that outperforms any batch algorithm by a factor of at least $4$ in terms sample complexity. For a general alphabet size $n$, we give a sequential algorithm that uses no more samples than its batch counterpart, and possibly fewer if the actual distance $\TV(\mathcal{D}_1, \mathcal{D}_2)$ between $\mathcal{D}_1$ and $\mathcal{D}_2$ is larger than $\eps$. As a corollary, letting $\eps$ go to $0$, we obtain a sequential algorithm for testing closeness when no a priori bound on $\TV(\mathcal{D}_1, \mathcal{D}_2)$ is given that has a sample complexity $\tilde{\mathcal{O}}(\frac{n^{2/3}}{\TV(\mathcal{D}_1, \mathcal{D}_2)^{4/3}})$: this improves over the $\tilde{\mathcal{O}}(\frac{n/\log n}{\TV(\mathcal{D}_1, \mathcal{D}_2)^{2} })$ tester of~\cite{daskalakis2017optimal} and is optimal up to multiplicative constants.
We also establish limitations of sequential algorithms for the problem of testing identity and closeness: they can improve the worst case number of samples by at most a constant factor. 
\end{abstract}

\section{Introduction}\label{intro}

How to test if two discrete sources of randomness are similar or distinct?
This basic and ubiquitous question is surprisingly not closed if frugality matters, that is if one wants to take the right decision using as few samples as possible.

To state the problem more precisely, one first needs to define what ``distinct" means. In this paper, we endow the set of probability distributions on $\{1,\dots,n\}$ with the \emph{total variation distance} $\TV$, and we fix a tolerance parameter $\eps \in [0,1]$.
We consider two distributions $\mathcal{D}_1$ and $\mathcal{D}_2$, and we assume that either $\mathcal{D}_1=\mathcal{D}_2$ or $\TV(\mathcal{D}_1,\mathcal{D}_2)>\eps$. Whenever $0 < \TV(\mathcal{D}_1,\mathcal{D}_2)\leq \eps$, we do not expect any determined behaviour from our test. 
Two cases occur: 
\begin{itemize}
    \item when the first distribution $\mathcal{D}_1$ is fixed and known to the algorithm (but not $\mathcal{D}_2$), we say that we are \emph{testing identity} using independent samples of $\mathcal{D}_2$; 
    \item when both distributions are unknown we are \emph{testing closeness}, based on an equal number of independent samples of both distributions.
\end{itemize}

We also need to specify what kind of ``test" is considered. Here we treat the two hypotheses symmetrically (there is no ``null hypothesis") 
: given a fixed risk $\delta\in(0,1)$, we expect our procedure to find the true one with probability $1-\delta$, whichever it is. We call such a procedure \emph{$\delta$-correct}.

Finally, we consider and compare two notions of ``frugality": in the \emph{batch} setting, the agent specifies in advance the number of samples needed for the test: she takes her decision just after observing the data all at once, and the sample complexity of the test is the smallest sample size of a $\delta-$correct procedure. In the \emph{sequential} setting, the agent observes the samples one by one, and decides accordingly whether she takes her decision or requests to see more samples before making a decision. Then, the sample complexity of the test is the smallest \emph{expected number of samples} needed before a $\delta$-correct procedure takes a decision. Note that this expected number depends on the unknown distributions $\mathcal{D}_1$ and $\mathcal{D}_2$, which turns out to be an important advantage of sequential procedures.

\paragraph{Contributions} 
 When $n \ge 2$ is small, we show that the optimal sample complexities can be precisely characterized (up to lower order terms in $\eps$) in both the batch and sequential setting as shown in Table~\ref{tab:n=2_summary} and Table~\ref{tab:n=2_summary2}. This establishes a provable advantage for sequential strategies over batch strategies when $n\ll \log(1/\delta)$: sequential algorithms reduce the sample complexity by a factor of at least $4$, and can stop rapidly if the tested distributions are far (i.e., $\TV(\mathcal{D}_1,\cD_2) > \eps$). The improvements of the sequential algorithm are illustrated in Fig.~\ref{fig:synthetic}. The sequential algorithms use stopping rules inspired by time uniform concentration inequalities. The problems of testing identity and closeness for small $n$ are studied in \mysec{bernoulli-id} and \mysec{bernoulli} respectively.  
\begin{table}[t!]
\centering
\begin{tabular}{  c  c c} 
\hline
  \textbf{Model} &Lower bound  &Upper bound \\
  \hline
  Batch  & $8\frac{\lfloor n^2/4\rfloor}{n^2}\log(1/\delta)\eps^{-2}-\mathcal{O}\left(n\log\log(1/\delta)\eps^{-2}\right) $ &$8\frac{\lfloor n^2/4\rfloor}{n^2}\log(1/\delta)\eps^{-2}+8\frac{\lfloor n^2/4\rfloor}{n^2}(n+1)\eps^{-2}$ 
  \\ 
  \hline
  \multirow{2}{*}{Sequential ($\tau_1)$}    &\multirow{2}{*}{ $2\frac{\lfloor n^2/4\rfloor}{n^2}\log(1/\delta)\eps^{-2} -\mathcal{O}(\eps^{-2})  $}&$2\frac{\lfloor n^2/4\rfloor}{n^2}\log(1/\delta)\eps^{-2}$  \\ 
  &&$+\mathcal{O}\left((n+\log(1/\delta)^{2/3})\eps^{-2}\right)$\\
  \hline
  \multirow{2}{*}{Sequential ($\tau_2)$}    & \multirow{2}{*}{  $2\frac{|B_{opt}|}{n}\left(1-\frac{|B_{opt}|}{n}\right)\log(1/\delta)d^{-2} -\mathcal{O}(d^{-2})$   }& $  2\frac{|B_{opt}|}{n}\left(1-\frac{|B_{opt}|}{n}\right)\log(1/\delta)d^{-2}   $   \\
  &&$+\mathcal{O}\left((n+\log(1/\delta)^{2/3})d^{-2}\right)  $ \\
  \hline
\end{tabular}
\caption{Lower and upper bounds on  sample complexity for testing uniform in batch and sequential setting with $d=\TV(\mathcal{D},U_n)=|\mathcal{D}(B_{opt})-|B_{opt}|/n|$.  $\tau_1$ (resp. $\tau_2$) represents the stopping time of the sequential algorithm when $\cD=U_n$ (resp. $\TV(\mathcal{D},U_n)>\eps$). The $\mathcal{O}$ hides universal constants.
}
\label{tab:n=2_summary}
\end{table}
\begin{table}[t!]
\centering
\begin{tabular}{  c  c c} 
\hline
  \textbf{Model} &Lower bound  &Upper bound \\
  \hline\hline
  Batch  & $4\log(1/\delta)\eps^{-2}-\mathcal{O}(\log\log(1/\delta)\eps^{-2}) $ &$4\log(1/\delta)\eps^{-2} + \mathcal{O}(n\eps^{-2})$ 
  \\ 
  \hline
   \multirow{2}{*}{Sequential ($\tau_1)$}    & \multirow{2}{*}{$\log(1/\delta)\eps^{-2} - \mathcal{O}(\eps^{-2}) $ }&$\log(1/\delta)\eps^{-2}$  \\ 
  &&$+\mathcal{O}( (n + \log(1/\delta)^{2/3})\eps^{-2})$\\
  \hline
  \multirow{2}{*}{Sequential ($\tau_2)$} &  \multirow{2}{*}{$\log(1/\delta)d^{-2} - \mathcal{O}(d^{-2}) $ }  & $  \log(1/\delta)d^{-2}  $   \\
  &&$+\mathcal{O}\left((n + \log(1/\delta)^{2/3})d^{-2}\right)   $ \\
  \hline
\end{tabular}
\caption{Lower and upper bounds on  the sample complexities for testing closeness in the batch and sequential settings with $d=\TV(\mathcal{D}_1,\mathcal{D}_2).$ $\tau_1$ (resp. $\tau_2$) represents the stopping time of the sequential algorithm when $\cD_1=\cD_2$ (resp. $\TV(\mathcal{D}_1,\cD_2)>\eps$).
The $\mathcal{O}$ hides universal constants. 
}
\label{tab:n=2_summary2}
\end{table}
\\

For general $n\ge 2$, 
we improve the dependence on $\eps$ to $\eps \vee \TV(\mathcal{D}_1,\mathcal{D}_2)$ in the best batch algorithm due to \cite{diakonikolas2020optimal}, which is known to be optimal up to multiplicative constants. Namely we obtain a sequential closeness testing algorithm using a number of samples given by
\begin{equation}
\mathcal{O}\left( \max\left( \frac{n^{2/3}\log^{1/3}(1/\delta)}{(\eps\vee \TV(\mathcal{D}_1,\mathcal{D}_2))^{4/3}},\frac{n^{1/2}\log^{1/2}(1/\delta)}{(\eps\vee \TV(\mathcal{D}_1,\mathcal{D}_2))^2},\frac{\log(1/\delta)}{(\eps\vee \TV(\mathcal{D}_1,\mathcal{D}_2))^2}\right)\right)\;.
\label{eq:testing_closeness_samples}
\end{equation}
A doubling search technique could also lead to the same order of sample complexity, we explain this method and compare it with our proposed algorithm in Rem.~\ref{rem:doubling}.

As a special case, when $\eps=0$ (the algorithm should not stop when $\mathcal{D}_1 = \mathcal{D}_2$ in this case) we show that there is an algorithm that stops after \begin{equation}
    \mathcal{O}\left(\max\left( \frac{\log\log(1/d)}{d^2},\frac{n^{2/3}\log\log(1/d)^{1/3}}{d^{4/3}} ,\frac{n^{1/2}\log\log(1/d)^{1/2}}{d^{2}}\right) \right)
    \label{eq:testing_diff_samples}
\end{equation}
samples where $d=\TV(\mathcal{D}_1,\mathcal{D}_2) > 0$. This is an improvement over the sequential algorithm of \cite{daskalakis2017optimal} which uses $\Theta(\frac{n/\log n}{d^2}\log \log (1/d))$ samples. 
We design the stopping rules according to a time uniform concentration inequality deduced from McDiarmid's inequality, where we use the ideas of \cite{howard2018uniform,howard2020time} in order to obtain powers of $\log\log(1/d)$ instead of  $\log(1/d)$. 

We show that the sample complexity for the testing closeness problem given by Eq.~\eqref{eq:testing_closeness_samples} is optimal up to multiplicative constants in the worst case setting (i.e., when looking for a bound independent of the distributions $\mathcal{D}_1$ and $\mathcal{D}_2$). To do so, we construct two families of distributions whose cross TV distance is exactly $d\ge \eps$ and hard to distinguish unless we have a number of samples given by Eq.~\eqref{eq:testing_closeness_samples}. This latter  lower bound is based on properties of KL divergence along with Wald's Lemma. Using similar techniques, we also establish upper and lower bounds for testing identity that match up to multiplicative constants.

In addition, we establish a lower bound on the number of queries that matches Eq.~\eqref{eq:testing_diff_samples} up to multiplicative constants. 
The proof is inspired by  \cite{karp2007noisy} who proved lower bounds for testing whether the mean of a sequence of i.i.d. Bernoulli variables is smaller or larger than $1/2$. We construct well-chosen distributions $\cD_k$ (for $k$ integer)  that are at distance $\eps_k$ ($\eps_k$ decreasing to $0$) from uniform and then use properties of the Kullback-Leibler’s divergence to show that no algorithm can distinguish between $\cD_k$ and uniform using fewer samples than in Eq.~\eqref{eq:testing_diff_samples}. Note that we could have used the testing closeness lower bound described in the previous paragraph and let $\eps=0$, however this gives sub-optimal lower bounds. 
 \begin{figure}[t]
\centering
\begin{minipage}[c]{.5\linewidth}
\includegraphics[width=1\linewidth]{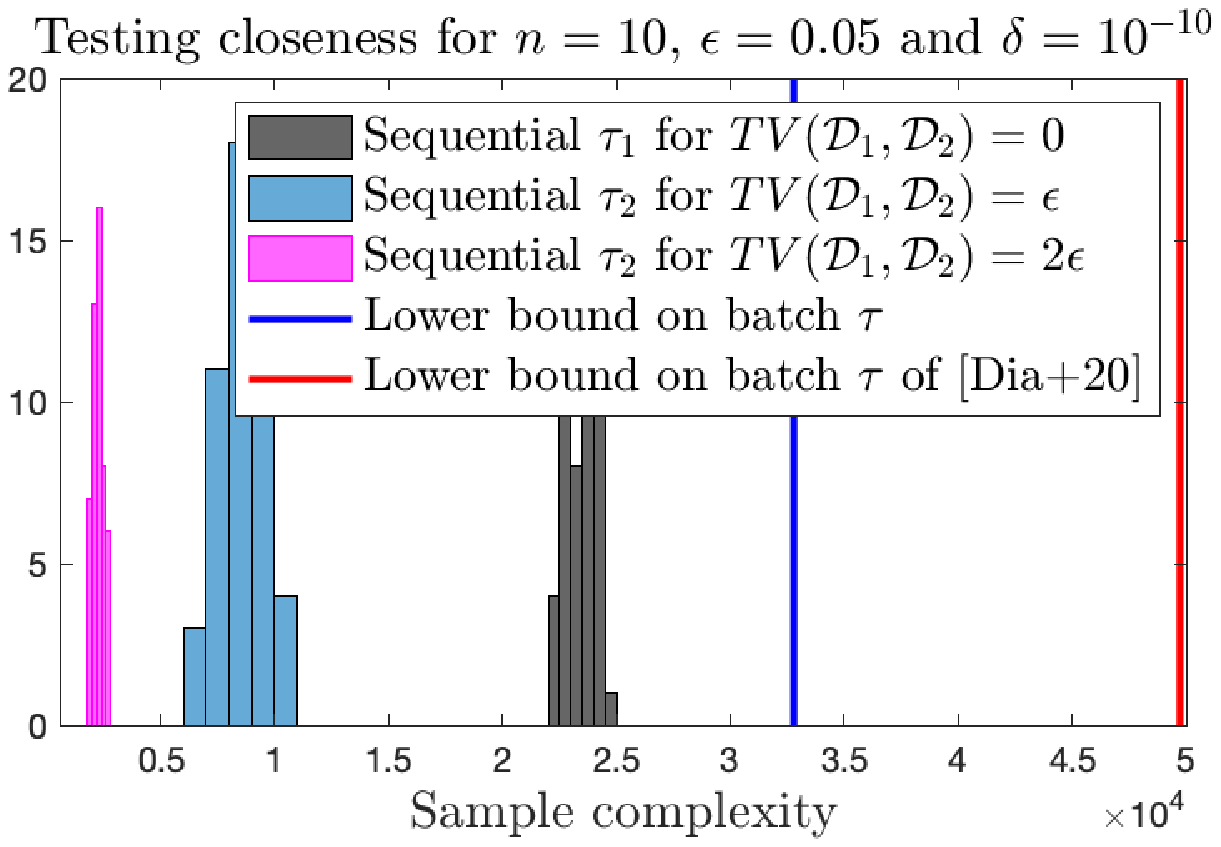}
   \end{minipage}
   \hspace*{-10pt}
   \begin{minipage}[c]{.5\linewidth}
\includegraphics[width=1\linewidth]{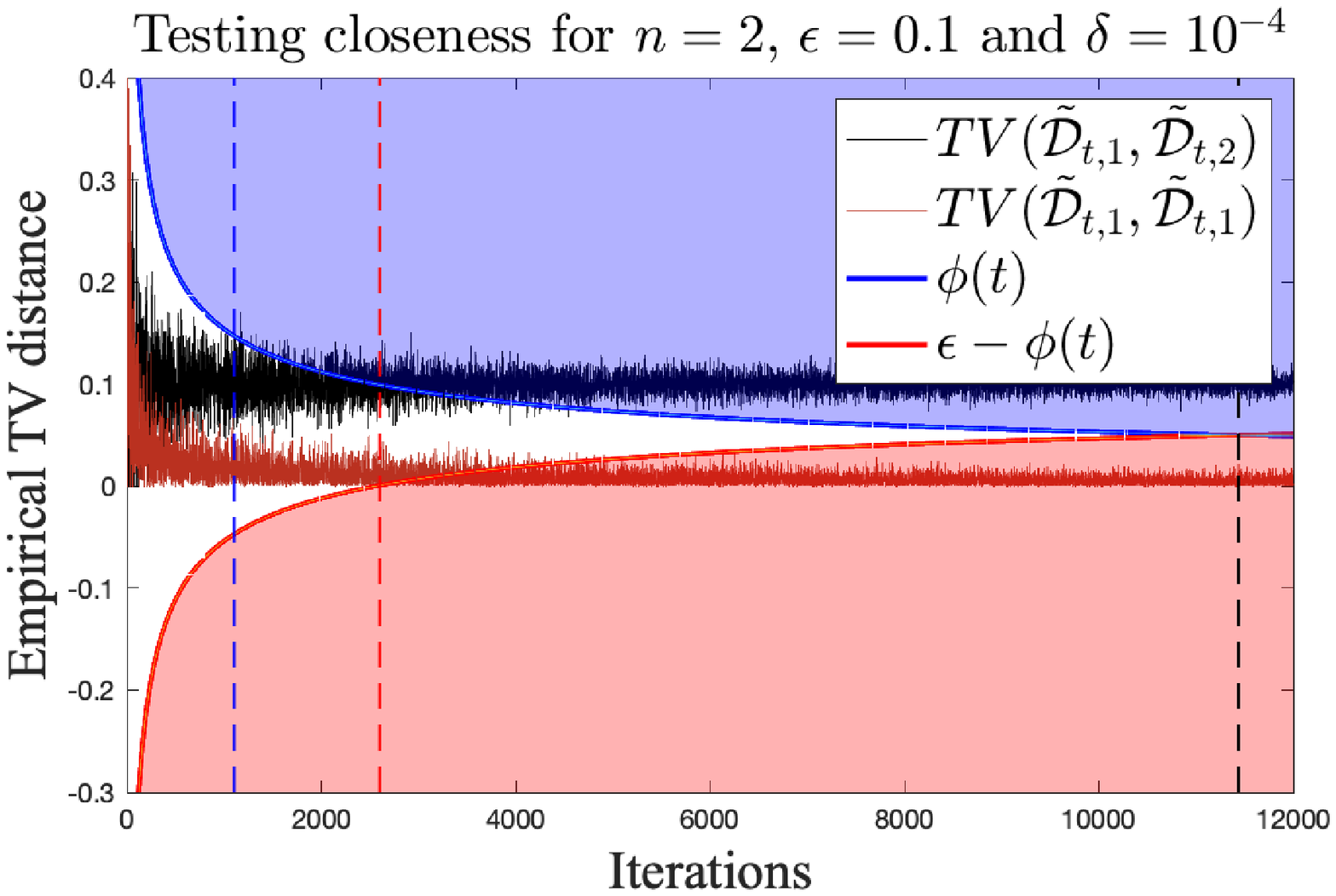}
   \end{minipage}
  \caption{Left: histogram of the stopping times for $100$ Monte-Carlo experiments. Black: ${\cal D}_1={\cal D}_2=U_n$, blue (resp. magenta): $\mathcal{D}_1=U_n$ and $\mathcal{D}_2=\{(1\pm 2\eps)/n\}$  (resp. $\{(1\pm 4\eps)/n\})$. Right: $\mathcal{D}_1=U_2$ and $\mathcal{D}_2=\{(1\pm 2\eps)/2\}$. The sequential tester stops as soon as the statistic enters the red region (for $H_1$) or blue region (for $H_2$) whereas the batch tester waits for the red and blue regions to cover the whole segment $[0,1]$. The blue/red and black dashed lines represent respectively the stopping times of the sequential and batch algorithms. We note that, in both cases, the sequential tester stops long before the batch algorithm.}
     \label{fig:synthetic}
\end{figure}

\paragraph{Discussion of the setting and related work}

It is clearly impossible to test $\mathcal{D}_1=\mathcal{D}_2$ versus $\mathcal{D}_1\neq \mathcal{D}_2$ in finite time: this is why the slack parameter $\eps$ is introduced in this setting. Other authors like~\cite{daskalakis2017optimal} make a different choice: they fix no $\eps$, but only require that the test decides for $\mathcal{D}_1\neq \mathcal{D}_2$ as soon as it can, and never stops with high probability when $\mathcal{D}_1 = \mathcal{D}_2$.

We focus on the TV distance in testing closeness problems because it 
characterises the  probability of error for  the problem of distributions discrimination 
; as noted by \cite{MR3775963}, using other distances such as $\KL$ and $\chi^2$ is in general impossible.

For an overview of testing discrete distributions we recommend the survey of \cite{canonne2020survey}. Testing identity for the uniform distribution was solved by  \cite{paninski2008coincidence}, then for general distribution by \cite{valiant2017automatic} and finally the high probability version by \cite{diakonikolas2017optimal}. Likewise testing closeness was solved by \cite{chan2014optimal}, and a distribution dependent complexity was found by \cite{diakonikolas2016new} and finally the  high probability version by \cite{diakonikolas2020optimal}. Besides, the problem of testing $\mathcal{D}_1=\mathcal{D}_2$ vs $\mathcal{D}_1\neq\mathcal{D}_2$ was solved by \cite{daskalakis2017optimal} for $n=2$, however the constants are not optimal. They also propose algorithms for the general case using black-box reduction from non-sequential hypothesis testers. Sequential and adaptive procedures have also been explored in active hypothesis setting \citep{naghshvar2013sequentiality}, channels' discrimination \citep{hayashi2009discrimination} and quantum hypothesis testing \citet{li2021optimal}. Sequential strategies have been also considered for testing continuous distributions by \cite{zhao2016adaptive} and \cite{balsubramani2015sequential}. In the latter, the authors design sequential algorithms whose stopping time adapts to the unknown difficulty of the problem. The techniques used are time uniform concentration inequalities which are surveyed by \cite{howard2020time}. In contrast to the present work, however, they test properties of the \emph{means} of the distributions.  
%

\section{Preliminaries}
We mostly follow \cite{daskalakis2017optimal} for the notation.
\subsection{Testing identity}
 Given two distributions $\mathcal{D}$ (known)  and $\mathcal{D}'$ (unknown) on $[n] := \{1,\dots,n\}$, we want to distinguish between two hypothesis  $H_1 :\mathcal{D}'=\mathcal{D}$ and $H_2: \TV(\mathcal{D}',\mathcal{D})>\eps$ where the \textit{Total Variation} 
is defined as  $
 \TV(\cD,\cD')=\frac{1}{2}\sum_{i=1}^n |\cD_i-\cD'_i|.$ 
 We call a stopping rule a function $T:[n]^* \rightarrow \{0,1,2\}$ such that if $T(x)\neq 0$ then $T(xy)=T(x)$ for all strings $x$ and $y$.  $T(x)=1$ (resp. $T(x)=2$) means that the rule accepts $H_1$ (resp. $H_2$) after seeing $x$ while $T(x)=0$ means the rule does not make a choice and continues sampling. 
 We define two different stopping times, the first $\tau_1(T,\mathcal{D}')=\inf\{t, T(x_1\cdots x_t)=1\}$ and the second $\tau_2(T,\mathcal{D}')=\inf\{t, T(x_1\cdots x_t)=2\}$ where $x_1,\dots $ are i.i.d. samples from $\mathcal{D}'$. We want to find stopping rules satisfying 
 \begin{enumerate}
\item   $ \mathds{P}\left(\tau_2(T,\mathcal{D})\le\tau_1(T,\mathcal{D})\right) \le \delta$ and    
\item  $ \mathds{P}\left(\tau_1(T,\mathcal{D}')\le \tau_2(T,\mathcal{D}')\right) \le \delta$ whenever $\TV(\mathcal{D}',\mathcal{D})>\eps$.
 \end{enumerate}
We call such a stopping rule $\delta$-correct.  Our goal is to minimize the expected sample complexity $\mathds{E}(\tau_1(T,\mathcal{D}))$ in case of the input is from $\mathcal{D}$ and $\mathds{E}(\tau_2(T,\mathcal{D}'))$  in case of the input is from $\mathcal{D}'$ such that $\TV(\mathcal{D}',\mathcal{D})>\eps$.

A batch algorithm is one for which $\tau = \tau_1 = \tau_2$ is a constant random variable which only depends on $\delta, \eps, n$ and $\mathcal{D}$. 

 \subsection{Testing closeness}
 Given two distributions $\mathcal{D}_1$ and $\mathcal{D}_2$ on $\{1,\dots,n\}$ we want to distinguish between two hypothesis  $H_1: \mathcal{D}_1=\mathcal{D}_2$ and $H_2: \TV(\mathcal{D}_1,\mathcal{D}_2)>\eps$. 
 We call a stopping rule a function $T:\bigcup_{k\in\mathds{N}}[n]^k\times [n]^k \rightarrow \{0,1,2\}$ such that if $T(x,y)\neq 0$ then $T(xz,yt)=T(x,y)$ for all strings $x,y,z,t$ with $|x|=|y|$ and $|z|=|t|$.  $T(x,y)=1$ (resp. $T(x,y)=2$) means that the rule accepts $H_1$ (resp. $H_2$) after seeing the sequences $x$ and $y$ while $T(x,y)=0$ means the rule doesn't make a choice and continue sampling. We define two different stopping times, the first $\tau_1(T,\mathcal{D}_1,\mathcal{D}_2)=\inf\{t, T(x_1\cdots x_t,y_1\dots y_t)=1\}$ and the second $\tau_2(T,\mathcal{D}_1,\mathcal{D}_2)=\inf\{t, T(x_1\cdots x_t,y_1\dots y_t)=2\}$ where $x_1,\dots $ are i.i.d. samples from $\mathcal{D}_1$ and $y_1,\dots $ samples from $\mathcal{D}_2$ . We want to find stopping rules satisfying  
  \begin{enumerate}
 \item  $ \mathds{P}\left(\tau_2(T,\mathcal{D}_1,\mathcal{D}_2)\le \tau_1(T,\mathcal{D}_1,\mathcal{D}_2)\right) \le \delta$  if  $\mathcal{D}_1=\mathcal{D}_2$ and    
\item  $ \mathds{P}\left(\tau_1(T,\mathcal{D}_1,\mathcal{D}_2)\le \tau_2(T,\mathcal{D}_1,\mathcal{D}_2)\right) \le \delta$ whenever $\TV(\mathcal{D}_1,\mathcal{D}_2)>\eps$.
 \end{enumerate}
We call such a stopping rules $\delta$-correct. Our goal is to minimize the expected sample complexity $\mathds{E}(\tau_1(T,\mathcal{D}_1,\mathcal{D}_2))$ in case of the input is from $\mathcal{D}_1$, $\mathcal{D}_2$ such that $\mathcal{D}_1=\mathcal{D}_2$ and $\mathds{E}(\tau_2(T,\mathcal{D}_1,\mathcal{D}_2))$  in case of the input is from $\mathcal{D}_1$, $\mathcal{D}_2$ such that $\TV(\mathcal{D}_1,\mathcal{D}_2)>\eps$.

\section{Testing identity for small $n$\label{sec:bernoulli-id}}

In this section, we focus on small $n\ge2$ and we consider two distributions $\mathcal{D}=U_n$ and $\mathcal{D}'$ on $[n]$.  In this case, the hypothesis $H_1$ becomes $\mathcal{D}'=U_n$ and $H_2$ becomes $\TV(\mathcal{D}',U_n)>\eps$. We are interested in precisely comparing the sample complexity of testing identity in the sequential versus the batch setting. 
In order to find the optimal constant, we first need to obtain a sharp lower bound in the batch setting, which is done directly by using Stirling's approximation. We then turn to the sequential case.



\subsection{Batch setting}\label{sec:test-iden-be}


In the batch setting, the number of steps $\tau $ is fixed before the test. The tester samples $A_1,\dots,A_\tau \sim \mathcal{D}'$ and decides according to the comparison between the empirical TV distance  $\TV(\tilde{\mathcal{D}}'_\tau,U_n)$ and $\eps/2$ where $\tilde{\mathcal{D}}'_{\tau}=\left\{\left(\sum_{j=1}^{\tau} 1_{A_j= i}\right)/\tau\right\}_{i\in[n]}$. If  $\TV(\tilde{\mathcal{D}}'_\tau,U_n)\le \eps/2$  she accepts $H_1$  and rejects it otherwise. In order to control the number of steps $\tau$ so that the error of this algorithm does not exceed $\delta$, Chernoff–Hoeffding's inequality~(\cite{hoeffding1994probability}) writes for i.i.d. random variables $X_1,\dots,X_\tau\sim \mathcal{B}(q)$:
\begin{align}\label{eq:hoefdding}\tag{C-H}
\mathds{P}\!\left(\frac{\sum_{i=1}^{\tau}X_i}{\tau}\!-\!q>\frac{\eps}{2}\right)\!\le\! e^{-\tau\KL(q+\eps/2 ,  q) } \text{ and }
\mathds{P}\!\left(\frac{\sum_{i=1}^{\tau}X_i}{\tau}\!-\!q<\!-\frac{\eps}{2}\right)\!\le\! e^{-\tau\KL(q-\eps/2 , q) }\;.
\end{align}
We use the following property of $\TV$ distance:
 \begin{align*}
 \TV(\mathcal{D}',U_n)&=\max_{B\subset[\lfloor n/2 \rfloor ]} |\mathcal{D}'(B)-|B|/n|=|\mathcal{D}'(B_{opt})-|B_{opt}|/n|,
 \end{align*}and choose $X_i=1_{A_i\in B_{opt}}\sim \mathcal{B}(\mathcal{D}'(B_{opt}))$.

Applying these inequalities for $\mathcal{D}'=U_n$ (to control the type I error) 
 and  for $\mathcal{D}'\neq U_n$ (to control  the type II error)
prove that this test is $\delta$-correct if 
\begin{align*}
\tau=\max_{b\in[n]}\left\{\frac{\log(2/\delta)}{\KL(b/n\pm\eps/2,b/n\pm\eps)}, \frac{\log(2^{n+1}/\delta)}{\KL(b/n\pm\eps/2,b/n)} \right\},
\end{align*}
where $\KL(p,q)=\KL(\mathcal{B}(p),\mathcal{B}(q))$ denotes the Kullback-Leibler divergence.

To see this, we analyze the three cases: $\mathcal{D}'=U_n$, $\mathcal{D}'(B_{opt})-|B_{opt}|/n>\eps$ and $\mathcal{D}'(B_{opt})-|B_{opt}|/n<-\eps$. They are all handled by a simple application of Chernoff–Hoeffding's inequality~\eqref{eq:hoefdding}:
\begin{itemize}
\item If $\mathcal{D}'=U_n$, the probability of error is given by \begin{align*}
\mathds{P}\left(\left|\TV(\tilde{\mathcal{D}}',U_n)\right|>\frac{\eps}{2}\right)&=\mathds{P}\left(\exists B\subset [n]: \left|\tilde{\mathcal{D}}'(B)-|B|/n\right|>\frac{\eps}{2}\right)
\\&\le  \sum_{B\subset[\lfloor n/2 \rfloor ]} e^{-\tau\KL(|B|/n+\eps/2|||B|/n)}+e^{-\tau\KL(1-|B|/n+\eps/2||1-|B|/n)}
         \\&\le \delta\;.
\end{align*}
\item If $\mathcal{D}'(B_{opt})-|B_{opt}|/n>\eps$, the probability of error is given by \begin{align*}
\mathds{P}\left(\left|\TV(\tilde{\mathcal{D}}',U_n)\right|\le\frac{\eps}{2}\right)&\le  \mathds{P}\left(\tilde{\mathcal{D}}'(B_{opt})-|B_{opt}|/n\le \frac{\eps}{2}\right)
           \\&\le e^{-\tau\KL(|B_{opt}|/n+\eps/2||\mathcal{D}'(B_{opt}))}
\\&\le e^{-\tau\KL(|B_{opt}|/n+\eps/2|||B_{opt}|/n+\eps)} \text{ because } x \mapsto \KL(p||x) \text{ is increasing on } (p,1)
			\\&\le \delta\;.
\end{align*}
\item If $\mathcal{D}'(B_{opt})-|B_{opt}|/n<-\eps$, the probability of error is given by \begin{align*}
\mathds{P}\left(\left|\TV(\tilde{\mathcal{D}}',U_n)\right|\le\frac{\eps}{2}\right)&\le  \mathds{P}\left(\tilde{\mathcal{D}}'(B_{opt})-|B_{opt}|/n\ge -\frac{\eps}{2}\right)
           \\&\le e^{-\tau\KL(|B_{opt}|/n-\eps/2||\mathcal{D}'(B_{opt}))}
\\&\le e^{-\tau\KL(|B_{opt}|/n-\eps/2|||B_{opt}|/n-\eps)} \text{ because } x\rightarrow\KL(p||x) \text{ is decreasing on } (0,p)
			\\&\le \delta\;.
\end{align*}
\end{itemize}

We show in the following theorem that this number of steps $\tau$ is necessary. 
\begin{theorem}\label{ber-id-ns}
In the batch setting, any $\delta$-correct algorithm testing identity to $\mathcal{D}=U_n$ requires at least $\tau$ samples, where
\begin{align*}
\tau\ge\max_{b\in[n]}\min\left\{\frac{\log(1/\delta)}{\KL(b/n+\eps/2,b/n+\eps)}, \frac{\log(1/\delta)}{\KL(b/n+\eps/2,b/n)} \right\}   -\mathcal{O}\left(\frac{n\log\log1/\delta}{\eps^2}\right) \;.
\end{align*}
\end{theorem}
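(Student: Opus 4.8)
The plan is to reduce the composite testing problem to a single binary hypothesis test and to extract the sharp constant through a direct large-deviation (Stirling) computation, rather than a change-of-measure/data-processing bound; the latter is exactly what loses the factor of four here. Fix a subset size $b$ and a set $B\subset[n]$ with $|B|=b$, and introduce the alternative $\mathcal{D}'$ that spreads mass $b/n+\eps$ uniformly over $B$ and the remaining mass uniformly over $B^c$, so that $\TV(\mathcal{D}',U_n)=\eps$ (I work at distance $\eps$ and let an infinitesimal perturbation enforce the strict inequality $\TV>\eps$, using continuity of the divergences). Because both $U_n$ and $\mathcal{D}'$ are constant on $B$ and on $B^c$, the likelihood ratio between $U_n^{\otimes\tau}$ and $(\mathcal{D}')^{\otimes\tau}$ depends on the sample only through the count $S_B=\#\{i\le\tau:A_i\in B\}$, which is $\mathrm{Bin}(\tau,b/n)$ under $H_1$ and $\mathrm{Bin}(\tau,b/n+\eps)$ under $\mathcal{D}'$.

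First I would observe that any $\delta$-correct tester is, in particular, a test for the binary problem $\{U_n,\mathcal{D}'\}$ whose type~I and type~II errors are both at most $\delta$. Since $S_B$ is sufficient for this pair, the Neyman--Pearson lemma lets me assume without loss of generality that the tester thresholds $S_B$, accepting $H_1$ iff $S_B\le m$; write $c=m/\tau\in(b/n,b/n+\eps)$. The two error probabilities are then the binomial tails $\mathds{P}_{b/n}(S_B>m)$ and $\mathds{P}_{b/n+\eps}(S_B\le m)$, and I would lower bound each of them by Stirling's formula in the form $\mathds{P}_q(S_B\ge\tau c')\ge \tfrac{1}{\mathrm{poly}(\tau)}\,e^{-\tau\KL(c',q)}$. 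Requiring both to be at most $\delta$ forces $\tau\KL(c,b/n)\ge\log(1/\delta)-O(\log\tau)$ and $\tau\KL(c,b/n+\eps)\ge\log(1/\delta)-O(\log\tau)$, whence $\tau\ge\bigl(\log(1/\delta)-O(\log\tau)\bigr)/\min\{\KL(c,b/n),\KL(c,b/n+\eps)\}$. The polynomial prefactor of the tail is precisely what turns into the $O(\log\log(1/\delta))$ loss once $\tau=\Theta(\log(1/\delta)/\eps^2)$ is substituted back.

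Next I would eliminate the tester's free threshold $c$ using the monotonicity of $x\mapsto\KL(x,p)$ (increasing on $(p,1)$, decreasing on $(0,p)$, as already invoked for the upper bound). If $c\le b/n+\eps/2$ then $\KL(c,b/n)\le\KL(b/n+\eps/2,b/n)$, and if $c\ge b/n+\eps/2$ then $\KL(c,b/n+\eps)\le\KL(b/n+\eps/2,b/n+\eps)$; in both cases $\min\{\KL(c,b/n),\KL(c,b/n+\eps)\}\le\max\{\KL(b/n+\eps/2,b/n),\KL(b/n+\eps/2,b/n+\eps)\}$. Substituting gives $\tau\ge\log(1/\delta)/\max\{\KL(b/n+\eps/2,b/n+\eps),\KL(b/n+\eps/2,b/n)\}-O(\log\log(1/\delta)/\eps^2)$, which is exactly $\min\{\log(1/\delta)/\KL(b/n+\eps/2,b/n+\eps),\log(1/\delta)/\KL(b/n+\eps/2,b/n)\}$ up to the remainder. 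Taking the maximum over $b$ yields the claimed bound, the $n$-factor in the remainder arising from the bookkeeping over the admissible subset sizes.

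The crux, and the reason the bound is four times the naive estimate, is that the honest divergence is measured from the midpoint: $\KL(b/n+\eps/2,b/n+\eps)\approx\eps^2/(8p(1-p))$ is a factor four smaller than the divergence $\KL(b/n,b/n+\eps)\approx\eps^2/(2p(1-p))$ that a direct data-processing bound between $U_n$ and $\mathcal{D}'$ would use. A change-of-measure argument cannot reach this: routed through the midpoint it caps at a numerator $\KL(1/2,\delta)\approx\tfrac12\log(1/\delta)$, so it recovers at best half of the claimed constant. Capturing the full factor requires controlling \emph{both} binomial error exponents simultaneously at the \emph{same} threshold via the two-sided Stirling estimate, which is exactly what the direct computation does and where the delicate non-asymptotic correction $O(n\log\log(1/\delta)/\eps^2)$ is incurred; I expect pinning down the precise form of this remainder to be the most technical part of the argument.
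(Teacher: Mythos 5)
Your proposal is correct, and it reaches the theorem by a genuinely different reduction than the paper's. The paper never invokes Neyman--Pearson or sufficiency: it \emph{symmetrizes} the tester (replacing $A$ by a majority vote $B$ over all permutations of the input, which is $2\delta$-correct and depends only on the empirical counts), evaluates $B$ at a single hand-picked word $w$ whose type is exactly the midpoint $b/n+\eps/2$ split across the heavy and light symbols, and argues by cases on the label $B(w)$: whichever hypothesis $B$ assigns to $w$, a Stirling estimate of the \emph{multinomial} probability of $w$ under the distribution that this label wrongs forces $\tau\,\KL(\text{midpoint},\cdot)\ge \log(1/\delta)-\mathcal{O}(n\log\tau)$. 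Your case analysis on which side of $b/n+\eps/2$ the threshold $c$ falls plays exactly the role of the paper's case analysis on $B(w)$, and your two one-sided binomial tail bounds replace the multinomial pmf bound; the large-deviation heart of the argument --- measuring the exponent from the midpoint rather than from an endpoint, which is where the factor $4$ over change-of-measure bounds comes from --- is identical, and your closing explanation of why Wald/data-processing arguments cap at roughly half the constant is accurate. What your route buys: collapsing at the outset to the one-dimensional sufficient statistic $S_B$ turns everything into a textbook binomial estimate, and your remainder is only $\mathcal{O}(\log\log(1/\delta)/\eps^2)$ with no factor of $n$ --- the paper's $n$ comes from the prefactor $(e\tau_1)^{-d}(e\tau_{d+1})^{-(n-d)}$ in the multinomial Stirling bound, so your aside attributing an $n$-factor to ``bookkeeping over subset sizes'' is unnecessary, and since the remainder is subtracted, your bound is (harmlessly) stronger than stated. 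What the paper's route buys: symmetrization substitutes for the Neyman--Pearson/monotone-likelihood-ratio package, so nothing about the structure of optimal tests is needed, and the randomization-at-the-threshold boundary case (which you should at least acknowledge: the NP test may randomize at $S_B=m$, so one bounds $\mathds{P}_{b/n}(S_B>m)$ and $\mathds{P}_{b/n+\eps}(S_B<m)$ rather than literally assuming a deterministic threshold) never arises. Two details to nail down in a full write-up: the alternative must lie at distance strictly greater than $\eps$, so your infinitesimal perturbation has to be carried through by continuity of $\KL$; and the self-referential bound $\tau K\ge\log(1/\delta)-\mathcal{O}(\log\tau)$ must be converted to the stated form via the implication $t\ge b-a\log t\Rightarrow t\ge b-a\log b$ (the paper's Lemma~\ref{lemma-log}), which also makes visible the $\log(1/\eps)$ term that both your remainder and the paper's implicitly absorb.
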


This lower bound has the simple equivalent $8\frac{\lfloor n^2/4\rfloor}{n^2}\log(1/\delta)\eps^{-2}-\mathcal{O}\left(n\log\log(1/\delta)\eps^{-2}\right)$ when $\eps\rightarrow 0$ (see Lemma \ref{lemma-kl} for the equivalent of $\KL$ divergence). 
To prove this lower bound, we show that every $\delta$-correct tester can be transformed into a test which depends only on the numbers of $1's, 2's,\dots,n's$ occurred on $\{A_1,\dots,A_\tau\}$ . We then consider the distribution $\mathcal{D'}$  with roughly half parts are  $1/n+\eps/\lfloor n/2\rfloor$ and the others are $1/n-\eps/\lceil n/2\rceil$and derive tight lower bounds on the probability mass function of the multinomial distribution. 
\begin{proof}
We consider such a $\delta$-correct test $A:\{1,\dots,n\}^\tau \rightarrow \{0,1\}$, it sees a word consisting of $\tau$ samples either from a distribution $\eps$-far from $U_n$ or $U_n$ and returns $1$ if it thinks the the samples come from $U_n$ and $2$ otherwise. We construct another test $B:\{1,\dots,n\}^\tau \rightarrow \{0,1\}$by the expression 
\begin{align*}
B(x)=1_{\sum_{\sigma\in \mathcal{S}_\tau}A(\sigma(x))-1 \ge \tau!/2 } \;,
\end{align*}
$B$ can be proven to be $2\delta$-correct and have the property of invariance under the action of the symmetric group.  
Let $d\in[n]$ and consider the word 
\begin{align*}
w=1^{\tau\left(\frac{1}{n}+\frac{\eps}{2d}\right)}\dots d^{\tau\left(\frac{1}{n}+\frac{\eps}{2d}\right)} (d+1)^{\tau\left(\frac{1}{n}-\frac{\eps}{2(n-d)}\right)}\dots n^{\tau\left(\frac{1}{n}-\frac{\eps}{2(n-d)}\right)},
\end{align*}
we have two choices, either $B(w)=0$ or $B(w)=1$, we suppose the first and take $q$ the distribution defined by $q_1=\dots=q_d=\frac{1}{n}+\frac{\eps}{d}$ and $q_{d+1}=\dots=q_n=\frac{1}{n}-\frac{\eps}{n-d}$. It satisfies $\TV(q,U_n)=\eps$ thus $\mathds{P}_q(x_1\cdots x_\tau=w)\le \delta$ hence
\begin{align*}
{\tau \choose \tau_1\cdots \tau_n} \left(\frac{1}{n}+\frac{\eps}{n}\right)^{d\tau_1}\left(\frac{1}{n}-\frac{\eps}{n-d}\right)^{(n-d)\tau_{d+1}} \le \delta
\end{align*}
where $\tau_1=\dots=\tau_d= \tau\left( \frac{1}{n}+\frac{\eps}{2d}\right)$  and $ \tau_{d+1}=\dots=\tau_n=\tau\left(\frac{1}{n}-\frac{\eps}{2(n-d)}\right)$ thus 
\begin{align*}
\frac{\tau !}{(\tau_1 !)^d (\tau_{d+1}!)^{n-d}}\left(\frac{1}{n}+\frac{\eps}{n}\right)^{d\tau_1}\left(\frac{1}{n}-\frac{\eps}{n-d}\right)^{(n-d)\tau_{d+1}} \le \delta,
\end{align*} which implies by Stirling's approximation 
\begin{align*}
\frac{e(\tau/e)^\tau }{(e\tau_1(\tau_1/e)^{\tau_1})^d (e\tau_{d+1}(\tau_{d+1}/e)^{\tau_{d+1}})^{n-d}}e^{-d\tau_1\log\frac{\tau_1}{q_1}}\left(\frac{1}{n}+\frac{\eps}{n}\right)^{d\tau_1}\left(\frac{1}{n}-\frac{\eps}{n-d}\right)^{(n-d)\tau_{d+1}} \le \delta,
\end{align*}after simplifying we obtain
\begin{align*}
\frac{e}{(e\tau_1)^d (e\tau_{d+1})^{n-d}}e^{-d\tau_1\log\frac{\tau_1}{q_1}}e^{-(n-d)\tau_{d+1}\log\frac{\tau_{d+1}}{q_{d+1}}}\le \delta,
\end{align*}
or \begin{align*}
\frac{e}{(e\tau_1)^d (e\tau_{d+1})^{n-d}}e^{-\tau \KL(d/n+\eps/2 , d/n+\eps)}\le \delta,
\end{align*}
Finally
\begin{align*}
\tau \ge \frac{\log(1/\delta)+1-n}{\KL(d/n+\eps/2,d/n+\eps)}-\mathcal{O}\left(n\log\log(1/\delta)\eps^{-2}\right).
\end{align*}
If $B(w)=1$, we consider $q=U_n$ and we obtain with the same approach 
\begin{align*}
\tau \ge \frac{\log(1/\delta)+1-n}{\KL(d/n+\eps/2,d/n)}-\mathcal{O}\left(n\log\log(1/\delta)\eps^{-2}\right).
\end{align*}These lower bounds work for all $d\in [n]$  therefore 

\begin{align*}
\tau \ge\max_{d\in[n]} \min \left\{ \frac{\log(1/\delta)}{\KL(d/n+\eps/2,d/n)},\frac{\log(1/\delta)}{\KL(d/n+\eps/2,d/n+\eps)}\right\}-\mathcal{O}\left(\frac{n\log\log(1/\delta)}{\eps^{2}}\right).
\end{align*}
\end{proof}

This simple analysis relies on well-known arguments for testing Bernoulli variables $\cD_1=\mathcal{B}(p)$ and $\cD_2=\mathcal{B}(q)$. For example, \cite{anthony2009neural} and \cite{karp2007noisy} test whether $q=1/2+\eps$ or $q=1/2-\eps$ with an error probability $\delta$. \cite{anthony2009neural} show that we need roughly $\log(1/\delta)\eps^{-2}/4$ samples while \cite{karp2007noisy} prove that $2\log(1/\delta)\eps^{-2}$ samples are sufficient. If $\eps$ is not known to the tester, sequential algorithms prove to be essential. Indeed, \cite{karp2007noisy} manage to prove that $\Theta(\log\log(1/|q-1/2|)|q-1/2|^{-2})$ is necessary and sufficient to test $q>1/2$ vs $q<1/2$ with an error probability $1/3$. In what follows, we use sequential algorithms to expose the dependency on $\TV(\mathcal{D}',U_n)$ for the testing identity problem.

\subsection{Sequential setting}



If one wants to leverage the sequential setting to improve the optimal sample complexity of testing identity, it is natural to first investigate how it can be improved by removing the batch assumption of the previous lower-bound in Theorem~\ref{ber-id-ns}. We first state a new lower bound inspired by the work of \cite{garivier2019non}.  
\begin{lemma}\label{lowerbound-kl-id}
Let $\mathcal{D}=U_n$ be the uniform distribution. Let $T$ a stopping rule for testing identity: $\mathcal{D}'=U_n$ vs ${\TV(\mathcal{D}',U_n)>\eps}$ with an error probability $\delta$. Let  $\tau_1$ and $\tau_2$ the associated stopping times. We have 
\begin{itemize}
\item $\mathds{E}(\tau_1(T,U_n) \ge \frac{\log(1/3\delta)}{\min_{b\in[n]}\{\KL(b/n,b/n\pm\eps)\}}$ \text{ if } $\mathcal{D}'=U_n$.
\item $\mathds{E}(\tau_2(T,\mathcal{D}'))\ge \frac{\log(1/3\delta)} {\min\{\KL(|B_{opt}|/n\pm d,|B_{opt}|/n)\}}$    \text{ if }${d=\TV(\mathcal{D}',U_n)=|\mathcal{D}'(B_{opt})-|B_{opt}|/n|>\eps}$.
\end{itemize}
\end{lemma}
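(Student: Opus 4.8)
The plan is to derive both bounds from a single change-of-measure (transportation) inequality in the spirit of \cite{garivier2019non}. Write $\tau = \tau_1 \wedge \tau_2$ for the time at which $T$ actually halts, assumed finite almost surely (otherwise the expectations are infinite and there is nothing to prove), and let $\mathcal{F}_\tau$ collect the information available at that time. For any two distributions $P,Q$ on $[n]$ and any event $E \in \mathcal{F}_\tau$, combining Wald's identity applied to the stopped log-likelihood ratio $\sum_{t \le \tau} \log \frac{P(A_t)}{Q(A_t)}$ with the data-processing inequality for the binary relative entropy gives
\[
\mathds{E}_P[\tau]\,\KL(P,Q) \;\ge\; \KL\big(P(E),\,Q(E)\big),
\]
where on the right $\KL(a,b) = \KL(\mathcal{B}(a),\mathcal{B}(b))$. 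I would first record this inequality together with the elementary estimate $\KL(1-\delta,\delta) \ge \log\frac{1}{3\delta}$. Because $\tau \le \tau_1$ and $\tau \le \tau_2$ pointwise, any lower bound on $\mathds{E}_P[\tau]$ is at once a lower bound on $\mathds{E}_P[\tau_1]$ and $\mathds{E}_P[\tau_2]$; moreover the decision events $\{\tau_1 \le \tau_2\}$ and $\{\tau_2 \le \tau_1\}$ lie in $\mathcal{F}_\tau$, so the inequality applies to them.

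For the first bullet I take $P = U_n$ and choose for $Q$ the extremal far distribution $\mathcal{D}''$ attached to a set $B$ of size $b$: equal to $\frac{1}{n}+\frac{\eps}{b}$ on $B$ and $\frac{1}{n}-\frac{\eps}{n-b}$ on $B^c$, so that $\TV(\mathcal{D}'',U_n)=\eps$. With $E = \{\tau_1 \le \tau_2\}$, $\delta$-correctness gives $P(E) \ge 1-\delta$ (condition~1 under $U_n$) and $Q(E) \le \delta$ (condition~2 under the far $\mathcal{D}''$), whence $\mathds{E}_{U_n}[\tau_1] \ge \mathds{E}_{U_n}[\tau] \ge \frac{\log(1/3\delta)}{\KL(U_n,\mathcal{D}'')}$. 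The key computation is that, $\mathcal{D}''$ being constant on each cell of the partition $(B,B^c)$, the chain rule for KL collapses the divergence on $[n]$ to a binary one: $\KL(U_n,\mathcal{D}'') = \KL(b/n,\,b/n+\eps)$, and $\KL(b/n,\,b/n-\eps)$ for the opposite perturbation. Since $b$ and the sign are free, minimizing the denominator yields the announced $\min_{b\in[n]}$.

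The second bullet is the mirror image: I take $P = \mathcal{D}'$, the given far distribution, $Q = U_n$, and $E = \{\tau_2 \le \tau_1\}$, for which $\delta$-correctness gives $P(E)\ge 1-\delta$ (condition~2 under $\mathcal{D}'$) and $Q(E) = \mathds{P}_{U_n}(\tau_2 \le \tau_1) \le \delta$ (condition~1 under $U_n$). This yields $\mathds{E}_{\mathcal{D}'}[\tau_2] \ge \frac{\log(1/3\delta)}{\KL(\mathcal{D}',U_n)}$, and it remains to evaluate the single-sample divergence $\KL(\mathcal{D}',U_n)$ through the two-cell partition $(B_{opt},B_{opt}^c)$. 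For the distribution realizing the distance $d$ flatly on $B_{opt}$ this is exactly $\KL(|B_{opt}|/n \pm d,\, |B_{opt}|/n)$, the sign being that of $\mathcal{D}'(B_{opt}) - |B_{opt}|/n$; taking the $\min$ over $\pm$ records that the side need not be known in advance.

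I expect the main obstacle to be the rigorous justification of the transportation inequality in this two-stopping-time model: checking that $\tau = \tau_1 \wedge \tau_2$ is almost surely finite under both $P$ and $Q$, that the decision events are $\mathcal{F}_\tau$-measurable, and that Wald's identity legitimately replaces the random sample count by $\mathds{E}_P[\tau]$ on the left-hand side. The only other delicate point is the reduction from the full divergence on $[n]$ to the binary $\KL$; this is an \emph{equality} for the flat extremal distributions used here (via the chain rule over $(B,B^c)$), which is precisely what produces the clean binary-KL benchmarks tabulated in Table~\ref{tab:n=2_summary}.
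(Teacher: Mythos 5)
Your proposal is correct and takes essentially the same route as the paper: the paper's \myapp{proof-lowerbound-kl-id} establishes exactly your transportation inequality (data processing for $\KL$ plus Wald's lemma, with $\KL(1-\delta,\delta)\ge\log(1/3\delta)$), and the main-text proof then instantiates it with the same flat extremal distributions, collapsing the divergence on $[n]$ to a binary $\KL$ over the two-cell partition — your computation $\KL(U_n,\mathcal{D}'')=\KL(b/n,b/n+\eps)$ is the correct form of what the paper writes (with a typo) as $\KL(b/n,b/n+\eps/n)$. The only cosmetic difference is that you work with $\tau=\tau_1\wedge\tau_2$ and the decision events $\{\tau_1\le\tau_2\}$, $\{\tau_2\le\tau_1\}$, whereas the paper uses the events $\{\tau_i<\infty\}$; these coincide here because the rule's decision is persistent, and like the paper you prove the second bullet for the flat distribution attaining the distance $d$ (i.e., in the worst-case sense the paper's own proof adopts).
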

An average number of samples equivalent  to $2\frac{\lfloor n^2/4\rfloor}{n^2}\log(1/3\delta)\eps^{-2}  $ (by Lemma \ref{lemma-kl})is thus necessary when the tester can access sequentially to the samples, which is roughly 4 times less than the complexity  obtained in Theorem~\ref{ber-id-ns} for the batch setting. 
The proof, with a strong information-theoretic flavor, compares two situations: when the samples are from equal distributions and when they are from $\eps$-far distributions. Those samples cannot be distinguished until their size is large enough, as can be proved by combining properties of Kullback-Leibler's divergence and Wald's lemma. 
\begin{proof}
We apply the lower bounds of \mysec{proof-lowerbound-kl-id}. If $\mathcal{D}=U_n$, we set $\mathcal{D}^+_b$ the distribution whose first $b$ parts are equal to $1/n+\eps/b$ and the others are equal to $1/n-\eps/(n-b)$. 
\begin{align*}
\mathds{E}(\tau_1(\mathcal{D})) &\ge \frac{\log1/3\delta}{\min{\mathcal{D}'' \text{s.t.}\TV(\mathcal{D}'',\mathcal{D})>\eps } \KL(\mathcal{D} , \mathcal{D}'')}
\\&\ge \frac{\log1/3\delta}{\min_{b\in[n] } \KL(\mathcal{D} , \mathcal{D}^+_b)}
\\&\ge \frac{\log1/3\delta}{\min_{b\in[n] } \KL(b/n, b/n+\eps/n)}.
\end{align*}
Likewise we can prove for $\mathcal{D}^-_b$ the distribution whose first $b$ parts are equal to $1/n-\eps/b$ and the others are equal to $1/n+\eps/(n-b)$:
\begin{align*}
\mathds{E}(\tau_1(\mathcal{D})) &\ge \frac{\log1/3\delta}{\min_{\mathcal{D}'' \text{s.t.}\TV(\mathcal{D}'',\mathcal{D})>\eps } \KL(\mathcal{D} , \mathcal{D}'')}
\\&\ge \frac{\log1/3\delta}{\min_{b\in[n] } \KL(\mathcal{D} , \mathcal{D}^-_b)}
\\&\ge \frac{\log1/3\delta}{\min_{b\in[n] } \KL(b/n, b/n-\eps/n)}.
\end{align*}
Finally:
\begin{align*}
\mathds{E}(\tau_1(\mathcal{D})) \ge \frac{\log1/3\delta}{\min_{b\in[n] } \KL(b/n,b/n\pm\eps/n)}.
\end{align*}
Now, in order to prove a lower bound on $\tau_2$, we focus on distributions that are $\eps$-far from $U_n$ and have  the same length of $B_{opt}$.
\begin{align*}
\sup_{\mathcal{D}: \TV(\mathcal{D},U_n)=d>\eps, |B_{opt}(\mathcal{D})|=b}\mathds{E}(\tau_2(\mathcal{D}))&\ge\sup_{\mathcal{D}: \TV(\mathcal{D},U_n)=d>\eps, |B_{opt}(\mathcal{D})|=b} \frac{\log1/3\delta} {\KL(\mathcal{D} , U_n)}.
\\&\ge \frac{\log1/3\delta}{ \KL(b/n\pm\eps/n, b/n)}.
\end{align*}
\end{proof}

\begin{figure}[t]
\centering
\begin{minipage}[c]{.5\linewidth}
\includegraphics[width=\linewidth]{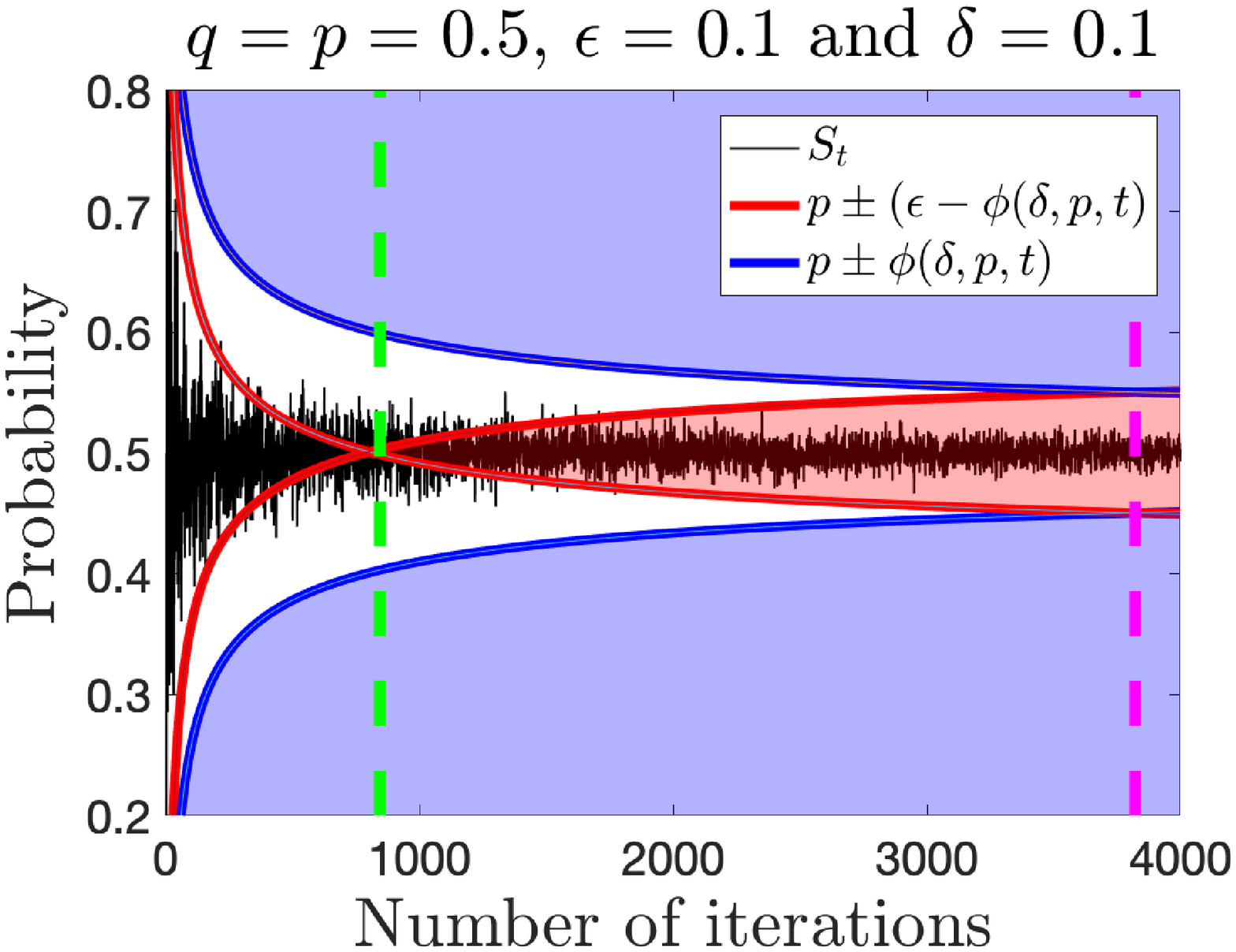}
   \end{minipage}
   \hspace*{-10pt}
   \begin{minipage}[c]{.5\linewidth}
\includegraphics[width=\linewidth]{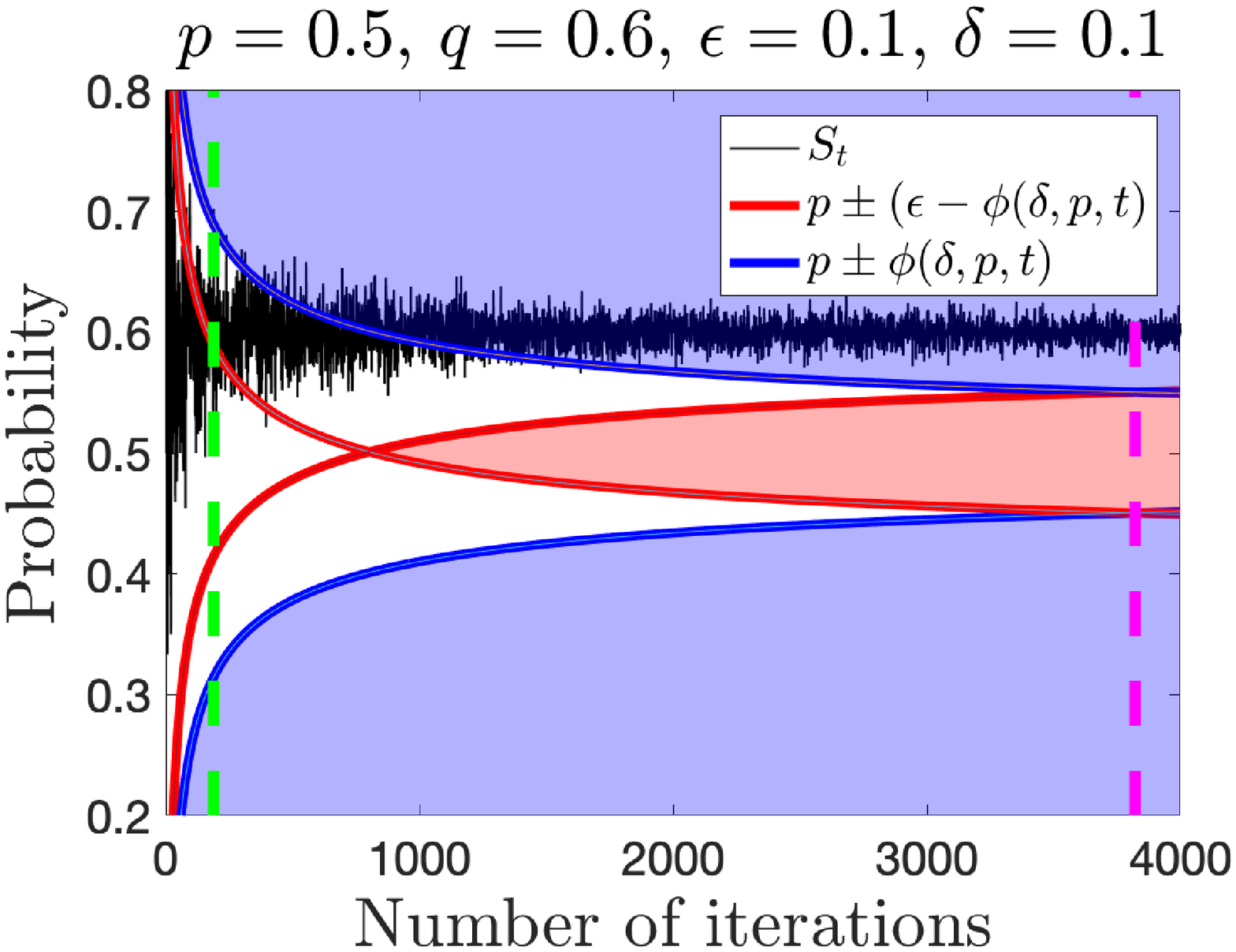}
   \end{minipage}
  \caption{Testing identity for Bernoulli ($n=2$). Left: $q=p=0.5$ and $\varepsilon = 0.1$. Right: $p=0.5$, $q=0.6$ and $\varepsilon=0.1$.
  The sequential tester stops as soon as $S_t$ enters the red region (for $H_1$) or blue region (for $H_2$) whereas the batch tester waits for the red and blue regions to cover the whole segment $[0,1]$. The green and magenta dashed lines represent respectively the stopping time of the sequential and batch algorithms. We note that, in both cases, the sequential tester stops long before the batch algorithm.}
     \label{fig:synthetic2}
\end{figure}
In the sequential testing, the tester chooses when to stop according to the previous observations $(A_1,\dots,A_t)$, making comparisons at each step $t$. The key explanation of the sequential speedup is that the tester can stop as soon as she is sure that she can accept one of the hypothesis $H_1$ or $H_2$. On the contrary, in the batch setting she had to sample enough observation to be simultaneously sure that either $H_1$ or $H_2$ hold. In this aim, at each time step,  after sampling a new observation $X_t$, she compares the updated empirical TV distance  $S_t=\max_{B\subset[\lfloor n/2 \rfloor ]} |\tilde{\mathcal{D}'}_t(B)-|B|/n|$  to specific thresholds  and sees if (a) $S_t$ is sufficiently far from $0$ to surely accept $H_2$, (b) $S_t$ is sufficiently close to $\eps$ to surely accept $H_1$, (c) she is unsure and needs further sample to take a sound decision. This test is formally described in \myalgo{alg} and its execution is illustrated in \myfig{synthetic2} for $n=2$. 

\begin{algorithm}[t]
\caption{Distinguish between $\mathcal{D}'=U_n$ and $\TV(\mathcal{D}',U_n)>\eps$ with high probability}
\label{algo:alg}
\begin{algorithmic}
\REQUIRE $A_1,\dots$ samples from $\mathcal{D}'$  
\ENSURE Accept if $\mathcal{D}'=U_n$ and Reject if $\TV(\mathcal{D}',U_n)>\eps$ with probability of error less than $\delta$
\STATE $t=1$, $W=1$
\WHILE{$W=1$}
\STATE $\tilde{\mathcal{D}}'_{t}=\left\{\left(\sum_{j=1}^t 1_{A_j= i}\right)/t\right\}_{i\in[n]}$
  \IF{$\exists B\subset[\lfloor n/2 \rfloor ]:  |\tilde{\mathcal{D}'}_t(B)-|B|/n|>\max\{\phi(\delta,|B|/n,t),\phi(\delta,1-|B|/n,t)\}$}
    \STATE $W=0$
    \RETURN 2
  \ELSIF{$\forall B\subset[\lfloor n/2 \rfloor ]:   (\tilde{\mathcal{D}'}_t(B)-|B|/n)^+<\eps-\phi(\delta,|B|/n+\eps,t) $ and $ (|B|/n-\tilde{\mathcal{D}'}_t(B))^+<\eps-\phi(\delta,|B|/n-\eps,t)$}
    \STATE $W=0$
    \RETURN 1
  \ELSE
    \STATE $t=t+1$
  \ENDIF
\ENDWHILE
\end{algorithmic}
\end{algorithm}
To control the sample complexity of such sequential algorithms, we need here  Chernoff-Hoeffding's  inequality~\eqref{eq:hoefdding} and the union bound:
\begin{align*}
&\mathds{P}\left(\exists B\subset[\lfloor n/2 \rfloor ], \exists t\ge 1 :  \frac{\sum_{i=1}^{t}1_{A_i\in B}}{t}>\mathcal{D}'(B)+\phi(\delta,\mathcal{D}'(B),t)\right) \\&\le\sum_{B\subset[\lfloor n/2 \rfloor ],t\ge1} e^{-t\KL(\mathcal{D}'(B)+\phi , \mathcal{D}'(B))}=\sum_{B\subset[\lfloor n/2 \rfloor ],t\ge1} \frac{\delta}{2^{n-1}t(t+1)}= \frac{\delta}{2},
\end{align*}
and\begin{align*}
&\mathds{P}\left(\exists B\subset[\lfloor n/2 \rfloor ], \exists t\ge 1 :  \frac{\sum_{i=1}^{t}1_{A_i\in B}}{t}<\mathcal{D}'(B)-\phi(\delta,1-\mathcal{D}'(B),t)\right) 
\\&\le \sum_{B\subset[\lfloor n/2 \rfloor ],t\ge1} e^{-t\KL(\mathcal{D}'(B)-\phi , \mathcal{D}'(B))}=\sum_{B\subset[\lfloor n/2 \rfloor ],t\ge1} \frac{\delta}{2^{n-1}t(t+1)}= \frac{\delta}{2}\;,
\end{align*}
where $\phi(\delta,p,t)$ is the real function\footnote{This function is well defined whenever $\log\left(\frac{2^{n-1}t(t+1)}{\delta}\right)<t\log(1/p)$ and can easily be approximated as a zero of a one-dimensional convex function} implicitly defined as the solution of the equation $\KL(p+\phi(\delta,p,t) , p)=\log\left(\frac{2^{n-1}t(t+1)}{\delta}\right)/t$. The function $\phi$ is a key ingredient in designing the stopping rules of \myalgo{alg} since it enables to directly bound its sample complexity in terms of the expression of the Kullback-Leibler’s divergence.
The stopping time of \myalgo{alg} is $\tau=\min\{\tau_1,\tau_2\}$ where $\tau_1$ and $\tau_2$ are defined as follows:
\begin{align*}
\tau_1=\inf\big\{t\ge 1  : \; &\forall B\subset[\lfloor n/2 \rfloor ]  \; (\tilde{\mathcal{D}}_t(B)-|B|/n)^+<\eps-\phi(\delta,|B|/n+\eps,t),
\\&(|B|/n-\tilde{\mathcal{D}}_t(B))^+<\eps-\phi(\delta,|B|/n-\eps,t)\big\}
\\\tau_2=\inf\big\{t\ge1: \; &\exists B\subset[\lfloor n/2 \rfloor ] \; | \tilde{\mathcal{D}}_t(B)-|B|/n|>\max\{\phi(\delta,|B|/n,t),\phi(\delta,1-|B|/n,t)\}\big\} \ .
\end{align*}
Note that the stopping time of the algorithm is random. Yet, we can show that this algorithm stops before the batch 
 one and give an upper bound on the expected stopping time $\tau$ (or expected sample complexity) 
 using the inequality  $\mathds{E}(\tau)\le N + \sum_{t\ge N}\mathds{P}(\tau\ge t)$, where $N$ is chosen so that $\mathds{P}(\tau\ge t)$ is (exponentially) small for $t\geq N$. 
 In the following theorem, we state an upper bound on the estimated sample complexity of this algorithm. 

\begin{theorem}
The \myalgo{alg} is $\delta$-correct and its stopping times can be bounded in expectation for $n<\log^{2/3}(1/\delta)$ as follows:
\begin{align*}
\mathds{E}(\tau_1(U_n))&\le 
\frac{\log(2^{n-1}/\delta)}{\min_{b\in[n]}\{\KL(b/n,b/n\pm\eps)\}} +\mathcal{O}\left(\frac{\log(2^{n-1}/\delta)^{2/3}}{\eps^2}\right) \text{, and}
\\\mathds{E}(\tau_2(\mathcal{D}'))&\le \frac{\log(2^{n-1}/\delta)}{\min\{\KL(|B_{opt}|/n\pm d,|B_{opt}|/n)\}}+\mathcal{O}\left(\frac{\log(2^{n-1}/\delta)^{2/3}}{d^2}\right) 
\end{align*}
\text{ if } $d=\TV(\mathcal{D}',U_n)=|\mathcal{D}'(B_{opt})-|B_{opt}|/n|>\eps\;$ .
\end{theorem}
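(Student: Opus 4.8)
For correctness I would condition on the ``good event'' $G$ that every empirical frequency obeys $\mathcal{D}'(B)-\phi(\delta,1-\mathcal{D}'(B),t)\le\tilde{\mathcal{D}'}_t(B)\le\mathcal{D}'(B)+\phi(\delta,\mathcal{D}'(B),t)$ simultaneously for all $B\subset[\lfloor n/2\rfloor]$ and all $t\ge 1$; the two time-uniform union bounds displayed just before the theorem give $\mathds{P}(G)\ge 1-\delta$ for every $\mathcal{D}'$. Under $H_1$ one has $\mathcal{D}'(B)=|B|/n$, so on $G$ the two radii coincide exactly with those appearing in the rejection test of \myalgo{alg}, which is therefore never triggered; hence $\mathds{P}(\tau_2\le\tau_1)\le\mathds{P}(G^c)\le\delta$. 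Under $H_2$ I would pick $B_{opt}$ with, say, $\mathcal{D}'(B_{opt})-|B_{opt}|/n=d>\eps$: on $G$ the lower deviation bound keeps $(\tilde{\mathcal{D}'}_t(B_{opt})-|B_{opt}|/n)^+$ at least $\eps-\phi(\delta,|B_{opt}|/n+\eps,t)$ for every $t$, so the acceptance test for $H_1$ never fires. This last step uses $d>\eps$ together with a comparison of the confidence radii $\phi(\delta,1-\mathcal{D}'(B_{opt}),t)$ and $\phi(\delta,|B_{opt}|/n+\eps,t)$, and gives $\{\tau_1\le\tau_2\}\subseteq G^c$, i.e.\ $\mathds{P}(\tau_1\le\tau_2)\le\delta$.

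For the expected sample complexity I would use the elementary inequality $\mathds{E}(\tau)\le N+\sum_{t\ge N}\mathds{P}(\tau>t)$ with $N$ chosen as the announced value. Consider $\tau_1$ under $U_n$. Since $\{\tau_1>t\}$ is contained in the event that the acceptance condition fails at time $t$, a union bound over the subsets $B\subset[\lfloor n/2\rfloor]$ followed by Chernoff--Hoeffding~\eqref{eq:hoefdding} yields $\mathds{P}(\tau_1>t)\le\sum_B\big(e^{-t\KL(|B|/n+\eps-\phi(\delta,|B|/n+\eps,t),\,|B|/n)}+\text{(symmetric term)}\big)$. The first-order value of $N$ is read off from the deterministic threshold alone: the acceptance window $\eps-\phi(\delta,|B|/n\pm\eps,t)$ becomes usable exactly when $\phi(\delta,|B|/n\pm\eps,t)<\eps$, i.e.\ when $t\,\KL(|B|/n\pm\eps,|B|/n)\gtrsim\log(2^{n-1}t(t+1)/\delta)$, which self-consistently gives the stated leading term $\log(2^{n-1}/\delta)/\min_b\KL(b/n,b/n\pm\eps)$ (the two orderings of $\KL$ agreeing to first order in $\eps$).

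The correction term is where the real work lies. Writing $\phi(\delta,p,t)$ as the root of $\KL(p+\phi,p)=\log(2^{n-1}t(t+1)/\delta)/t$ and Taylor-expanding both $\phi$ and $x\mapsto\KL(x,|B|/n)$ about the boundary point $|B|/n+\eps$, the exponent $t\,\KL(|B|/n+\eps-\phi_t,|B|/n)$ is well approximated by $(\tfrac{\eps}{\sigma}\sqrt t-\sqrt{g(t)})^2$, with $g(t)=\log(2^{n-1}t(t+1)/\delta)$ and $\sigma^2=2\tfrac{|B|}{n}(1-\tfrac{|B|}{n})$, which turns the tail sum into a discrete Gaussian-type integral. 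Making $\sum_{t\ge N}\mathds{P}(\tau_1>t)$ of lower order then requires a margin $\theta=\tfrac{\eps}{\sigma}\sqrt t-\sqrt{g(t)}$ large enough to defeat the subset count, i.e.\ $\theta^2$ of the order of the logarithm of the number of tested sets, which shifts $N$ above its leading value by $1/\eps^2$ times a sublinear power of $\log(1/\delta)$. The assumption $n<\log^{2/3}(1/\delta)$ is exactly what keeps this margin small enough for the total correction to be $\mathcal{O}(\log(2^{n-1}/\delta)^{2/3}/\eps^2)$. I expect the careful bookkeeping of this correction---tracking the cross-term $\sqrt{g(t)}\,\theta$, the temporal weights $t(t+1)$, and the subset count against the hypothesis on $n$---to be the main obstacle.

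The bound on $\mathds{E}(\tau_2)$ under $\TV(\mathcal{D}',U_n)=d>\eps$ is entirely parallel. Here $\{\tau_2>t\}$ forces $\tilde{\mathcal{D}'}_t(B_{opt})$ to remain within $\max\{\phi(\delta,|B_{opt}|/n,t),\phi(\delta,1-|B_{opt}|/n,t)\}$ of $|B_{opt}|/n$ in spite of the true gap $d$; Chernoff--Hoeffding~\eqref{eq:hoefdding} plus the same expansion produce the leading term $\log(2^{n-1}/\delta)/\min\{\KL(|B_{opt}|/n\pm d,|B_{opt}|/n)\}$ and an identical $\mathcal{O}(\log(2^{n-1}/\delta)^{2/3}/d^2)$ correction, with $d$ now playing the role previously played by $\eps$.
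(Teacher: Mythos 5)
Your overall skeleton coincides with the paper's: time-uniform Chernoff--Hoeffding bounds through the function $\phi$, then $\mathds{E}(\tau)\le N+\sum_{t\ge N}\mathds{P}(\tau\ge t)$ (Lemma~\ref{lemma-complexity}), and your $H_1$ correctness and identification of the leading term are fine. But two steps that carry the real content are asserted rather than proved, and the first one is genuinely problematic. Your $H_2$ correctness rests on the containment $\{\tau_1\le\tau_2\}\subseteq G^c$, which, writing $p=|B_{opt}|/n$ and $\cD'(B_{opt})=p+d$, requires $d-\phi(\delta,1-p-d,t)\ge \eps-\phi(\delta,p+\eps,t)$ at every $t$ at which the acceptance window is open; you call this ``a comparison of the confidence radii'' and never establish it. It is not innocuous: since $\phi(\delta,q,t)<1-q$ always, the radius $\phi(\delta,p+\eps,t)$ is capped at $1-p-\eps$, whereas $\phi(\delta,1-p-d,t)$ has no matching cap and can approach $p+d$ for small $t$; when $p+\eps>1/2$ (say $n=2$, $\eps$ close to $1/2$, $d$ slightly above $\eps$) the difference of radii can exceed $d-\eps$ while the acceptance thresholds of \myalgo{alg}, as written, are still positive, so the claimed containment fails. (The comparison does reduce to the one-line monotonicity of $q\mapsto q+\phi(\delta,q,t)$ if the acceptance threshold is taken to be the \emph{downward} radius $\phi(\delta,1-p-\eps,t)$, which is what the paper's proof actually manipulates.) The paper needs no such comparison at all: it bounds $\mathds{P}(\tau_1\le\tau_2)$ by applying the time-uniform Chernoff--Hoeffding bound directly to the crossing event $\{\exists t:\ \tilde{\mathcal{D}}'_t(B_{opt})-p<\eps-\phi(\cdot,t)\}$, using that the true mean is $\eps$-far together with the monotonicity of $\KL$, so that each term equals $\delta/(2t(t+1))$ by the very definition of $\phi$.

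The second gap is the correction term $\mathcal{O}\bigl(\log(2^{n-1}/\delta)^{2/3}/\eps^2\bigr)$, which is precisely the quantitative claim of the theorem and which you explicitly defer (``the careful bookkeeping \dots the main obstacle''). Your Gaussian-approximation plan is plausible in spirit, but it never produces the concrete device that makes the paper's computation go through: fix $\alpha\in(0,1)$; let $N$ be the first time at which all radii $\phi(\delta,b/n\pm\eps,\cdot)$ (and their complements) fall below $\alpha\eps$; invert the defining relation of $\phi$ via Lemma~\ref{lemma-log} to get $N\le \log(2^{n-1}/\delta)/\min_b\KL(q+\alpha\eps,q)+\mathcal{O}(\log\log)$; for $t\ge N$ replace the KL-form tail by the simple sub-Gaussian bound $2^{n/2+1}e^{-2t((1-\alpha)\eps)^2}$ so that the tail sums to a geometric series of size $\mathcal{O}\bigl(2^{n/2}e^{-2(N-1)((1-\alpha)\eps)^2}/((1-\alpha)^2\eps^2)\bigr)$; finally choose $\alpha=(1+\log(2^{n-1}/\delta)^{-1/3})^{-1}$ and convert $1/\KL(q+\alpha\eps,q)$ back to $1/\KL(q+\eps,q)$ at a cost controlled by Lemma~\ref{lemma-kl2}, the hypothesis $n<\log^{2/3}(1/\delta)$ being what keeps the subset count $2^{n/2}$ from overwhelming these bounds. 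Without carrying out some version of this $\alpha$-balancing (or your $\theta$-margin bookkeeping, made explicit against the $2^{n/2}$ union bound and the $t(t+1)$ weights), the stated second-order bound is not proved.
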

These upper bounds are tight in the sense that they match the asymptotic lower bounds of Lemma~\ref{lowerbound-kl-id} if $n \ll\log(1/\delta)$. We see here the many advantages of the sequential setting as shown in Figure \ref{fig:synthetic2}: (a) the sequential algorithm stops always 
before the non-sequential algorithm since after the batch complexity the region of decisions of the sequential algorithm intersect, (b) the estimated sample complexity is $4$ times less than the optimal complexity in the non sequential setting, (c) the sample complexity can be very small if the mass probability is concentrated on a small set i.e. $|B_{opt}|\ll n$ and (d)  the sample complexity in the sequential setting depends on the unknown distribution $\mathcal{D}'$ through the distance $\TV(\mathcal{D}',U_n)$. Note that this cannot be the case in the batch setting as the number of sample should be fixed beforehand. This attribute makes a considerable difference  when $\mathcal{D}'$ is very different from $U_n$. Nevertheless, the above lower bounds and upper bounds do not match exactly, the dependence on $n$ cannot be avoided if $n$ is of the order (or larger) of $\log(1/\delta)$. For this reason, we try in \mysec{test-unif-gen} to somehow truncate our algorithm in a way to get  the best sample complexity in every regime. In the previous results the choice of $\mathcal{D}=U_n$ is not crucial, we can easily generalize them by replacing $|B_{opt}|/n$ by $\mathcal{D}(B_{opt})$. 
\begin{proof}
We first prove the correctness of \myalgo{alg}, i.e., that it has an error probability less than $\delta$. Let us recall the useful concentration inequalities which can be simply proven using  Chernoff-Hoeffding's inequalities and union bounds. 
\begin{lemma}
If $A_1,\dots,A_t$ are i.i.d. random variables with the law $\mathcal{D}'$, we have the following inequalities:
\begin{align*}
&\mathds{P}\left(\exists B\subset[\lfloor n/2 \rfloor ], \exists t\ge 1 :  \frac{\sum_{i=1}^{t}1_{A_i\in B}}{t}>\mathcal{D}'(B)+\phi(\delta,\mathcal{D}'(B),t)\right)  \le\frac{\delta}{2},
\\&\mathds{P}\left(\exists B\subset[\lfloor n/2 \rfloor ], \exists t\ge 1 :  \frac{\sum_{i=1}^{t}1_{A_i\in B}}{t}<\mathcal{D}'(B)-\phi(\delta,1-\mathcal{D}'(B),t)\right) 
\le\frac{\delta}{2}\;,
\end{align*}
\end{lemma}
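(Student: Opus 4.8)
The plan is to reduce both time-uniform bounds to the fixed-time Chernoff--Hoeffding inequality~\eqref{eq:hoefdding}, applied once for each pair $(B,t)$, and then to close with a union bound whose terms are summable by the very design of $\phi$. Fix a subset $B$ and set $p=\mathcal{D}'(B)$. Since the $A_i$ are i.i.d.\ from $\mathcal{D}'$, the indicators $X_i=1_{A_i\in B}$ are i.i.d.\ $\mathcal{B}(p)$, so~\eqref{eq:hoefdding} applies verbatim with $q=p$.

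For the first (upper-tail) inequality, the key observation is that $\phi(\delta,p,t)$ was \emph{defined} precisely so that $\KL(p+\phi(\delta,p,t),p)=\log(2^{n-1}t(t+1)/\delta)/t$. Hence the first inequality in~\eqref{eq:hoefdding} gives, for each fixed $t$,
\begin{equation*}
\mathds{P}\!\left(\tfrac1t\textstyle\sum_{i=1}^t X_i>p+\phi(\delta,p,t)\right)\le e^{-t\KL(p+\phi(\delta,p,t),p)}=\frac{\delta}{2^{n-1}t(t+1)}.
\end{equation*}
I would then union-bound over $t\ge1$, using the telescoping identity $\sum_{t\ge1}1/(t(t+1))=1$, and over the collection of subsets $B$ entering the TV characterisation; the normalisation $2^{n-1}$ in the definition of $\phi$ is calibrated so that the resulting double sum is at most $\delta/2$, which is the first claim.

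For the second (lower-tail) inequality the only extra ingredient is the Bernoulli symmetry $\KL(a,b)=\KL(1-a,1-b)$, and this is exactly why $\phi$ is evaluated at $1-p$ rather than $p$. Writing $\psi=\phi(\delta,1-p,t)$, its defining equation $\KL((1-p)+\psi,1-p)=\log(2^{n-1}t(t+1)/\delta)/t$ turns, under the symmetry, into $\KL(p-\psi,p)=\log(2^{n-1}t(t+1)/\delta)/t$. Feeding this into the second inequality of~\eqref{eq:hoefdding} yields $\mathds{P}(\tfrac1t\sum_{i=1}^t X_i<p-\psi)\le \delta/(2^{n-1}t(t+1))$ for each fixed $t$, and the identical union bound produces the second claim.

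The one step that genuinely needs care --- and the main obstacle to making the displays literally valid --- is the well-definedness of $\phi$ for small $t$. The equation defining $\phi(\delta,p,t)$ has a solution $\phi\in(0,1-p)$ only when $\log(2^{n-1}t(t+1)/\delta)/t\le\sup_{x\in(p,1)}\KL(x,p)=\log(1/p)$, which is exactly the condition in the footnote. For the finitely many small $t$ violating it, the ``threshold'' $p+\phi$ would exceed $1$, so the deviation event $\{\tfrac1t\sum_{i} X_i>p+\phi\}$ is vacuous and contributes $0$ to the union bound (and likewise for the lower tail). I would therefore open the proof by adopting the convention that such terms denote empty events, after which every sum above is literally finite and the two inequalities follow as stated.
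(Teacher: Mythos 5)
Your proof is correct and follows essentially the same route as the paper's: a per-$(B,t)$ application of the Chernoff--Hoeffding bound~\eqref{eq:hoefdding}, the defining equation of $\phi$ turning each term into $\delta/(2^{n-1}t(t+1))$, and a union bound over $t\ge 1$ and $B\subset[\lfloor n/2\rfloor]$. You additionally make explicit two details the paper leaves implicit --- the Bernoulli symmetry $\KL(a,b)=\KL(1-a,1-b)$ justifying the lower-tail step with $\phi(\delta,1-p,t)$, and the vacuousness of the event for the small $t$ where $\phi$ is undefined --- both of which are correct.
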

Using this lemma we can conclude:
\begin{itemize}
\item If $\mathcal{D}'=U_n$, the probability of error is given by \begin{align*}
\mathds{P}\left(\tau_2\le \tau_1\right)&\le  \mathds{P}\left(\exists t \ge 1 : \max_{B\subset[\lfloor n/2 \rfloor ]} |\tilde{\mathcal{D}'}_t(B)-|B|/n|>\max\{\phi(\delta,|B|/n,t),\phi(\delta,1-|B|/n,t)\}\right) 
\\&\le  \mathds{P}\left(\exists t \ge 1, \exists B\subset[\lfloor n/2 \rfloor ]  :  \tilde{\mathcal{D}'}_t(B)-|B|/n>\phi(\delta,|B|/n,t)\right) \\&+\mathds{P}\left(\exists t \ge 1, \exists B\subset[\lfloor n/2 \rfloor ]  :  \tilde{\mathcal{D}'}_t(B)-|B|/n<-\phi(\delta,1-|B|/n,t)\right) 
			\\&\le \delta\;.
\end{align*}
\item If $\mathcal{D}'(B_{opt})-|B_{opt}|/n>\eps$, the probability of error is given by \begin{align*}
\mathds{P}\left(\tau_1\le \tau_2\right)&\le  \mathds{P}\left(\exists t \ge 1 : \tilde{\mathcal{D}}_t'(B_{opt})-|B_{opt}|/n<\eps-\phi(\delta,1-|B_{opt}|-\eps,t)   \right) 
\\&\le \sum_{t\ge 1} e^{-t\KL(1-|B_{opt}|/n-\eps+\phi(\delta,1-|B_{opt}|/n-\eps,t),1-\mathcal{D}'(B_{opt}))}
			\\&\le \sum_{t\ge 1} e^{-t\KL(|B_{opt}|/n+\eps-\phi(\delta,1-|B_{opt}|/n-\eps,t),|B_{opt}|/n+		\eps)}  \text{ as } x\mapsto \KL(p,p-x) \text{ is increasing on } (0,p)
			\\&\le  \sum_{t\ge 1}  \frac{\delta}{2t(t+1)}
			\\&\le \delta\;.
\end{align*}
\item If $\mathcal{D}'(B_{opt})-|B_{opt}|/n<-\eps$, the probability of error is given by \begin{align*}
\mathds{P}\left(\tau_1\le \tau_2\right)&\le  \mathds{P}\left(\exists t \ge 1 :\tilde{\mathcal{D}}_t'(B_{opt})-|B_{opt}|/n >-\eps+\phi(\delta,1-|B_{opt}|/n-\eps,t)   \right) 
\\&\le \sum_{t\ge 1} e^{-t\KL(|B_{opt}|/n-\eps+\phi(\delta,|B_{opt}|/n-\eps,t),\mathcal{D}'(B_{opt}))}
			\\&\le \sum_{t\ge 1} e^{-t\KL(|B_{opt}|/n-\eps+\phi(\delta,|B_{opt}|/n-\eps,t),|B_{opt}|/n-		\eps)} \text{ as } x \mapsto \KL(p,x) \text{ is decreasing on } (0,p)
			\\&\le  \sum_{t\ge 1}  \frac{\delta}{2t(t+1)}
			\\&\le \delta\;.
\end{align*}
\end{itemize}
This conclude the proof of the correctness of \myalgo{alg}. 

Let us prove the upper bounds on the expected stopping times of \myalgo{alg}. We first prove the asymptotic bounds, then provide the proof for the non-asymptotic bounds. 
The proofs rely on the following well-known lemma:
\begin{lemma}\label{lemma-complexity} $T$ a random variable taking values in $\mathds{N}$, we have for all $N\in \mathds{N}^*$ 
\begin{align*}
\mathds{E}(T)\le N+ \sum_{t\ge N}\mathds{P}(T\ge t)\;.
\end{align*}
\end{lemma}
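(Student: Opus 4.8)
The plan is to derive the bound from the standard tail-sum representation of the expectation of a non-negative integer-valued random variable, followed by an elementary truncation of the resulting series. First I would recall that for any $T$ taking values in $\mathds{N}$ one has the identity $\mathds{E}(T)=\sum_{t\ge 1}\mathds{P}(T\ge t)$. This follows by writing $T=\sum_{t\ge 1}\mathds{1}_{T\ge t}$ pointwise and taking expectations, the interchange of the (infinite) sum and the expectation being justified by monotone convergence since every summand $\mathds{1}_{T\ge t}$ is non-negative.

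Next I would split the series at the index $N$, writing $\mathds{E}(T)=\sum_{t=1}^{N-1}\mathds{P}(T\ge t)+\sum_{t\ge N}\mathds{P}(T\ge t)$. Since each term satisfies $\mathds{P}(T\ge t)\le 1$, the initial block is bounded above by $N-1\le N$. Combining the two pieces yields $\mathds{E}(T)\le N+\sum_{t\ge N}\mathds{P}(T\ge t)$, which is exactly the claimed inequality; note it holds trivially even for $N=1$, where the initial block is empty.

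There is no genuine obstacle here: the lemma is elementary, and its only non-trivial ingredient is the tail-sum formula for the expectation, which is entirely standard. The point of isolating it as a lemma is operational rather than mathematical: in the bounds on $\mathds{E}(\tau_1)$ and $\mathds{E}(\tau_2)$ one chooses $N$ to match the leading-order term (the ratio $\log(2^{n-1}/\delta)$ over a $\KL$ divergence) and then controls the tail $\sum_{t\ge N}\mathds{P}(\tau\ge t)$ using the exponentially small probabilities furnished by the Chernoff--Hoeffding estimates established earlier, so that the tail contributes only the lower-order $\mathcal{O}(\eps^{-2})$ (respectively $\mathcal{O}(d^{-2})$) correction. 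The lemma thus cleanly separates the dominant term from the remainder of the sample-complexity estimate.
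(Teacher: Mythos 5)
Your proof is correct: the tail-sum identity $\mathds{E}(T)=\sum_{t\ge 1}\mathds{P}(T\ge t)$ (justified by writing $T=\sum_{t\ge 1}\mathds{1}_{T\ge t}$ and applying monotone convergence), followed by truncation at $N$ and bounding each of the first $N-1$ probabilities by $1$, yields exactly the claimed inequality. The paper states this lemma as well-known and provides no proof of its own, so your argument is precisely the standard one that was intended, and it matches how the lemma is then used to split the leading term from the exponentially small tail in the sample-complexity bounds.
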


Recall that $\phi$ is defined by the relation $\KL(p+\phi(\delta,p,t),p)=\log\left(\frac{2^{n-1}t(t+1)}{\delta}\right)/t$.
Pinsker's inequality(\cite{reid2009generalised} implies $
0<\phi(\delta,p,t)\le \sqrt{\frac{\log\left(\frac{2^{n-1}t(t+1)}{\delta}\right)}{2t}},
$ hence $\lim_{t\rightarrow\infty} \phi(\delta,p,t)=0$ and we have the equivalent \[\phi(\delta,p,t)\underset{t\rightarrow \infty}{\sim}\sqrt{2p(1-p)\frac{\log\left(\frac{2^{n-1}t(t+1)}{\delta}\right)}{t}}\;.\] Fix a parameter $0<\alpha<1$, and let $N$ the minimum positive integer such that for all integers $t\ge N$, $\max_{b\in[n/2]}\{\phi(\delta,b/n-\eps,t),\phi(\delta,1-b/n-\eps,t)\}\le \alpha \eps.$ The existence of $N$ is guaranteed since $\lim_{t\rightarrow \infty}\phi(\delta,p,t)=0$. We have 
\begin{align*}
\max_{b\in[n/2]}\big\{\phi(\delta,b/n-\eps,N),\phi(\delta,1-b/n-\eps,N)\big\}\le \alpha \eps,
\\\max_{b\in[n/2]}\big\{\phi(\delta,b/n-\eps,N-1),\phi(\delta,1-b/n-\eps,N-1)\big\}> \alpha \eps.
\end{align*}
Hence, 
\begin{align*}
\min_{b\in[n/2]}\big\{\KL(b/n-\eps+\alpha\eps,b/n-\eps),\KL(1-b/n-\eps+\alpha\eps,1-b/n-\eps)\big\} \le \frac{\log(\frac{2^{n-1}(N-1)N}{\delta})}{N-1}.
\end{align*}
Thus $\lim_{\delta\rightarrow 0}N =+\infty$ and from Lemma~\ref{lemma-log} we can deduce 
\begin{align*}
N\le \frac{\log(2^{n-1}/\delta)+4\log\left(\frac{\log(2^{n-1}/\delta)}{\min_{b\in[n/2]}\{\KL(b/n-\eps+\alpha\eps,b/n-\eps),\KL(1-b/n-\eps+\alpha\eps,1-b/n-\eps)\}}\right)}{\min_{b\in[n/2]}\{\KL(b/n-\eps+\alpha\eps,b/n-\eps),\KL(1-b/n-\eps+\alpha\eps,1-b/n-\eps)\}}+1\;.
\end{align*}
Finally, 
\begin{align*}
\limsup_{\delta\rightarrow 0} \frac{N}{\log(1/\delta)}\le \frac{1}{\min_{b\in[n/2]}\big\{\KL(b/n-\eps+\alpha\eps,b/n-\eps),\KL(1-b/n-\eps+\alpha\eps,1-b/n-\eps)\big\}}.
\end{align*}Likewise, for $q=\mathcal{D}'(B_{opt})$ and $p=|B_{opt}|$  such that $|q-p|>\eps$, we can define $N_q$ the minimum positive integer such that for all $t\ge N_q$: 
\begin{align*}
\max\big\{\phi(\delta,p,t),\phi(\delta,1-p,t)\big\}\le \alpha|q-p|\;.
\end{align*}With the same analysis before, we can prove that 
\begin{align*}
N_q\le \frac{\log(2^{n-1}/\delta)+4\log\left(\frac{\log(2^{n_1}/\delta)}{\min\{\KL(p+\alpha|q-p\ , p),\KL(1-p+\alpha|q-p|,1-p)\}}\right)}{\min\{\KL(p+\alpha|q-p|,p),\KL(1-p+\alpha|q-p|,1-p)\}}+1\;.
\end{align*}
Finally, \begin{align*}
\limsup_{\delta\rightarrow 0} \frac{N_q}{\log(1/\delta)}\le \frac{1}{\min\{\KL(p+\alpha|q-p|,p),\KL(1-p+\alpha|q-p|,1-p)\}}\;.
\end{align*}
Then we use Lemma~\ref{lemma-complexity} and make a case study on $\mathcal{D}'$:
\begin{itemize}
\item If $\mathcal{D}'=U_n$, the estimated stopping time can be bound as \begin{align*}
\mathds{E}(\tau_1(U_n))&\le  N+ \sum_{t\ge N}\mathds{P}(\tau_1(U_n)\ge t)
          \\&\le N+ \sum_{s\ge N-1}\mathds{P}(\exists B\subset[\lfloor n/2 \rfloor ]: \Tilde{\mathcal{D}}'_t(B) >|B|/n+\eps-\phi \text{ or } \Tilde{\mathcal{D}}'_t(B)<|B|/n-\eps+\phi)
          \\&\le N+ \sum_{B\subset[\lfloor n/2 \rfloor ], s\ge N-1}\mathds{P}(\Tilde{\mathcal{D}}'_t(B) >|B|/n+\eps-\alpha\eps \text{ or } \Tilde{\mathcal{D}}'_t(B)<|B|/n-\eps+\alpha\eps ) \text{ (by definition of } N)
          \\&\le N+ \sum_{B\subset[\lfloor n/2 \rfloor ],s\ge N-1}\mathds{P}(|\Tilde{\mathcal{D}}'_t(B)- |B|/n|>(1-\alpha)\eps )
          \\&\le N+ \sum_{B\subset[\lfloor n/2 \rfloor ],s\ge N-1}2e^{-2s((1-\alpha)\eps)^2} \text{ (Chernoff-Hoeffding's inequality)} 
                   \\&\le N+ \frac{2^{n/2+1}e^{-2(N-1)((1-\alpha)\eps)^2}}{1-e^{-2((1-\alpha)\eps)^2}}
                   \\&\le N+\frac{2^{n/2+2}e^{-2(N-1)((1-\alpha)\eps)^2}}{(1-\alpha)^2\eps^2}\;,
          \end{align*} where we use the inequality $1-e^{-x} \ge x/2$ for $0<x<1$ in the last line.
\item If $q:=\mathcal{D}'(B_{opt})>|B_{opt}|/n+\eps=:p+\eps$, the estimated stopping time can be bound as \begin{align*}\mathds{E}(\tau_2)&\le N_q+ \sum_{t\ge N}\mathds{P}(\tau_2(\mathcal{D}')\ge t)
          \\&\le N_q+ \sum_{s\ge N-1}\mathds{P}(|\tilde{\mathcal{D}}'_s(B_{opt})-p|\le \max\{\phi(\delta,p,t),\phi(\delta,1-p,t)\})
          \\&\le N_q+ \sum_{s\ge N-1}\mathds{P}(\tilde{\mathcal{D}}'_s(B_{opt}) \le p+\max\{\phi(\delta,p,t),\phi(\delta,1-p,t)\})
          \\&\le N_q+ \sum_{s\ge N-1}\mathds{P}(\tilde{\mathcal{D}}'_s(B_{opt}) \le p+\alpha(q-p))  \text{ (by definition of } N_q)
          \\&\le N_q+ \sum_{s\ge N-1}\mathds{P}(\tilde{\mathcal{D}}'_s(B_{opt}) \le q-(1-\alpha)(q-p))
          \\&\le N_q+ \sum_{s\ge N-1}e^{-2s((1-\alpha)(q-p))^2} \text{ (Chernoff-Hoeffding's inequality)} 
                   \\&\le N_q+ \frac{e^{-2(N-1)((1-\alpha)(q-p))^2}}{1-e^{-2((1-\alpha)(q-p))^2}}
                   \\&\le N_q+\frac{1}{(1-\alpha)^2(q-p)^2}\;.
          \end{align*} 
\item If $q:=\mathcal{D}'(B_{opt})<|B_{opt}|/n-\eps=:p-\eps$, the estimated stopping time can be bound as \begin{align*}\mathds{E}(\tau_2(\mathcal{D}'))
&\le N_q+ \sum_{t\ge N}\mathds{P}(\tau_2(\mathcal{D}')\ge t)
          \\&\le N_q+ \sum_{s\ge N-1}\mathds{P}(|\tilde{\mathcal{D}}'_s(B_{opt})-p|\le \max\{\phi(\delta,p,t),\phi(\delta,1-p,t)\})
          \\&\le N_q+ \sum_{s\ge N-1}\mathds{P}(\tilde{\mathcal{D}}'_s(B_{opt}) \ge p-\max\{\phi(\delta,p,t),\phi(\delta,1-p,t)\})
          \\&\le N_q+ \sum_{s\ge N-1}\mathds{P}(\tilde{\mathcal{D}}'_s(B_{opt}) \ge p-\alpha(p-q))  \text{ (by definition of } N_q)
          \\&\le N_q+ \sum_{s\ge N-1}\mathds{P}(\tilde{\mathcal{D}}'_s(B_{opt}) \ge q+(1-\alpha)(p-q))
          \\&\le N_q+ \sum_{s\ge N-1}e^{-2s((1-\alpha)(q-p))^2} \text{ (Chernoff-Hoeffding's inequality)} 
                   \\&\le N_q+ \frac{e^{-2(N-1)((1-\alpha)(q-p))^2}}{1-e^{-2((1-\alpha)(q-p))^2}}
                   \\&\le N_q+\frac{1}{(1-\alpha)^2(q-p)^2}\;.
          \end{align*} 
\end{itemize}
Dividing by $\log(1/\delta)$, taking the limits $\delta\rightarrow 0$ then $\alpha\rightarrow 1$ permit to deduce the asymptotic bounds.
By choosing $\alpha=(1+\log(2^{n-1}/\delta)^{-1/3})^{-1}$, we conclude for $n<\log^{2/3}(1/\delta)$:
\begin{align*}
\mathds{E}(\tau_1(U_n))&\le 
\frac{\log(2^{n-1}/\delta)}{\min_{b\in[n]}\{\KL(b/n,b/n\pm\eps)\}} +\mathcal{O}\left(\frac{\log(2^{n-1}/\delta)^{2/3}}{\eps^2}\right) \text{, and}
\\\mathds{E}(\tau_2(\mathcal{D}'))&\le \frac{\log(2^{n-1}/\delta)}{\min\{\KL(|B_{opt}|/n\pm d,|B_{opt}|/n)\}}+\mathcal{O}\left(\frac{\log(2^{n-1}/\delta)^{2/3}}{d^2}\right) 
\end{align*}
\text{ if } $d=\TV(\mathcal{D}',U_n)=|\mathcal{D}'(B_{opt})-|B_{opt}|/n|>\eps\;$. 
\end{proof}

\section{Testing closeness for small $n$\label{sec:bernoulli}}

In this section, $n\ge2$ is still small and we consider this time  two unknown distributions $\mathcal{D}_1$ and $\mathcal{D}_2$ on $[n]$.  We are testing two hypothesis $H_1$: $\mathcal{D}_1=\mathcal{D}_2$ and $H_2$: $\TV(\mathcal{D}_1,\mathcal{D}_2)>\eps$. Similar to the previous section, we are interested in precisely comparing the sample complexity of testing closeness in the sequential versus the batch setting. 
In order to find the optimal constant, we first need to obtain a sharp lower bound in the batch setting, which is done directly by using Stirling's approximation. We then turn to the sequential case.

\subsection{Batch setting}\label{sec:batch}


In the batch setting, the number of steps $\tau $ is fixed before the test. The tester samples $A_1,\dots,A_\tau \sim \mathcal{D}_1$ and $B_1,\dots,B_\tau \sim \mathcal{D}_2$ then decides according to the comparison between the empirical TV distance  $\TV(\tilde{\mathcal{D}_1}_\tau,\tilde{\mathcal{D}_2}_\tau)$ and $\eps/2$ where $\tilde{\mathcal{D}_1}_{\tau}=\left\{\left(\sum_{j=1}^{\tau} 1_{A_j= i}\right)/\tau\right\}_{i\in[n]}$ and $\tilde{\mathcal{D}_2}_{\tau}=\left\{\left(\sum_{j=1}^{\tau} 1_{B_j= i}\right)/\tau\right\}_{i\in[n]}$ are the empirical distributions. If  $\TV(\tilde{\mathcal{D}_1}_\tau,\tilde{\mathcal{D}_2}_\tau)\le \eps/2$,  she accepts $H_1$  and rejects it otherwise. In order to control the number of steps $\tau$ so that the error of this algorithm does not exceed $\delta$, McDiarmid's inequality (\cite{habib2013probabilistic}) writes for $\tau=\frac{4\log(2^{\lfloor n/2 \rfloor }/\delta)}{\eps^2} $:
\begin{align}\label{eq:M}\tag{M}
\mathds{P}\Bigg(\exists B\subset [\lfloor n/2\rfloor] :\left|\tilde{\cD}_{1,\tau}(B)-\cD_1(B)-\tilde{\cD}_{2,\tau}(B)+\cD_2(B)\right|>\frac{\eps}{2}\Bigg)\le \sum_{B\subset [\lfloor n/2\rfloor]}e^{-\tau\eps^2/4}\le \delta\;.
\end{align}
Using 
the concentration inequality~\eqref{eq:M} for $\cD_1=\cD_2$ (to control the type I error) 
 and  for $\mathcal{D}_1\neq \cD_2$ (to control  the type II error)
we prove that this test is $\delta$-correct.
We show in the following theorem that this number of steps $\tau$ is necessary. 
\begin{proposition}\label{ber-clos-ns}
In the batch setting, the algorithm consisting of accepting $H_1$ when $\TV(\tilde{\mathcal{D}_1}_\tau,\tilde{\mathcal{D}_2}_\tau)\le \eps/2$ and rejecting it otherwise is $\delta$-correct for $\tau=\frac{4\log(2^{\lfloor n/2\rfloor}/\delta)}{\eps^2} $.

Moreover, any $\delta$-correct algorithm testing closeness requires at least $\tau$ samples, where
\begin{align*}
\tau\ge \min\bigg\{\frac{\log(1/2\delta)}{2\KL(1/2-\eps/4,1/2-\eps/2)},\frac{\log(1/2\delta)}{2\KL(1/2+\eps/4,1/2)}\bigg\}-\mathcal{O}\left(\frac{\log\log(1/\delta)}{\eps^2}\right) \;.
\end{align*}

\end{proposition}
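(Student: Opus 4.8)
The plan splits into the (easy) correctness half and the (substantive) lower-bound half.

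\textbf{Correctness.} For the claim that the plug-in test is $\delta$-correct at $\tau=4\log(2^{\lfloor n/2\rfloor}/\delta)/\eps^2$, I would run exactly the argument used for batch identity, with McDiarmid's inequality~\eqref{eq:M} replacing Chernoff--Hoeffding. Writing the variational form $\TV(\cD_1,\cD_2)=\max_{B\subset[\lfloor n/2\rfloor]}|\cD_1(B)-\cD_2(B)|$, inequality~\eqref{eq:M} says that, with probability at least $1-\delta$, $|\tilde{\cD}_{1,\tau}(B)-\cD_1(B)-\tilde{\cD}_{2,\tau}(B)+\cD_2(B)|\le \eps/2$ for every $B$ simultaneously. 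If $\cD_1=\cD_2$ the true masses cancel and $\TV(\tilde{\cD}_{1,\tau},\tilde{\cD}_{2,\tau})\le\eps/2$, so $H_1$ is accepted; if $\TV(\cD_1,\cD_2)>\eps$, plugging in the maximizing set $B$ gives $\TV(\tilde{\cD}_{1,\tau},\tilde{\cD}_{2,\tau})>\eps/2$ and $H_1$ is rejected. Hence both error types are bounded by $\delta$.

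\textbf{Lower bound: reduction.} Here I would follow the skeleton of Theorem~\ref{ber-id-ns}. First, the same symmetrization --- average a $\delta$-correct batch test over all permutations of each of the two length-$\tau$ sample streams and take a majority vote --- produces a $2\delta$-correct test $B$ whose output depends only on the two empirical histograms. Because establishing a lower bound only requires a hard family, I restrict to distributions supported on $\{1,2\}$; this is why the final bound carries no dependence on $n$.

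\textbf{Lower bound: the boundary pair.} The crux is to evaluate $B$ on a single boundary pair of words $(w_1,w_2)$ lying exactly on the decision frontier: let $w_1$ contain a fraction $1/2+\eps/4$ of symbol $1$ and $w_2$ a fraction $1/2-\eps/4$, so that their empirical TV distance equals $\eps/2$. The symmetric test returns $1$ or $2$ on $(w_1,w_2)$. If it returns $1$, I run it on the $\eps$-far pair $\cD_1=\mathcal{B}(1/2+\eps/2)$, $\cD_2=\mathcal{B}(1/2-\eps/2)$, where answering $1$ is a type-II error; since $B$ is constant on the histogram class of $(w_1,w_2)$, that class is contained in the error event and so has probability at most $2\delta$. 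If it returns $2$, I run it on the equal pair $\cD_1=\cD_2=\mathcal{B}(1/2)$, where answering $2$ is a type-I error, again forcing that class to have probability at most $2\delta$. In both cases the class probability factorizes over the two independent streams into a product of two binomial point masses, and Stirling's approximation (as in Theorem~\ref{ber-id-ns}) converts each mass into $e^{-\tau\KL(\cdot,\cdot)}$ up to a $\mathrm{poly}(\tau)$ prefactor. Using $\KL(p,q)=\KL(1-p,1-q)$ the two factors coincide, yielding $e^{-2\tau\KL(1/2-\eps/4,1/2-\eps/2)}\le 2\delta$ in the first case and $e^{-2\tau\KL(1/2+\eps/4,1/2)}\le 2\delta$ in the second. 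Taking logarithms gives the two candidate bounds, and since exactly one case occurs, $\tau$ must exceed their minimum.

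\textbf{Main obstacle.} The difficulty is purely in the bookkeeping rather than the ideas: I must check that symmetrizing two independent streams still only doubles the error, round $\tau(1/2\pm\eps/4)$ to integers without disturbing the leading term, and track the $\mathrm{poly}(\tau)$ Stirling prefactor precisely. Since $\tau\asymp\log(1/\delta)/\eps^2$, that prefactor contributes a $\log\tau=\mathcal{O}(\log\log(1/\delta))$ term in the exponent, which after dividing by $\KL\asymp\eps^2$ becomes exactly the stated $\mathcal{O}(\log\log(1/\delta)/\eps^2)$ slack.
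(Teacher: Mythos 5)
Your proposal is correct and follows essentially the same route as the paper's own proof: McDiarmid's inequality~\eqref{eq:M} applied to the variational form of $\TV$ for correctness, then permutation-symmetrization to a histogram-based $2\delta$-correct test on $\{1,2\}$-supported distributions, evaluation on the boundary pair with empirical fractions $1/2\pm\eps/4$, the same two alternative pairs ($\mathcal{B}(1/2\pm\eps/2)$ versus $\mathcal{B}(1/2)\otimes\mathcal{B}(1/2)$ depending on the test's answer), and Stirling's approximation turning the binomial point masses into $e^{-2\tau\KL(\cdot,\cdot)}$ with the $\mathrm{poly}(\tau)$ prefactor accounting for the $\mathcal{O}(\log\log(1/\delta)/\eps^2)$ slack. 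The only differences are cosmetic (which stream carries the larger fraction of ones), and the details you defer as bookkeeping are exactly the ones the paper handles via Lemma~\ref{lemma-kl} and Lemma~\ref{lemma-log}.
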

For this proof, we show that every $\delta$-correct tester can be transformed into a test which depends only on the numbers of $1's, 2's,\dots,n's$ occurred on $\{A_1,\dots,A_\tau\}$ and $\{B_1,\dots,B_\tau\}$. We then consider the distributions $\mathcal{D}_{1,2}=\{1/2,1/2,0,\dots,0\}$ or $\mathcal{D}_{1,2}=\{1/2\pm \eps/2,1/2 \mp \eps/2,0,\dots,0\}$ depending on the outcome of the algorithm when it sees two words of samples having respectively $\tau(1/2-\eps/4)$ and $\tau(1/2+\eps/4)$ ones (the rest of samples are equal to $2$) and derive tight lower bounds on the probability mass function of the multinomial distribution.
\begin{proof}
We consider distributions supported only on $\{1,2\}$, this is possible since we want that our algorithm would work for all distributions. 
We consider such a $\delta$-correct test $A:\{1,2\}^\tau\times \{1,2\}^\tau \rightarrow \{1,2\}$, it sees two words consisting of $\tau$ samples either from equal distributions or $\eps$-far ones and returns $1$ if it thinks they are equal and $2$ otherwise. We construct another test $B:\{1,2\}^\tau\times \{1,2\}^\tau \rightarrow \{0,1\}$ by the expression 
\begin{align*}
B(x,y)=1_{\sum_{\sigma, \rho\in \mathcal{S}_\tau}A(\sigma(x),\rho(y))-1 \ge (\tau!)^2/2 } \;,
\end{align*}
$B$ can be proven to be $2\delta$-correct and have the property of invariance under the action of the symmetric group.  
This leads to an algorithm $C:\{0,\dots,\tau\}^2\rightarrow \{0,1\}$ which is  $2\delta$ correct and satisfies 
\begin{align*}
C(i,j)=B(x_i,y_j)\;,
\end{align*}where $x_k=1\dots 1 2\dots 2$ with $k$ ones. We consider $i=[\tau(1/2-\eps/4)]$ and $j=[\tau(1/2+\eps/4)]$. We denote by $N_i(x)$ the number of $i$ in a word $x$ of length $\tau$  for $i=1,2$.
\begin{itemize}
\item If $C(i,j)=0$, let $x$ (resp. $y$) a word of length $\tau$ constituted of i.i.d samples from $\{1/2-\eps/2,1/2+\eps/2,0,\dots,0\}$ (resp. $\{1/2+\eps/2,1/2-\eps/2,0,\dots,0\}$),  then $\mathds{P}_{1/2-\eps/2,1/2+\eps/2}(N_1(x)=i,N_1(y)=j)\le 2\delta$ hence with Stirling's approximation (\citet{leubner1985generalised} )\begin{align*}
\frac{e^{-2}}{2\pi \tau}e^{-\tau\KL(i/\tau,1/2-\eps/2)}e^{-\tau\KL(1-j/\tau,1/2-\eps/2)}\le 2\delta.
\end{align*}Thus 
\begin{align*}
2\tau\KL(1/2+\eps/4-1/\tau,1/2+\eps/2)&\ge \tau(\KL(i/\tau,1/2-\eps/2)+\KL(j/\tau,1/2-\eps/2))\\&\ge \log(1/2\delta)-2-\log(2\pi)-\log(\tau)\;.
\end{align*}
Hence using lemma~\ref{lemma-kl} and for $\tau>2/\eps$ 
\begin{align*}
2\tau\KL(1/2+\eps/4,1/2+\eps/2)&\ge -2\tau(\KL(1/2+\eps/4-1/\tau,1/2+\eps/2)-\KL(1/2+\eps/4,1/2+\eps/2) )\\&+ \log(1/2\delta)-2-\log(2\pi)-\log(\tau)
\\&\ge -2\tau\int_{1/2+\eps/4-1/\tau}^{1/2+\eps/4}du \int^{1/2+\eps/2}_u dv\frac{1}{v(1-v)} +\log(1/2\delta)\\&-2-\log(2\pi)-\log(\tau)
\\&\ge -2(\eps/4+1/\tau)\sup_{[1/2+\eps/4-1/\tau,1/2+\eps/2]}\frac{1}{v(1-v)} +\log(1/2\delta)
			\\&-2-\log(2\pi)-\log(\tau)
\\&\ge -2\eps\sup_{[1/2,1/2+\eps]}\frac{1}{v(1-v)} +\log(1/2\delta)-2-\log(2\pi)-\log(\tau)\;.
\end{align*}
Then lemma \ref{lemma-log} implies 
\begin{align*}
\tau &\ge \frac{-2\eps\sup_{[1/2,1/2+\eps]}\frac{1}{v(1-v)} +\log(1/2\delta)-2-\log(2\pi)}{2\KL(1/2+\eps/4,1/2+\eps/2)}-\frac{\log\left(\frac{-2\eps\sup_{[1/2,1/2+\eps]}\frac{1}{v(1-v)} +\log(1/2\delta)-2-\log(2\pi)}{2\KL(1/2+\eps/4,1/2+\eps/2)}\right)}{4\KL(1/2+\eps/4,1/2+\eps/2)}\;
          \\&\ge \frac{\log(1/2\delta)}{2\KL(1/2+\eps/4,1/2+\eps/2)}-\mathcal{O}\left( \frac{\log\log(1/\delta)}{\KL(1/2+\eps/4,1/2+\eps/2)}\right).
\end{align*}

Finally we get the asymptotic lower bound: \begin{align*}
\liminf_{\delta\rightarrow 0} \frac{\tau}{\log(1/\delta)}\ge\frac{1}{2\KL(1/2-\eps/4,1/2-\eps/2)}.
\end{align*}
\item If $C(i,j)=1$, let $x$ and $y$ two words of length $\tau$ constituted of i.i.d samples from $\{1/2,1/2,0,\dots,0\}$, then $\mathds{P}_{1/2,1/2}(N_1(x)=i,N_1(y)=j)\le 2\delta$ hence with Stirling's approximation \begin{align*}
\frac{e^{-2}}{2\pi \tau}e^{-\tau\KL(i/\tau,1/2)}e^{-\tau\KL(1-j/\tau,1/2)}\le 2\delta\;.
\end{align*}Using the same lemmas as before, we get the following lower bound \begin{align*}
\tau &\ge \frac{\log(1/2\delta)}{2\KL(1/2+\eps/4,1/2)}-\mathcal{O}\left( \frac{\log\log(1/\delta)}{\KL(1/2+\eps/4,1/2)}\right).
\end{align*}

Finally we get the asymptotic lower bound: \begin{align*}
\liminf_{\delta\rightarrow 0} \frac{\tau}{\log(1/\delta)}\ge\frac{1}{2\KL(1/2+\eps/4,1/2)}\;.
\end{align*}
\end{itemize}
\end{proof}

\subsection{Sequential setting}



Inspired by testing identity for small alphabets results in \mysec{bernoulli-id}, we would like to  achieve an improvement of factor  $4$ in sample complexity for sequential strategies over batch ones for testing closeness problem. For this end, 
we start by stating a lower bound on the expected stopping times of a sequential algorithm for testing closeness.

\begin{lemma}\label{lowerbound-kl-clos}
 Let $T$ be a stopping rule for testing closeness: $\mathcal{D}_1=\cD_2$ vs ${\TV(\mathcal{D}_1,\cD_2)>\eps}$ with an error probability $\delta$. Let  $\tau_1$ and $\tau_2$ the associated stopping times. We have \begin{align*}
\sup_{\cD}\mathds{E}(\tau_1(\cD,\cD))&\ge	\frac{\log(1/3\delta)}{\KL(1/2,1/2+\eps/2)+\KL(1/2,1/2-\eps/2)} \underset{\eps\rightarrow 0}\sim \frac{\log(1/3\delta)}{\eps^2}	 \text{ and }
\\ \sup_{\TV(\cD_1,\cD_2)=d} \mathds{E}(\tau_2(\cD_1,\cD_2))&\ge  \frac{\log(1/3\delta)}{\KL(1/2+d/2,1/2)+\KL(1/2-d/2,1/2)} \underset{d\rightarrow 0}\sim \frac{\log(1/3\delta)}{d^2}     \text{ if } d>\eps  \;.
\end{align*}
\end{lemma}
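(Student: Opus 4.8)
The plan is to reuse the generic sequential lower bound from \mysec{proof-lowerbound-kl-id}, which combines a change-of-measure argument with Wald's lemma: for any $\delta$-correct stopping rule and any two candidate pairs of laws $(\mathcal{D}_1,\mathcal{D}_2)$ and $(\mathcal{D}_1',\mathcal{D}_2')$ --- one lying in $H_1$ and the other in $H_2$ --- one has
\begin{align*}
\mathds{E}_{(\mathcal{D}_1,\mathcal{D}_2)}(\tau)\cdot\big(\KL(\mathcal{D}_1,\mathcal{D}_1')+\KL(\mathcal{D}_2,\mathcal{D}_2')\big)\ge \mathrm{kl}\big(\mathds{P}_{(\mathcal{D}_1,\mathcal{D}_2)}(\mathcal{E}),\mathds{P}_{(\mathcal{D}_1',\mathcal{D}_2')}(\mathcal{E})\big)\ge \log(1/3\delta),
\end{align*}
where $\mathcal{E}$ is the decision event of the test, $\mathrm{kl}$ is the binary relative entropy, and the additivity of $\KL$ reflects that each round draws one independent sample from each of the two sources (so the per-round law is the product $\mathcal{D}_1\otimes\mathcal{D}_2$). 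It then remains only to exhibit, for each of the two claimed bounds, an explicit worst-case pair and to evaluate the associated per-round divergence.

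For the $\tau_1$ bound I would take $\mathcal{D}=\{1/2,1/2,0,\dots,0\}$ supported on $\{1,2\}$, so that $(\mathcal{D},\mathcal{D})\in H_1$, and set the competing $H_2$-pair to $\mathcal{D}_1'=\{1/2+\eps/2,1/2-\eps/2,0,\dots,0\}$ and $\mathcal{D}_2'=\{1/2-\eps/2,1/2+\eps/2,0,\dots,0\}$. A direct check gives $\TV(\mathcal{D}_1',\mathcal{D}_2')=\eps$, so this pair is admissible under $H_2$. Choosing $\mathcal{E}=\{\tau_1\le\tau_2\}$, which has probability $\ge 1-\delta$ under $(\mathcal{D},\mathcal{D})$ and $\le\delta$ under the $\eps$-far pair, turns the generic inequality into the stated bound, since $\KL(\mathcal{D},\mathcal{D}_1')=\KL(1/2,1/2+\eps/2)$ and $\KL(\mathcal{D},\mathcal{D}_2')=\KL(1/2,1/2-\eps/2)$ in the Bernoulli notation $\KL(p,q)=\KL(\mathcal{B}(p),\mathcal{B}(q))$.

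For the $\tau_2$ bound I would reverse the roles: the true pair is $\mathcal{D}_1=\{1/2+d/2,1/2-d/2,0,\dots,0\}$ and $\mathcal{D}_2=\{1/2-d/2,1/2+d/2,0,\dots,0\}$, so $\TV(\mathcal{D}_1,\mathcal{D}_2)=d>\eps$ and the pair lies in $H_2$; the competing $H_1$-alternative is the equal pair $\mathcal{D}_1'=\mathcal{D}_2'=\{1/2,1/2,0,\dots,0\}$. A short first-order computation (setting $\tfrac{d}{dq}[\KL(1/2+d/2,q)+\KL(1/2-d/2,q)]=0$) confirms that $q=1/2$ is the minimizer of $\KL(\mathcal{D}_1,\cdot)+\KL(\mathcal{D}_2,\cdot)$ over equal pairs, so this construction yields the sharpest bound available. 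With $\mathcal{E}=\{\tau_2\le\tau_1\}$ one obtains $\mathds{E}(\tau_2)\ge\log(1/3\delta)/\big(\KL(1/2+d/2,1/2)+\KL(1/2-d/2,1/2)\big)$. Both asymptotic equivalents $\sim\log(1/3\delta)/\eps^2$ and $\sim\log(1/3\delta)/d^2$ follow from the second-order expansion $\KL(1/2,1/2\pm x)\sim x^2/2$ and $\KL(1/2\pm x,1/2)\sim x^2/2$ recorded in Lemma~\ref{lemma-kl}.

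The step I expect to be most delicate is the change-of-measure inequality in the two-stream sequential setting: one must verify, via Wald's identity applied to the stopped log-likelihood ratio, that the accumulated divergence up to the random time $\tau$ has expectation equal to $\mathds{E}(\tau)$ times the \emph{additive} per-round divergence $\KL(\mathcal{D}_1,\mathcal{D}_1')+\KL(\mathcal{D}_2,\mathcal{D}_2')$, and that the data-processing step against the decision event $\mathcal{E}$ remains legitimate for a genuine stopping rule rather than a fixed sample size. This is precisely the content imported from \mysec{proof-lowerbound-kl-id}; once it is granted, the remainder reduces to the routine verification of the two total variation distances and the two divergence evaluations above.
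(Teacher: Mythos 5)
Your proposal is correct and follows essentially the same route as the paper: the paper's own proof is a two-line application of the generic two-stream change-of-measure/Wald bound (Lemma~\ref{lowerbound-kl}) with exactly the same two-point distributions $\{1/2,1/2,0,\dots,0\}$ versus $\{1/2\pm\eps/2,1/2\mp\eps/2,0,\dots,0\}$ for $\tau_1$ (resp.\ $\{1/2\pm d/2,1/2\mp d/2,0,\dots,0\}$ versus $\{1/2,1/2,0,\dots,0\}$ for $\tau_2$). Two nitpicks only: the generic inequality you invoke is the closeness version, Lemma~\ref{lowerbound-kl} of Sec.~\ref{sec:proof-lowerbound-kl}, rather than the identity version of \mysec{proof-lowerbound-kl-id}, and---exactly as in the paper---the pair at distance exactly $\eps$ lies on the boundary of $H_2$ (which requires $\TV>\eps$), so formally one applies the bound with the infimum over $\TV>\eps$ and concludes by continuity of $\KL$.
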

An average number of samples equivalent to $\log(1/3\delta)(\eps\vee \TV(\cD_1,\cD_2))^{-2}  $ (see Lemma \ref{lemma-kl}) is thus necessary when the tester can access sequentially to the samples, which is roughly 4 times less than the complexity  obtained in the batch setting. 
\begin{proof}
The proof of this Lemma follows from Lemma~\ref{lowerbound-kl} by choosing for the first point $\cD_1=\cD_2=\{1/2,1/2,0,\dots,0\}$ and $\cD'_{1,2}=\{1/2\pm \eps/2,1/2 \mp \eps/2,0,\dots,0\}$. For the second point, we use $\cD=\{1/2,1/2,0,\dots,0\}$ and $\cD_{1,2}=\{1/2\pm d/2,1/2 \mp d/2,0,\dots,0\}$.
\end{proof}

In the sequential testing, the tester chooses when to stop according to the previous observations $((A_1,B_1),\dots,(A_t,B_t))$, making comparisons at each step $t$. 
The tester can stop as soon as she is sure that she can accept one of the hypothesis $H_1$ or $H_2$. 
On the contrary, in the batch setting she had to sample enough observation to be simultaneously sure that either $H_1$ or $H_2$ hold. In this aim, at each time step,  after sampling a new observation $(A_t,B_t)$, she compares the updated empirical TV distance  $S_t=\TV(\tilde{\mathcal{D}_1}_t,\tilde{\mathcal{D}_2}_t)$  to specific thresholds  and sees if (a) $S_t$ is sufficiently far from $0$ to surely accept $H_2$, (b) $S_t$ is sufficiently close to $\eps$ to surely accept $H_1$, (c) she is unsure and needs further samples to take a sound decision. This test is formally described in \myalgo{alg-clos} and its execution is illustrated in \myfig{synthetic} for $n=2$. 

\begin{algorithm}[t]
	\caption{Distinguish between $\cD_1=\cD_2$ and $\TV(\cD_1,\cD_2)>\eps$ with high probability}
	\label{algo:alg-clos}
	\begin{algorithmic}
		\REQUIRE $A_1,\dots$ samples from $\mathcal{D}_1$ and $B_1,\dots$ samples from $\mathcal{D}_2$ 
		\ENSURE Accept if $\cD_1=\cD_2$ and Reject if $\TV(\cD_1,\cD_2)>\eps$ with probability of error less than $\delta$
		\STATE $t=1$, $W=1$
	 \WHILE{$W=1$}
		\STATE $\tilde{\mathcal{D}}_{1,t}=\left\{\left(\sum_{j=1}^{t} 1_{A_j= i}\right)/t\right\}_{i\in[n]}$, $\tilde{\mathcal{D}}_{2,t}=\left\{\left(\sum_{j=1}^{t} 1_{B_j= i}\right)/t\right\}_{i\in[n]}$
		\IF{$\TV\left(\tilde{\mathcal{D}}_{1,t},\tilde{\mathcal{D}}_{2,t}\right)>\sqrt{\frac{\log\left(\frac{2^{n-1}t(t+1)}{\delta}\right)}{t}}$}
		\STATE $W=0$
		\RETURN 2
		\ELSIF{$\TV\left(\tilde{\mathcal{D}}_{1,t},\tilde{\mathcal{D}}_{2,t}\right)\le \eps-\sqrt{\frac{\log\left(\frac{2^{n-1}t(t+1)}{\delta}\right)}{t}}$}
		\STATE $W=0$
		\RETURN 1
		\ELSE
		\STATE $t=t+1$
		\ENDIF
		\ENDWHILE
	\end{algorithmic}
\end{algorithm}
To show the correctness of such sequential algorithms, Chernoff-Hoeffding's inequality doesn't work since we have two unknwon distributions. It turns out that McDiarmid's inequality~\eqref{eq:M} is best suited in this situation: 
\begin{align*}
	\mathds{P}\Bigg(\exists t\ge 1 , \exists B\subset [n/2] :\left|\tilde{\cD}_{1,t}(B)-\cD_1(B)-\tilde{\cD}_{2,t}(B)+\cD_2(B)\right|>\Phi_t\Bigg)\le \delta, 
\end{align*}
where $\Phi_t$  denote the constant $ \Phi_t=\sqrt{\log\left(\frac{2^{n-1}t(t+1)}{\delta}\right)/t}$.
On the other hand, to control the sample complexity, we prove upper bounds on the expected stopping times:
\begin{align*}
 \tau_1=\inf\left\{ t\ge 1 :     \TV\left(\tilde{\mathcal{D}}_{1,t},\tilde{\mathcal{D}}_{2,t}\right)\le\eps-\Phi_t\right\}, \text{ and } 
 \tau_2=\inf\left\{ t\ge 1 :     \TV\left(\tilde{\mathcal{D}}_{1,t},\tilde{\mathcal{D}}_{2,t}\right)>\Phi_t \right\}.
 \end{align*}
It is clear that the stopping time of the algorithm is random. Yet, we can show that  this algorithm stops before the non sequential 
 one and give an upper bound on the expected stopping time $\tau$ (or expected sample complexity). 
 In the following theorem, we state an upper bound on the estimated sample complexity of this algorithm. 
 \begin{theorem}\label{ber-clos-comp2}
 	The \myalgo{alg-clos} is $\delta$-correct and its stopping times verify for $n\le\mathcal{O}( \log(1/\delta)^{1/3})$:
 	\begin{align*}
 	\mathds{E}(\tau_1(\cD,\cD)) &\le 
 	\frac{\log(2^{n+1}/\delta)}{\eps^2}+\mathcal{O}\left( \frac{\log(2^{n+1}/\delta)^{2/3}}{\eps^2}\right)  \text{ if } \cD_1=\cD_2=\cD \text{ and}
 	\\\mathds{E}(\tau_2(\cD_1,\cD_2)) &\le 
 	\frac{\log(2^{n+1}/\delta)}{\TV(\cD_1,\cD_2)^2}+\mathcal{O}\left( \frac{\log(2^{n+1}/\delta)^{2/3}}{\TV(\cD_1,\cD_2)^2}\right) \text{ if } \TV(\cD_1,\cD_2)>\eps\;.
 	\end{align*}
  \end{theorem}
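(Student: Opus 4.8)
The plan is to follow the same two-stage template as the proof of the testing-identity theorem in \mysec{bernoulli-id}, with Chernoff--Hoeffding's inequality replaced everywhere by McDiarmid's inequality~\eqref{eq:M}, since here both distributions are unknown and we can no longer center the empirical frequencies on a known mean. The only structural fact needed is that the centered increment $\tilde{\cD}_{1,t}(B)-\cD_1(B)-\tilde{\cD}_{2,t}(B)+\cD_2(B)$ is a bounded-differences functional of the $2t$ samples (each sample moves it by at most $1/t$), so that the time-uniform bound stated just before the theorem holds with $\Phi_t=\sqrt{\log(2^{n-1}t(t+1)/\delta)/t}$.

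For correctness I would split into the two hypotheses. If $\cD_1=\cD_2$, then for every $B$ one has $\tilde{\cD}_{1,t}(B)-\tilde{\cD}_{2,t}(B)=\tilde{\cD}_{1,t}(B)-\cD_1(B)-\tilde{\cD}_{2,t}(B)+\cD_2(B)$, so the event $\{\tau_2\le\tau_1\}$ (namely $\TV(\tilde{\cD}_{1,t},\tilde{\cD}_{2,t})>\Phi_t$ for some $t$) is contained in the time-uniform McDiarmid event and has probability at most $\delta$. If instead $\TV(\cD_1,\cD_2)>\eps$, let $B^\star$ attain the distance; on the complement of the McDiarmid event one has $|\tilde{\cD}_{1,t}(B^\star)-\tilde{\cD}_{2,t}(B^\star)|\ge \TV(\cD_1,\cD_2)-\Phi_t>\eps-\Phi_t$ for all $t$, so $\TV(\tilde{\cD}_{1,t},\tilde{\cD}_{2,t})>\eps-\Phi_t$ and the stopping condition of $\tau_1$ can never fire; hence $\mathds{P}(\tau_1\le\tau_2)\le\delta$. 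Together these give $\delta$-correctness.

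For the expected sample complexity I would invoke Lemma~\ref{lemma-complexity}, $\mathds{E}(T)\le N+\sum_{t\ge N}\mathds{P}(T\ge t)$, with a threshold $N$ chosen from a free parameter $\alpha\in(0,1)$. In the case $\cD_1=\cD_2=\cD$, take $N$ minimal so that $\Phi_t\le\alpha\eps$ for all $t\ge N$; then $\{\tau_1\ge t\}$ forces $\TV(\tilde{\cD}_{1,t-1},\tilde{\cD}_{2,t-1})>\eps-\Phi_{t-1}\ge(1-\alpha)\eps$, and a union bound over the relevant $B$'s combined with the \emph{fixed-time} McDiarmid inequality bounds $\mathds{P}(\tau_1\ge t)$ by $2^{\lfloor n/2\rfloor+1}e^{-(1-\alpha)^2\eps^2(t-1)}$. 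Summing the geometric tail (using $1-e^{-x}\ge x/2$) yields $\mathds{E}(\tau_1)\le N+\frac{2^{\lfloor n/2\rfloor+2}}{(1-\alpha)^2\eps^2}e^{-(N-1)(1-\alpha)^2\eps^2}$. The case $\TV(\cD_1,\cD_2)=d>\eps$ is identical but one centers on the single optimal set $B^\star$, so no union bound is needed and $\mathds{P}(\tau_2\ge t)\le 2e^{-(1-\alpha)^2 d^2(t-1)}$, with $N_d$ defined from $\Phi_t\le\alpha d$.

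It then remains to bound the implicitly defined threshold: since $\Phi_N\le\alpha\eps<\Phi_{N-1}$ translates into $\alpha^2\eps^2\le\log(2^{n-1}(N-1)N/\delta)/(N-1)$, Lemma~\ref{lemma-log} gives $N\le \log(2^{n-1}/\delta)/(\alpha^2\eps^2)+\mathcal{O}(\log\log(1/\delta)/\eps^2)$, and likewise for $N_d$ with $\eps$ replaced by $d$. Letting $\delta\to0$ then $\alpha\to1$ recovers the leading terms $\log(2^{n+1}/\delta)/\eps^2$ and $\log(2^{n+1}/\delta)/d^2$ (the gap between $2^{n-1}$ and $2^{n+1}$ being absorbed into the lower-order error); for the non-asymptotic statement I would set $\alpha=(1+\log(2^{n-1}/\delta)^{-1/3})^{-1}$, exactly as in \mysec{bernoulli-id}. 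I expect the main obstacle to be verifying that this choice makes the geometric tail genuinely lower order: the exponent satisfies $(N-1)(1-\alpha)^2\eps^2\approx\log(1/\delta)^{1/3}$, so the prefactor $2^{\lfloor n/2\rfloor}$ is absorbed precisely when $n\le\mathcal{O}(\log(1/\delta)^{1/3})$, which is the origin of the stated range of $n$. Note that, unlike the identity case, the leading rate is $\eps^{-2}$ rather than $1/\KL(\cdots)$, because McDiarmid only delivers the distribution-free sub-Gaussian exponent.
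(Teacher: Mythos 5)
Your proposal is correct and follows essentially the same route as the paper's proof: the same time-uniform McDiarmid bound for $\delta$-correctness (handling the far case through the optimal set $B_{opt}$), then Lemma~\ref{lemma-complexity} with a threshold tuned by a parameter $\alpha\approx(1+\log(1/\delta)^{-1/3})^{-1}$, a fixed-time McDiarmid geometric tail, and the observation that the $2^{\Theta(n)}$ prefactor is absorbed exactly when $n\le\mathcal{O}(\log(1/\delta)^{1/3})$. The only cosmetic difference is bookkeeping: you define $N$ implicitly via $\Phi_t\le\alpha\eps$ and bound it with Lemma~\ref{lemma-log} (mirroring the identity-testing proof), whereas the paper fixes $N$ explicitly and solves for the induced $\tilde{\alpha}$; the two are interchangeable.
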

These upper bounds are tight in the sense that they match the asymptotic lower bounds of Lemma~\ref{lowerbound-kl-clos} if $n \ll\log(1/\delta)$. The advantages of sequential strategies over batch ones for the testing closeness problem are the same as those for the testing identity problem.
\begin{proof}
 	We should prove that the \myalgo{alg-clos} has an error probability less than $\delta$. We use the following lemma which can be proven using McDiarmid's inequality and union bounds. 
 	\begin{lemma}
 		If $\{A_1,\dots,A_{t}\}$ (resp $\{B_1,\dots,B_{t}\}$ ) i.i.d. with the law $\mathcal{D}_1$ (resp $\mathcal{D}_2$), we have the following inequality 
 		\[
 		\mathds{P}\Bigg(\exists t\ge 1 , \exists B\subset [n/2] :\left|\tilde{\cD}_{1,t}(B)-\cD_1(B)-\tilde{\cD}_{2,t}(B)+\cD_2(B)\right|>\sqrt{\log\left(\frac{2^{n-1}t(t+1)}{\delta}\right)/t}\Bigg)\le \delta. 
 		\]
 	\end{lemma}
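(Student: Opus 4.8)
The plan is to prove this as a time-uniform, subset-uniform deviation bound by reducing to a single application of McDiarmid's inequality for each fixed pair $(t,B)$, and then closing a double union bound over $t\ge1$ and over $B\subset[\lfloor n/2\rfloor]$ whose total mass is controlled by the very choice of the threshold $\Phi_t=\sqrt{\log(2^{n-1}t(t+1)/\delta)/t}$. First I would fix $t$ and a nonempty $B\subset[\lfloor n/2\rfloor]$ and introduce the centered statistic
$$g_t^B=\big(\tilde{\cD}_{1,t}(B)-\cD_1(B)\big)-\big(\tilde{\cD}_{2,t}(B)-\cD_2(B)\big)=\frac1t\sum_{j=1}^t\big(1_{A_j\in B}-1_{B_j\in B}\big)-\big(\cD_1(B)-\cD_2(B)\big),$$
which is a function of the $2t$ independent samples $A_1,\dots,A_t,B_1,\dots,B_t$ and has mean zero.

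The key observation I would exploit is that $g_t^B$ has the bounded-difference property: modifying a single coordinate $A_j$ (or $B_j$) changes the corresponding indicator by at most one, hence changes $g_t^B$ by at most $1/t$. With $2t$ coordinates each of sensitivity $1/t$, the sum of squared sensitivities is $2t\cdot(1/t)^2=2/t$, so McDiarmid's inequality yields the two-sided bound
$$\mathds{P}\!\left(|g_t^B|>\Phi_t\right)\le 2\exp\!\left(-\frac{2\Phi_t^2}{2/t}\right)=2\exp(-t\,\Phi_t^2).$$
(Since the two samples are independent, this is equivalently a Hoeffding estimate for a sum of independent bounded terms.) By the defining relation $t\,\Phi_t^2=\log\!\big(2^{n-1}t(t+1)/\delta\big)$, the right-hand side equals $\dfrac{2\delta}{2^{n-1}t(t+1)}$.

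It then remains to sum this over all relevant $(t,B)$. The time-union telescopes, $\sum_{t\ge1}\frac1{t(t+1)}=1$, which is precisely why the factor $t(t+1)$ was inserted into $\Phi_t$; the subset-union contributes the number of nonempty $B\subset[\lfloor n/2\rfloor]$, namely $2^{\lfloor n/2\rfloor}-1\le 2^{n-2}$ for every $n\ge2$. Combining the two,
$$\mathds{P}\!\left(\exists t\ge1,\ \exists B\subset[\lfloor n/2\rfloor]:\ |g_t^B|>\Phi_t\right)\le (2^{\lfloor n/2\rfloor}-1)\sum_{t\ge1}\frac{2\delta}{2^{n-1}t(t+1)}\le 2^{n-2}\cdot\frac{2\delta}{2^{n-1}}=\delta,$$
which is exactly the claim.

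The only real subtlety here, rather than a genuine obstacle, is this final bookkeeping: one must check that the factor $2^{n-1}$ appearing in $\Phi_t$ simultaneously absorbs both the factor $2$ from the two-sided estimate and the subset count $2^{\lfloor n/2\rfloor}-1$. In particular, excluding the empty set (which contributes $0$) is what makes the $2^{\lfloor n/2\rfloor}-1\le 2^{n-2}$ comparison tight in the small-alphabet regime, e.g.\ for $n=2$, where there is a single nontrivial subset $B=\{1\}$.
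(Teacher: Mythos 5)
Your proof is correct and follows essentially the same route as the paper: a two-sided McDiarmid bound for each fixed pair $(t,B)$ (the statistic having $2t$ coordinates of sensitivity $1/t$), followed by a union bound over $t\ge 1$ and over subsets $B\subset[\lfloor n/2\rfloor]$, with the factor $2^{n-1}t(t+1)$ inside the threshold absorbing both sums. Your bookkeeping is in fact more careful than the paper's (which only asserts the lemma follows from ``McDiarmid's inequality and union bounds''), since you explicitly verify that the two-sided factor $2$ and the subset count $2^{\lfloor n/2\rfloor}-1\le 2^{n-2}$ fit within the budget $2^{n-1}$ for every $n\ge 2$.
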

 Using this lemma we can conclude:
 \begin{itemize}
 	\item If $\cD_1=\cD_2$, the probability of error is given by \[
 	\mathds{P}\left(\tau_2\le \tau_1\right)\le  \mathds{P}\left(\exists t \ge 1 : \TV\left(\tilde{\mathcal{D}}_{1,t},\tilde{\mathcal{D}}_{2,t}\right)>\sqrt{\log\left(\frac{2^{n-1}t(t+1)}{\delta}\right)/t}\right) 
 	\le \delta\;.\]
 	\item If $\TV(\cD_1,\cD_2)=|\cD_1(B_{opt})-\cD_2(B_{opt})|>\eps$, the probability of error is given by \begin{align*}
 	\mathds{P}\left(\tau_1\le \tau_2\right)&\le\mathds{P}\left(\exists t \ge 1 : \TV\left(\tilde{\mathcal{D}}_{1,t},\tilde{\mathcal{D}}_{2,t}\right)\le\eps-\sqrt{\log\left(\frac{2^{n-1}t(t+1)}{\delta}\right)/t}\right) 
  \\&\le \mathds{P}\left(\exists t\ge 1 :  \left|\tilde{\cD}_{1,t}(B_{opt})-\tilde{\cD}_{2,t}(B_{opt}))\right|\le\eps-\sqrt{\log\left(\frac{2^{n-1}t(t+1)}{\delta}\right)/t}\right)
 	\\&\le \mathds{P}\Bigg(\exists t\ge 1 :  \left|\tilde{\cD}_{1,t}(B_{opt})-\cD_1(B_{opt})-\tilde{\cD}_{2,t}(B_{opt})+\cD_2(B_{opt}))\right|\ge|\cD_1(B_{opt})-\cD_2(B_{opt})|
 	\\&-\eps+\sqrt{\log\left(\frac{2^{n-1}t(t+1)}{\delta}\right)/t}\Bigg)
 	\\&\le \mathds{P}\Bigg(\exists t\ge 1 :  \left|\tilde{\cD}_{1,t}(B_{opt})-\cD_1(B_{opt})-\tilde{\cD}_{2,t}(B_{opt})+\cD_2(B_{opt}))\right|
 	\\&>\sqrt{\log\left(\frac{2^{n-1}t(t+1)}{\delta}\right)/t}\Bigg)
 	\\&\le \delta\;.
 	\end{align*}
 \end{itemize}
These computations prove the correctness of \myalgo{alg-clos}.
It remains to study  the complexity of  \myalgo{alg-clos}. To this aim, we make a case study and use lemma \ref{lemma-complexity} to upper bound the stopping rules. 

Let us take $\alpha\in (0,1)$, 
\begin{itemize}
	\item If $\cD_1=\cD_2$, we take $N=\left[\frac{\log(2^{n+1}/\delta)}{(\alpha\eps)^2}\right]+1$ and $\tilde{\alpha}\in (0,1)$\footnote{for fixed $\alpha$ we take $\delta$ small enough to have $\tilde{\alpha} <1$.} so that 
	\[
	\tilde{\alpha}^2=\alpha^2\left(\frac{\log\log(2^{n+1}/\delta)-\log((\alpha\eps)^2)}{\log(2^{n+1}/\delta)}+1\right).\]
	The estimated stopping time can be bound as \begin{align*}
	\mathds{E}(\tau_1(\cD_1,\cD_2))&\le N+ \sum_{s\ge N}\mathds{P}(\tau_1(\cD_1,\cD_2)\ge s)
	\\&\le N+ \sum_{t\ge N-1}\mathds{P}\left( \TV\left(\tilde{\mathcal{D}}_{1,t},\tilde{\mathcal{D}}_{2,t}\right)>\eps-\sqrt{\log\left(\frac{2^{n-1}t(t+1)}{\delta}\right)/t} \right)
	\\&\le N+ \sum_{t\ge N-1}\mathds{P}\left( \TV\left(\tilde{\mathcal{D}}_{1,t},\tilde{\mathcal{D}}_{2,t}\right)>\eps-\tilde{\alpha}\eps\right)
	\\&\le N+ \sum_{t\ge N-1}\mathds{P}\left( \TV\left(\tilde{\mathcal{D}}_{1,t},\tilde{\mathcal{D}}_{2,t}\right)>(1-\tilde{\alpha})\eps\right)
	\\&\le N+ \sum_{t\ge N-1}2^{n/2}e^{-t((1-\tilde{\alpha})\eps)^2}, (\text{McDiarmid's inequality})
	\\&\le N+ \frac{2^{n/2}e^{-(N-1)((1-\tilde{\alpha})\eps)^2}}{1-e^{-((1-\tilde{\alpha})\eps)^2}}
	\\&\le \frac{\log(2^{n+1}/\delta)}{(\alpha\eps)^2} +2\frac{2^{n/2}e^{-(N-1)((1-\tilde{\alpha})\eps)^2}}{((1-\tilde{\alpha})\eps)^2}+1\;, (1-e^{-x} \ge x/2 \text{ for } 0<x<1)
	\\&\le \frac{\log(2^{n+1}/\delta)}{\eps^2} +\frac{\log(2^{n+1}/\delta)^{2/3}}{\eps^2} +\mathcal{O}\left(\frac{\log(2^{n+1}/\delta)^{2/3}}{\eps^2}\right)\;
		\\&\le \frac{\log(2^{n+1}/\delta)}{\eps^2}  +\mathcal{O}\left(\frac{\log(2^{n+1}/\delta)^{2/3}}{\eps^2}\right)\;,
	\end{align*} for $\alpha=(1+\log(2^{n+1}/\delta)^{-1/3})^{-2}$ so that $1-\tilde{\alpha}\ge C\log(2^{n+1}/\delta)^{-1/3} $ and we suppose here that $n<2C^2\log(2^{n+1}/\delta)^{1/3}$.
	\item If $d=\TV(\cD_1,\cD_2)=|\cD_1(B_{opt})-\cD_2(B_{opt})|>\eps$, we take $N=\left[\frac{\log(2^{n+1}/\delta)}{(\alpha d)^2}\right]+1$. We take $\tilde{\alpha}\in (0,1)$ so that $\tilde{\alpha}^2=\alpha^2\left(\frac{\log\log(2^{n+1}/\delta)-\log((\alpha d)^2)}{\log(2^{n+1}/\delta)}+1\right)$. The estimated stopping time can be bound as \begin{align*}
	\mathds{E}(\tau_2(\cD_1,\cD_2))&\le N+ \sum_{s\ge N}\mathds{P}(\tau_2(\cD_1,\cD_2)\ge s)
	\\&\le N+ \sum_{t\ge N-1}\mathds{P}\left( \TV\left(\tilde{\mathcal{D}}_{1,t},\tilde{\mathcal{D}}_{2,t}\right)
	\le\sqrt{\log\left(\frac{2^{n-1}t(t+1)}{\delta}\right)/t} \right)
		\\&\le N+ \sum_{t\ge N-1}\mathds{P}\left( \TV\left(\tilde{\mathcal{D}}_{1,t},\tilde{\mathcal{D}}_{2,t}\right)
	\le\sqrt{\log\left(\frac{2^{n-1}t(t+1)}{\delta}\right)/t} \right)
	 \\&\le N+ \sum_{t\ge N-1} \mathds{P}\left(  \left|\tilde{\cD}_{1,t}(B_{opt})-\tilde{\cD}_{2,t}(B_{opt}))\right|\le\sqrt{\log\left(\frac{2^{n-1}t(t+1)}{\delta}\right)/t}\right)
	\\& \le N+ \sum_{t\ge N-1}\mathds{P}\Bigg(  \left|\tilde{\cD}_{1,t}(B_{opt})-\cD_1(B_{opt})-\tilde{\cD}_{2,t}(B_{opt})+\cD_2(B_{opt}))\right|
	\\&>|\cD_1(B_{opt})-\cD_2(B_{opt})|-\sqrt{\log\left(\frac{2^{n-1}t(t+1)}{\delta}\right)/t}\Bigg)
	\\& \le N+ \sum_{t\ge N-1} \mathds{P}\Bigg(  \left|\tilde{\cD}_{1,t}(B_{opt})-\cD_1(B_{opt})-\tilde{\cD}_{2,t}(B_{opt})+\cD_2(B_{opt}))\right|>(1-\tilde{\alpha})d \Bigg)
	\\&\le N+ \sum_{t\ge N-1}e^{-t((1-\tilde{\alpha})d)^2}
	\\&\le N+ \frac{e^{-(N-1)((1-\tilde{\alpha})d)^2}}{1-e^{-((1-\tilde{\alpha})d)^2}}
	\\&\le \frac{\log(2^{n+1}/\delta)}{(\alpha d)^2} +\frac{2}{(1-\tilde{\alpha})^2d^2}+1\;
		\\&\le \frac{\log(2^{n+1}/\delta)}{d^2}  +\mathcal{O}\left(\frac{\log(2^{n+1}/\delta)^{2/3}}{d^2}\right)\;, 
	\end{align*} where we choose  $\alpha=(1+\log(2^{n+1}/\delta)^{-1/3})^{-2}$ and we use the inequality $1-e^{-x} \ge x/2$ for $0<x<1$ in the last line.
\end{itemize}
Finally, we can deduce the limit when $\cD_1=\cD_2$:
\begin{align*}
\limsup_{\delta\rightarrow 0} \frac{\mathds{E}(\tau_1(\cD_1,\cD_2))}{\log(1/\delta)} &\le \limsup_{\delta\rightarrow 0} \frac{\log(2^{n+1}/\delta)}{\log(1/\delta)\eps^2} +\mathcal{O}\left(\frac{\log(2^{n+1}/\delta)^{2/3}}{\log(1/\delta)\eps^2}\right)
\\&\le \frac{1}{\eps^2}\;,
\end{align*}
and when $d=\TV(\cD_1,\cD_2)>\eps$:

\begin{align*}
\limsup_{\delta\rightarrow 0} \frac{\mathds{E}(\tau_2(\cD_1,\cD_2))}{\log(1/\delta)} &\le \limsup_{\delta\rightarrow 0} \frac{\log(2^{n+1}/\delta)}{\log(1/\delta)d^2} +\mathcal{O}\left(\frac{\log(2^{n+1}/\delta)^{2/3}}{\log(1/\delta)d^2}\right)
\\&\le \frac{1}{d^2}\;.
\end{align*}
This concludes the proof of the complexity of \myalgo{alg-clos}.
\end{proof}
 
 After dealing with small alphabets and understanding well the differences between sequential and batch strategies for testing identity/sequential problems, one could ask whether sequential strategies have different behaviour for general alphabets when $n$ is greater than $\log(1/\delta)$. In the following, we try to transform batch algorithms to sequential ones in order to adapt to the actual $\TV$ distance.

\section{Testing uniform-the general case}\label{sec:test-unif-gen}
\cite{diakonikolas2017optimal} propose an algorithm for testing uniform in the general case. The advantage of their algorithm is that it has the tight sample  complexity (in batch setting) depending not only on $n$ but also on the error probability $\delta$. We capture the main ingredient of this article as a lemma:
\begin{lemma}\label{lemma-tes-unif}
	Let $\mathcal{D}$ a distribution on $[n]$ such that $d=\TV(\mathcal{D},U_n)>\eps$. There is a universal constant $C$ such that for all $t\ge 1 $: 
	\begin{align*}
	\mathds{E}(\TV(\tilde{\mathcal{D}}_t,U_n))\ge \mu_t(U_n)+C\min\left(\frac{d^2t^2}{n^2},d^2\sqrt{\frac{t}{n}},d\right),
	\end{align*}
	where $\mu_t(U_n)=\mathds{E}(\TV(\tilde{U}_{n,t},U_n))=\frac{1}{2} \mathds{E}_{X_1,\dots,X_t \sim U_n}\sum_{i=1}^n |\frac{1}{t}\sum_{j=1}^t 1_{X_j=i}-\frac{1}{n}|$ can be computed in $\mathcal{O}(t)$.
\end{lemma}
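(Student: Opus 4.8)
The plan is to exploit that the total variation distance is an additive functional of the empirical counts, so that its expectation decouples into a sum of one-dimensional quantities with no need for independence across coordinates. Writing $N_i=\sum_{j=1}^t 1_{X_j=i}$, each marginal count of the multinomial satisfies $N_i\sim\mathrm{Bin}(t,\mathcal{D}(i))$, and by linearity of expectation
\begin{align*}
\mathds{E}_{\mathcal{D}}\big(\TV(\tilde{\mathcal{D}}_t,U_n)\big)=\frac12\sum_{i=1}^n h_t(\mathcal{D}(i)),\qquad h_t(p):=\mathds{E}\Big(\Big|\tfrac{1}{t}\mathrm{Bin}(t,p)-\tfrac1n\Big|\Big),
\end{align*}
and in particular $\mu_t(U_n)=\frac{n}{2}h_t(1/n)$. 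Hence it suffices to lower bound the excess $G:=\frac12\sum_{i}\big(h_t(\mathcal{D}(i))-h_t(1/n)\big)$, and the whole problem reduces to understanding the single-variable profile $h_t$ near $p=1/n$.

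First I would record that $h_t$ is convex, since $p\mapsto\mathds{E}(\phi(\mathrm{Bin}(t,p)))$ is convex whenever $\phi$ is convex: differentiating in $p$ twice turns the derivative into $t(t-1)\,\mathds{E}$ of the (non-negative) second forward difference of $\phi$. The same computation gives, up to boundary effects, $h_t'(p)=\mathds{P}(\mathrm{Bin}(t-1,p)\ge t/n)-\mathds{P}(\mathrm{Bin}(t-1,p)<t/n)$, which vanishes near $p=1/n$ and is increasing, so $h_t$ has its minimum there. Writing $\Delta_i=\mathcal{D}(i)-1/n$, so that $\sum_i\Delta_i=0$ and $\sum_i|\Delta_i|=2d$, the goal becomes a lower bound on $h_t(1/n+\Delta)-h_t(1/n)$ as a function of $\Delta$; after establishing it I sum over $i$ and invoke Cauchy--Schwarz in the form $\sum_i\Delta_i^2\ge(\sum_i|\Delta_i|)^2/n=4d^2/n$. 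The constraint $\sum_i\Delta_i=0$ is precisely what lets the linear-in-$\Delta$ contributions cancel in the sum.

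The three terms of the minimum then arise from three regimes for the fluctuation scale of $N_i/t$. When $t\lesssim n$ I would use the Poisson approximation $\mathrm{Bin}(t,p)\approx\mathrm{Poisson}(tp)$ to get $h_t(p)\approx p+\frac1n(2e^{-tp}-1)$, whence $h_t(1/n+\Delta)-h_t(1/n)\ge c\,\frac{t^2}{n}\Delta^2$ up to a term linear in $\Delta$; summing and cancelling the linear part gives $G\gtrsim\frac{t^2}{n}\sum_i\Delta_i^2\gtrsim d^2t^2/n^2$. When $t\gtrsim n$ and the deviations are moderate, a Gaussian (CLT) approximation of $\mathrm{Bin}(t,p)$ yields $h_t(1/n+\Delta)-h_t(1/n)\gtrsim\Delta^2/\sqrt{p(1-p)/t}\approx\Delta^2\sqrt{nt}$, hence $G\gtrsim\sqrt{nt}\sum_i\Delta_i^2\gtrsim d^2\sqrt{t/n}$. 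Finally, once $t$ is large enough for the empirical distribution to concentrate, I would argue directly from the triangle inequality $\mathds{E}_{\mathcal{D}}(\TV(\tilde{\mathcal{D}}_t,U_n))\ge\TV(\mathcal{D},U_n)-\mathds{E}(\TV(\tilde{\mathcal{D}}_t,\mathcal{D}))\ge d-\mathcal{O}(\sqrt{n/t})$ together with $\mu_t(U_n)=\mathcal{O}(\sqrt{n/t})$, giving $G\gtrsim d$ exactly in the range where $d$ is the active term of the minimum.

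The main obstacle is to make these three per-coordinate estimates uniform in $p\in[0,1]$ and in $t$, and to glue them with a single universal constant $C$. The delicate points are controlling the error of the Poisson and Gaussian approximations near the transition $tp\asymp 1$, ensuring the crude per-coordinate lower bound $\gtrsim\min\big(\tfrac{t^2}{n}\Delta^2,\sqrt{nt}\,\Delta^2,|\Delta|\big)$ survives optimization over the shape of $\mathcal{D}$ subject to $\sum_i\Delta_i=0$ and $\sum_i|\Delta_i|=2d$, and verifying that the cancellation of the linear-in-$\Delta$ terms is exact rather than merely approximate. Once the single-variable excess bound is pinned down with explicit constants, the summation and the reduction to $\min(d^2t^2/n^2,\,d^2\sqrt{t/n},\,d)$ are routine.
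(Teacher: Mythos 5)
The first thing to note is that the paper does not prove this lemma at all: it is imported verbatim as Lemma~4 of \cite{diakonikolas2017optimal}, and the sentence immediately following the statement says exactly this. So you are attempting something the paper never does, and the relevant benchmark is the proof in that reference, which indeed proceeds along the lines you sketch: write $\mathds{E}[\TV(\tilde{\mathcal{D}}_t,U_n)]=\frac12\sum_i h_t(\mathcal{D}(i))$ with $h_t(p)=\mathds{E}|\mathrm{Bin}(t,p)/t-1/n|$, use convexity and a tangent-line (Bregman) expansion at $p=1/n$ so that the linear terms cancel because $\sum_i(\mathcal{D}(i)-1/n)=0$, and then analyze the excess in regimes governed by $tp$. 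Your overall architecture is the right one.

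However, as submitted this is a plan rather than a proof, and the steps you defer as "routine" are precisely where the work lies; moreover, two of your stated estimates are not correct as written. First, the per-coordinate bounds are claimed in purely quadratic form, $h_t(1/n+\Delta)-h_t(1/n)\gtrsim \frac{t^2}{n}\Delta^2$ (Poisson regime) and $\gtrsim\sqrt{nt}\,\Delta^2$ (Gaussian regime), modulo linear terms. These fail once $|\Delta|$ exceeds the natural fluctuation scale ($1/t$, resp.\ $1/\sqrt{nt}$): beyond it the Bregman excess grows only \emph{linearly} in $|\Delta|$ (e.g.\ in your own Poisson formula the excess is $\frac{2}{n}e^{-t/n}\left(e^{-t\Delta}-1+t\Delta\right)\asymp\frac1n\min\left(t^2\Delta^2,\,t|\Delta|\right)$), so no universal constant makes the quadratic bound hold for all admissible $\Delta$. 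Consequently the aggregation cannot be plain Cauchy--Schwarz via $\sum_i\Delta_i^2\ge 4d^2/n$: different coordinates sit in different branches of the min, and one must split the coordinates into small- and large-deviation classes, observe that one class carries at least half of the total variation $d$, and argue each case separately (Cauchy--Schwarz for the quadratic class, direct summation for the linear class), checking in each range of $t$ that the result dominates $\min(d^2t^2/n^2,\,d^2\sqrt{t/n},\,d)$. This case analysis is the heart of the cited proof and is not routine. Second, your claim that $h_t'$ vanishes near $p=1/n$ so that $h_t$ is minimized there is false (for $t\ll n$ one has $h_t'(1/n)\approx 2t/n-1<0$, as your own Poisson approximation shows), and it is also inconsistent with your later reliance on cancelling nonzero linear terms; the repair is to drop any claim about the location of the minimum and work exclusively with the tangent line at $1/n$. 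With these repairs your outline becomes essentially the argument of \cite{diakonikolas2017optimal}; without them the lemma is not yet established.
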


This Lemma (Lemma 4 from \cite{diakonikolas2017optimal}) is used along with the ideas of \mysec{bernoulli-id} to design a sequential algorithm whose complexity is on the worst case of the order of batch complexity and improves in many situations. 
At step $t$, we have samples $A_1,\dots,A_t$ from $\mathcal{D}'$. For each subset $B\subset [n]$ we denote $\tilde{\mathcal{D}'}_{t}(B)=\frac{\sum_{j=1}^t 1_{A_j\in B}}{t}$ 
and by $\mu_t(U_n)$ the expected value of $\max_{B\subset [n]} \Big|\tilde{U}_{n,t}(B)-|B|/n\Big|$. The algorithm is the following: 
\begin{algorithm}[h]
	\caption{Distinguish between  $\mathcal{D}'=U_n$ and $\TV(\mathcal{D}',U_n)>\eps$ with high probability}
	\label{algo:alg-uniform}
	\begin{algorithmic}
		\REQUIRE $A_1,\dots$ samples from $\mathcal{D}'$ 
		\ENSURE Accept if  $\mathcal{D}'=U_n$  and Reject if  $\TV(\mathcal{D}',U_n)>\eps$  with probability of error less than $\delta$
		\STATE $t=\min\{n,\sqrt{n\log(2/\delta)}\} $,  $W=1$
		\WHILE{$W=1$}
		\IF{$\exists B\subset[\lfloor n/2 \rfloor ]: \Big|\tilde{\mathcal{D}'}_t(B)-|B|/n\Big|>\max\{\phi(\delta,|B|/n,t),\phi(\delta,1-|B|/n,t)\}$ or $\TV(\tilde{\mathcal{D}'}_t,U_n)>\mu_t(U_n)+4\min\left(1,\frac{t^{3/2}}{n^{3/2}}\right)\sqrt{\frac{\log\left(\frac{2t(t+1)}{\delta}\right)}{2t}}$}
		\STATE $W=0$
		\RETURN 2
		\ELSIF{$\TV(\tilde{\mathcal{D}'}_t,U_n)<\mu_t(U_n)+C\min\left(\frac{t^2\eps^2}{n^2},\eps^2\sqrt{\frac{t}{n}},\eps\right)-4\min\left(1,\frac{t^{3/2}}{n^{3/2}}\right)\sqrt{\frac{\log\left(\frac{2t(t+1)}{\delta}\right)}{2t}}$}
		\STATE $W=0$
		\RETURN 1
		\ELSE \STATE $t=t+1$
		\ENDIF
		\ENDWHILE
	\end{algorithmic}
\end{algorithm}

The stopping times  $\tau_1$ and $\tau_2$  of \myalgo{alg-uniform} are then defined by 
\begin{alignat*}{3}
&\tau_1 &&=\inf\Bigg\{ t\ge \min\{n,\sqrt{n\log(2/\delta)}\}  :   \TV(\tilde{\mathcal{D}'}_t,U_n)<&&\mu_t(U_n)+C\min\left(\frac{t^2\eps^2}{n^2},\eps^2\sqrt{\frac{t}{n}},\eps\right) 
\\& && &&-4\min\left(1,\frac{t^{3/2}}{n^{3/2}}\right)\sqrt{\log\left(\frac{2t(t+1)}{\delta}\right)/(2t)}\Bigg\}\text{ and } 
\end{alignat*}
\begin{alignat*}{3}
\\&\tau_2 &&=\inf\Bigg\{ t\ge \min\{n,\sqrt{n\log(2/\delta)}\}  &&:    \exists B\subset[\lfloor n/2 \rfloor ]: \Big|\tilde{\mathcal{D}'}_t(B)-\frac{|B|}{n}\Big|>\max\{\phi(\delta,|B|/n,t),\phi(\delta,1-|B|/n,t)\} 
\\ &&& &&\text{ or}\TV(\tilde{\mathcal{D}'}_t,U_n)>\mu_t(U_n)+4\min\left(1,\frac{t^{3/2}}{n^{3/2}}\right)\sqrt{\log\left(\frac{2t(t+1)}{\delta}\right)/(2t)}\Bigg\}\;.
\end{alignat*}

In order to compare the sequential \myalgo{alg-uniform} with the batch one of \cite{diakonikolas2017optimal}, we need to show first that  it is indeed a $\delta$-correct algorithm and show that   its stopping times are smaller in expectation than the batch sample complexity. In the following theorem
, an upper bound of the expected stopping times is given:
\begin{theorem}  \myalgo{alg-uniform} is $\delta$-correct and its stopping times satisfy:
	
	\begin{align*}
	\mathds{E}(\tau_1(U_n))\le \max \Bigg\{ \frac{2\log(1/\delta)}{C^2\eps^2} +\frac{8}{C^2\eps^2}\log\frac{2\log(1/\delta)}{C^2\eps^2} ,\left( \frac{2n\log(1/\delta)}{C^2\eps^4} +\frac{4}{C^2\eps^4}\log\frac{2n\log(1/\delta)}{C^2\eps^4}\right)^{1/2}
	\Bigg\},
	\end{align*}
	and for $d=\TV(\mathcal{D}',U_n)>\eps$ and $B_d:=\{i: \mathcal{D}'_i>(1+d)/n\}$:
	\begin{alignat*}{2}
	&\mathds{E}(\tau_2(\mathcal{D}'))\le \min\Bigg\{&&\max \Bigg\{ \frac{3\log(1/\delta)}{C^2d^2}  ,\left( \frac{3n\log(1/\delta)}{C^2d^4} \right)^{1/2}
	\Bigg\},
	\\ & &&\frac{\log(2^{n-1}/\delta)}{\min\{\KL(|B_d|/n\pm d/2,|B_{d}|/n)\}}, \frac{\log(2^{n-1}/\delta)}{\min\{\KL(|B_{opt}|/n\pm d,|B_{opt}|/n)\}}\Bigg\}.
	\end{alignat*}
	
\end{theorem}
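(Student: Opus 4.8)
The proof splits into establishing $\delta$-correctness and then separately bounding $\mathds{E}(\tau_1)$ and $\mathds{E}(\tau_2)$, all three resting on two time-uniform concentration inequalities. The first is the per-subset bound already used in \mysec{bernoulli-id}: for i.i.d. samples from $\cD'$, Chernoff--Hoeffding together with the union bound $\sum_{B\subset[\lfloor n/2\rfloor],\,t\ge 1}\delta/(2^{n-1}t(t+1))$ shows that, after calibrating the constants inside the logarithm, with probability at least $1-\delta/2$ one has $|\tilde{\cD'}_t(B)-\cD'(B)|\le\max\{\phi(\delta,\cD'(B),t),\phi(\delta,1-\cD'(B),t)\}$ for all $t$ and all $B$. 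The second is a time-uniform concentration for the scalar statistic $\TV(\tilde{\cD'}_t,U_n)$ about its expectation, namely
\begin{align*}
\mathds{P}\Bigl(\exists t:\ \bigl|\TV(\tilde{\cD'}_t,U_n)-\mathds{E}\,\TV(\tilde{\cD'}_t,U_n)\bigr|>4\min\bigl(1,\tfrac{t^{3/2}}{n^{3/2}}\bigr)\sqrt{\tfrac{\log(2t(t+1)/\delta)}{2t}}\Bigr)\le\tfrac{\delta}{2}.
\end{align*}
I expect this second inequality to be the crux. A naive application of McDiarmid (bounded differences $1/t$) only gives the weaker threshold $\sqrt{\log(\cdot)/(2t)}$, without the gain $\min(1,t^{3/2}/n^{3/2})$. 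That extra factor reflects the genuinely smaller fluctuations of the TV statistic when $t\lesssim n$: writing $\TV(\tilde U_{n,t},U_n)=M_0/n$ with $M_0$ the number of empty bins (an identity valid as soon as $t\le n$), one finds $\mathrm{Var}(M_0)=\Theta(t^2/n)$ and hence a standard deviation of order $t/n^{3/2}=\min(1,t^{3/2}/n^{3/2})\,t^{-1/2}$. Turning this variance estimate into a time-uniform tail bound requires a variance-aware (Bernstein-type) bounded-differences inequality combined with the peeling technique of \cite{howard2020time}; carrying this out is the technical heart.

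Given both inequalities, correctness is routine by a union bound of total probability at most $\delta$. If $\cD'=U_n$ then $\mathds{E}\,\TV(\tilde{\cD'}_t,U_n)=\mu_t(U_n)$, so the event that \myalgo{alg-uniform} ever returns $2$ is contained in the two failure events, giving $\mathds{P}(\tau_2\le\tau_1)\le\delta$. If instead $d=\TV(\cD',U_n)>\eps$, Lemma~\ref{lemma-tes-unif} gives $\mathds{E}\,\TV(\tilde{\cD'}_t,U_n)\ge\mu_t(U_n)+C\min(d^2t^2/n^2,d^2\sqrt{t/n},d)\ge\mu_t(U_n)+C\min(\eps^2t^2/n^2,\eps^2\sqrt{t/n},\eps)$, so the acceptance threshold defining $\tau_1$ lies below the true mean by exactly the deviation level of the second inequality; the lower-tail event needed to return $1$ is therefore contained in a probability-$\delta/2$ failure event, giving $\mathds{P}(\tau_1\le\tau_2)\le\delta$.

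For $\mathds{E}(\tau_1(U_n))$ I would use Lemma~\ref{lemma-complexity}, $\mathds{E}(\tau_1)\le N+\sum_{t\ge N}\mathds{P}(\tau_1>t)$, with $N$ equal to the stated maximum. On the upper-tail event of the second inequality, $\TV(\tilde U_{n,t},U_n)\le\mu_t(U_n)+\text{(threshold)}$, so the acceptance condition fires as soon as $C\min(\eps^2t^2/n^2,\eps^2\sqrt{t/n},\eps)>8\min(1,t^{3/2}/n^{3/2})\sqrt{\log(2t(t+1)/\delta)/(2t)}$. Solving this deterministic inequality in the two binding regimes --- the saturated branch where the signal equals $\eps$ against a flat threshold $\sim\sqrt{\log(1/\delta)/t}$ (giving $t\asymp\log(1/\delta)/(C^2\eps^2)$), and the branch $\eps^2\sqrt{t/n}$ against the same threshold (giving $t\asymp\sqrt{n\log(1/\delta)}/(C\eps^2)$) --- produces the two terms of the maximum, and one checks $N\ge\min\{n,\sqrt{n\log(2/\delta)}\}$. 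The residual $\sum_{t\ge N}\mathds{P}(\tau_1>t)$ is a geometric tail controlled by the same concentration and contributes only the lower-order logarithmic corrections.

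Finally, for $\mathds{E}(\tau_2(\cD'))$ the key observation is that $\tau_2$ is the first time \emph{any} rejection witness fires, so $\tau_2=\min\{T_{\mathrm{TV}},T_{B_{opt}},T_{B_d}\}$ pointwise and hence $\mathds{E}(\tau_2)\le\min\{\mathds{E}\,T_{\mathrm{TV}},\mathds{E}\,T_{B_{opt}},\mathds{E}\,T_{B_d}\}$. The witness $T_{\mathrm{TV}}$ is the TV-threshold rejection: repeating the $\tau_1$ computation with $d$ in place of $\eps$ (the mean now shifted upward by Lemma~\ref{lemma-tes-unif}) yields the $\max\{3\log(1/\delta)/(C^2d^2),(3n\log(1/\delta)/(C^2d^4))^{1/2}\}$ term. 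The other two are per-subset rejections analysed exactly as in \mysec{bernoulli-id} via $\phi$ and Lemma~\ref{lemma-complexity}: the subset $B_{opt}$ has bias $|\cD'(B_{opt})-|B_{opt}|/n|=d$, giving the $\log(2^{n-1}/\delta)/\min\{\KL(|B_{opt}|/n\pm d,|B_{opt}|/n)\}$ term, while $B_d=\{i:\cD'_i>(1+d)/n\}$ carries bias at least $d/2$ by an elementary bucketing argument, giving the $\log(2^{n-1}/\delta)/\min\{\KL(|B_d|/n\pm d/2,|B_d|/n)\}$ term. Taking the minimum over the three witnesses gives the claimed bound. Beyond the refined concentration of the first paragraph, the only extra points needing care are the bias-$\ge d/2$ estimate for $B_d$ and verifying that each $N$ used in Lemma~\ref{lemma-complexity} dominates the starting time $\min\{n,\sqrt{n\log(2/\delta)}\}$.
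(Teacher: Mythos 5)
Your proposal follows the paper's proof in all essentials: the same two concentration tools (the per-subset Chernoff--Hoeffding bound with $\phi$, and the time-uniform concentration of $\TV(\tilde{\cD'}_t,U_n)$ with the $\min(1,t^{3/2}/n^{3/2})$ factor, which the paper proves via plain McDiarmid for $t\ge n$ and via the Bernstein form of McDiarmid, cited from \cite{daskalakis2017optimal}, for $\sqrt{n\log(2/\delta)}\le t<n$ --- precisely the variance-aware route you sketch, and your identity $\TV(\tilde{U}_{n,t},U_n)=M_0/n$ with $\mathrm{Var}(M_0)=\Theta(t^2/n)$ is the right explanation of the gain); the same correctness argument through Lemma~\ref{lemma-tes-unif}; and the same witness-based complexity bounds. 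Your organization of the $\tau_2$ bound as a pointwise minimum over three witnesses is a mild streamlining of the paper's case split on whether $|B_{opt}|$ exceeds $\sqrt{n\log(1/\delta)}$, and it yields the stated $\min$ more directly. One secondary flaw: you state the TV-statistic concentration for all $t\ge 1$, but it is only valid for $t\ge\min\{n,\sqrt{n\log(2/\delta)}\}$ (this is exactly why \myalgo{alg-uniform} starts sampling there). It is not merely unproven below that range but false: for $t$ small the threshold $4\,(t^{3/2}/n^{3/2})\sqrt{\log(2t(t+1)/\delta)/(2t)}$ drops below $1/n$, while a single collision, which occurs with probability about $t^2/(2n)$, moves the statistic by $1/n$; summing these probabilities over small $t$ exceeds $\delta$ for large $n$.

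The genuine gap is the claim that $B_d=\{i:\cD'_i>(1+d)/n\}$ has bias at least $d/2$ ``by an elementary bucketing argument''. Bucketing gives only
\begin{align*}
\cD'(B_d)-|B_d|/n \;=\; d-\sum_{i\,:\,0<\cD'_i-1/n\le d/n}\left(\cD'_i-\tfrac{1}{n}\right)\;\ge\; d-|A|\,\tfrac{d}{n},
\end{align*}
where $A$ is the set of items with deviation in $(0,d/n]$; since $|A|$ can be as large as $n(1-d)$, this guarantees a bias only of order $d^2$, not $d/2$. This is tight: take one item of mass $\frac{1}{n}+\frac{d}{n}+d^2$, exactly $n(1-d)-1$ items of mass $\frac{1+d}{n}$, and $dn$ items of mass $0$. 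Then $\TV(\cD',U_n)=d$, $B_d$ is the single heavy item, and its bias is $d^2+d/n\ll d/2$. In this example (say $d=0.1$, $\log(1/\delta)\asymp n$) no rejection witness --- neither $B_d$ with its true bias, nor $B_{opt}$ (here the set of zero-mass items), nor the TV statistic --- fires by time $\log(2^{n-1}/\delta)/\KL(1/n+d/2,1/n)$, so the stated $B_d$ term cannot be recovered from this witness analysis; the bucketing argument delivers bias $\ge d/2$ only if the threshold defining $B_d$ is lowered to $(1+d/2)/n$. In fairness, this defect is inherited from the paper, whose entire treatment of the $B_d$ term is the sentence ``we can use the same method to prove\dots'' with no detail; but since you yourself singled out this estimate as the point needing care, you should know that the justification you propose does not close it for the definition of $B_d$ given in the statement.
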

\begin{proof}
 
We prove here that \myalgo{alg-uniform} has an error probability less than $\delta$. The proof relies on the following uniform concentration lemma for $\TV(\tilde{\mathcal{D}}_t,U_n)$:
\begin{lemma}
	\label{lem:mc-uniform}
	\begin{align*}
	\mathds{P}\left(\exists t\ge \min\{n,\sqrt{n\log(2/\delta)}\} :  \left|\TV(\tilde{\mathcal{D}}_t,U_n)-\mathds{E}[\TV(\tilde{\mathcal{D}}_t,U_n)]\right|>4\min\left(1,\frac{t^{3/2}}{n^{3/2}}\right)\sqrt{\log\left(\frac{2t(t+1)}{\delta}\right)/(2t)}\right)\le \delta/2\;.
	\end{align*}
\end{lemma}

\begin{proof}
		If $n\le \sqrt{n\log(2/\delta)}$, we apply the union bound along with McDiarmid's inequality on $\TV(\tilde{\mathcal{D}}_t,U_n)$ which is $(1/t,\dots,1/t)$-bounded to obtain:
		\begin{align*}
		\mathds{P}\left(\exists t\ge n :  \left|\TV(\tilde{\mathcal{D}}_t,U_n)-\mathds{E}[\TV(\tilde{\mathcal{D}}_t,U_n)]\right|>4\sqrt{\log\left(\frac{2t(t+1)}{\delta}\right)/(2t)}\right)\le \sum_{t\ge n} \frac{\delta}{2t(t+1)} \;.
		\end{align*}
		If $n>\sqrt{n\log(2/\delta)}$, since the last inequality it remains to consider  $t<n$,  an application of the Bernstein form of McDiarmid’s inequality (as detailed in \cite{daskalakis2017optimal}) permits to deduce 
			\begin{align*}
		&\mathds{P}\left(\exists t\in[\sqrt{n\log(2/\delta)}, n]:  \left|\TV(\tilde{\mathcal{D}}_t,U_n)-\mathds{E}[\TV(\tilde{\mathcal{D}}_t,U_n)]\right|>4\frac{t^{3/2}}{n^{3/2}}\sqrt{\log\left(\frac{2t(t+1)}{\delta}\right)/(2t)}\right)\\
		&\le \sum_{\sqrt{n\log(2/\delta)}\le t<n} \exp{\left(\frac{-8\frac{t^2}{n^3}\log\left(\frac{2t(t+1)}{\delta}\right)}{4\frac{t^2}{n^3}+\frac{8t\sqrt{\log\left(\frac{2t(t+1)}{\delta}/2\right)}}{3n^{5/2}}}\right)}  \le \sum_{\sqrt{n\log(2/\delta)}\le t<n} \exp{\left(-\log\left(\frac{2t(t+1)}{\delta}\right)\right)}
		\\ &\le \sum_{\sqrt{n\log(2/\delta)}\le t<n} \frac{\delta}{2t(t+1)}\;.
		\end{align*}
		Combining both inequalities the lemma follows.
\end{proof}

The proof of correctness is detailed below:
\begin{itemize}
	\item If $\mathcal{D}'=U_n$, using Lemma~\ref{lem:mc-uniform}, the probability of error can be bounded as:
	\begin{align*}
	\mathds{P}\left(\tau_2\le \tau_1\right)&\le \mathds{P}\left( \exists t \ge \min\{n,\sqrt{n\log(2/\delta)}\}  , \exists B\subset[\lfloor n/2 \rfloor ]: \Big|\tilde{\mathcal{D}'}_{t}(B)-\frac{|B|}{n}\Big|>\max\{\phi(\delta,|B|/n,t),\phi(\delta,1-|B|/n,t)\} \right) \\&+\mathds{P}\left(\exists t \ge \min\{n,\sqrt{n\log(2/\delta)}\}  : \TV(\tilde{\mathcal{D}'}_t,U_n)>\mu_t(U_n)+4\min\left(1,\frac{t^{3/2}}{n^{3/2}}\right)\sqrt{\log\left(\frac{2t(t+1)}{\delta}\right)/(2t)}\right) 
	\\&\le \sum_{t\ge \min\{n,\sqrt{n\log(2/\delta)}\} , B\subset[\lfloor n/2 \rfloor ]}\mathds{P}\left( \Big|\tilde{\mathcal{D}'}_{t}(B)-\frac{|B|}{n}\Big|>\max\{\phi(\delta,|B|/n,t),\phi(\delta,1-|B|/n,t)\}\right)+\delta/2
	\\&\le \sum_{t\ge \min\{n,\sqrt{n\log(2/\delta)}\} , B\subset[\lfloor n/2 \rfloor ]}e^{-t\KL(|B|/n+\phi(\delta,|B|/n,t)|||B|/n)}+e^{-t\KL(|B|/n-\phi(\delta,1-|B|/n,t)|||B|/n)}+\delta/2
	\\&\le \delta\;.
	\end{align*}
	\item If $\TV(\mathcal{D}',U_n)>\eps$, the probability of error can be bounded as:
	\begin{align*}
	\mathds{P}\left(\tau_1\le \tau_2\right)& =\mathds{P}\Bigg(\exists t \ge \min\{n,\sqrt{n\log(2/\delta)}\}  : \TV(\tilde{\mathcal{D}'}_t,U_n)<\mu_t(U_n)+C\min\left(\frac{t^2\eps^2}{n^2},\eps^2\sqrt{\frac{t}{n}},\eps\right)\\&-4\min\left(1,\frac{t^{3/2}}{n^{3/2}}\right)\sqrt{\log\left(\frac{2t(t+1)}{\delta}\right)/(2t)}\Bigg) \\&\overset{(i)}{\le}\mathds{P}\Bigg(\exists t \ge \min\{n,\sqrt{n\log(2/\delta)}\}  : \left|\TV(\tilde{\mathcal{D}'}_t,U_n)-\mathds{E}(\TV(\tilde{\mathcal{D}'}_t,U_n))\right|\ge\\&4\min\left(1,\frac{t^{3/2}}{n^{3/2}}\right)\sqrt{\log\left(\frac{2t(t+1)}{\delta}\right)/(2t)}\Bigg) 
	\\&\overset{(ii)}{\le} \delta\;.
	\end{align*}
	where $(i)$ follows from the triangular inequality and Lemma~\ref{lemma-tes-unif} and  $(ii)$ follows from Lemma~\ref{lem:mc-uniform}.
\end{itemize}

 The sample complexity of \myalgo{alg-uniform} is given by the stopping time $\tau_1$ if the input consists of samples from the uniform distribution and by the stopping time  $\tau_2$ if the samples are from a distribution $\eps$-far from the uniform distribution.
Let's start by upper bounding $\tau_1$, the two regions related to the two stopping rules concur when 
\begin{align*}
4\min\left(1,\frac{t^{3/2}}{n^{3/2}}\right)\sqrt{\frac{\log\left(\frac{t(t+1)}{\delta}\right)}{2t}}\le \frac{C}{2}\min\left\{\frac{\eps^2t^2}{n^2},\eps^2\sqrt{\frac{t}{n}},\eps\right\},
\end{align*}
and this later condition is guaranteed by Lemma \ref{lemma-log} if 
\begin{align*}
t\ge N'_\eps:= \max \Bigg\{ \frac{2\log(1/\delta)}{C^2\eps^2} +\frac{8}{C^2\eps^2}\log\frac{2\log(1/\delta)}{C^2\eps^2} ,
\\\left( \frac{2n\log(1/\delta)}{C^2\eps^4} +\frac{4}{C^2\eps^4}\log\frac{2n\log(1/\delta)}{C^2\eps^4}\right)^{1/2}
\Bigg\}.
\end{align*}Therefore, \begin{align*}
\mathds{E}(\tau_1(U_n))\le N'_\eps.
\end{align*}
It remains to upper bound $\mathds{E}(\tau_2(\mathcal{D}'))$ for every distribution $\mathcal{D}'$ verifying $d=\TV(\mathcal{D}',U_n)>\eps$. Let $B_{opt}$ the smallest subset of $[n]$ such that $|B_{opt}|\le n/2$ and $|\mathcal{D}'(B_{opt})-U_n(B_{opt})|=d$. We make a case study on the size of $|B_{opt}|$:
\begin{itemize}
	\item If $|B_{opt}|>\sqrt{n\log(1/\delta)}$, we have :\begin{align*}
	\mathds{E}(\tau_2)&\le N'_d+\sum_{s\ge N'_d}\mathds{P}(\tau_2>s)
	\\&\le N'_d+       \sum_{t\ge N'_d-1}\mathds{P}\left( \TV(\tilde{\mathcal{D}'}_t,U_n)\le\mu_t(U_n)+4\min\left(1,\frac{t^{3/2}}{n^{3/2}}\right)\sqrt{\log\left(\frac{t(t+1)}{\delta}\right)/(2t)} \right)
	\\&\le N'_d+       \sum_{t\ge N'_d-1}\mathds{P}\Bigg( |\TV(\tilde{\mathcal{D}'}_t,U_n)-\mathds{E}(\TV(\tilde{\mathcal{D}'}_t,U_n))|\ge C\min\left\{\frac{d^2t^2}{n^2},d^2\sqrt{\frac{t}{n}},d\right\} \\&-4\min\left(1,\frac{t^{3/2}}{n^{3/2}}\right)\sqrt{\log\left(\frac{t(t+1)}{\delta}\right)/(2t)} \Bigg)
	\\&\le N'_d+       \sum_{t\ge N'_d-1}\mathds{P}\left( |\TV(\tilde{\mathcal{D}'}_t,U_n)-\mathds{E}(\TV(\tilde{\mathcal{D}'}_t,U_n))|\ge 4\min\left(1,\frac{t^{3/2}}{n^{3/2}}\right)\sqrt{\log\left(\frac{t(t+1)}{\delta}\right)/(2t)} \right)
	\\&\le N'_d+  \delta.
	\end{align*}
	\item If  $|B_{opt}|\le\sqrt{n\log(1/\delta)}$, we take $\alpha\in(0,1)$ and define  $N_\alpha$ as the minimum positive integer such that for all integers $t\ge N_\alpha$, $\max\{\phi(\delta,|B_{opt}|/n,t),\phi(\delta,1-|B_{opt}|/n,t)\}\le \alpha d.$ The existence of $N_\alpha$ is guaranteed since $\lim_{t\rightarrow \infty}\phi(B,t)=0$. We have $
	\max\{\phi(\delta,|B_{opt}|/n,N_\alpha),\phi(\delta,1-|B_{opt}|/n,N_\alpha)\}\le \alpha d$ and  $\max\{\phi(\delta,|B_{opt}|/n,N_\alpha-1),\phi(\delta,1-|B_{opt}|/n,N_\alpha-1)\}> \alpha d$ hence, 
	\begin{align*}
	\min\Bigg\{\KL\left(\frac{|B_{opt}|}{n}+\alpha d,\frac{|B_{opt}|}{n}\right),\KL\left(\frac{|B_{opt}|}{n}-\alpha d,\frac{|B_{opt}|}{n}\right)\Bigg\} \le \frac{\log\left(\frac{2^{n-1}(N-1)N}{\delta}\right)}{N-1}.
	\end{align*}
	Thus $\lim_{\delta\rightarrow 0}N =+\infty$ and from Lemma~\ref{lemma-log} we can deduce 
	\begin{align*}
	N\le \frac{\log(2^{n-1}/\delta)+4\log\left(\frac{\log(2^{n-1}/\delta)}{\min\{\KL(|B_{opt}|/n\pm\alpha d,|B_{opt}|/n)\}}\right)}{\min\{\KL(|B_{opt}|/n\pm\alpha d,|B_{opt}|/n)\}}+1\;.
	\end{align*}
	Therefore, 
	\begin{align*}\mathds{E}(\tau_2(\mathcal{D}'))&\le N_\alpha+ \sum_{t\ge N_\alpha}\mathds{P}(\tau_2(\mathcal{D}')\ge t)
	\\&\le N_\alpha+ \sum_{s\ge N_\alpha-1}\mathds{P}(|\tilde{\mathcal{D}'}_{s}(B_{opt})-|B_{opt}|/n|\le \phi(B_{opt},t))
	\\&\le N_\alpha+ \sum_{s\ge N_\alpha-1}\mathds{P}(|\tilde{\mathcal{D}'}_{s}(B_{opt})-\mathcal{D}'(B_{opt})|> d-\alpha d)\text{ (by definition of } N_\alpha)
	\\&\le N_\alpha+ \sum_{s\ge N_\alpha-1}e^{-2s((1-\alpha)d)^2} \text{ (Chernoff-Hoeffding's inequality)} 
	\\&\le N_\alpha+ \frac{e^{-2(N_\alpha-1)((1-\alpha)d)^2}}{1-e^{-2((1-\alpha)d)^2}}
	\\&\le\frac{\log(2^{n-1}/\delta)+4\log\left(\frac{\log(2^{n-1}/\delta)}{\min\{\KL(|B_{opt}|/n\pm\alpha d,|B_{opt}|/n)\}}\right)}{\min\{\KL(|B_{opt}|/n\pm\alpha d,|B_{opt}|/n)\}}+1+ \frac{e^{-2(N_\alpha-1)((1-\alpha)d)^2}}{1-e^{-2((1-\alpha)d)^2}}.
	\end{align*} After dividing by $\log(1/\delta)$, taking the limit $\delta\rightarrow0$ then $\alpha\rightarrow1$ we deduce finally: $$\limsup_{\delta\rightarrow 0} \frac{\mathds{E}(\tau_2(\mathcal{D}'))}{\log(1/\delta)}\le \frac{1}{\min\{\KL(|B_{opt}|/n\pm d,|B_{opt}|/n)\}}\underset{d\rightarrow 0}{\sim} \frac{2\frac{|B_{opt}|}{n}\left(1-\frac{|B_{opt}|}{n}\right)}{d^2}$$.
\end{itemize}
This is an interesting improvement especially when $|B_{opt}|\ll n$ but it is not sensitive to small fluctuations of $\mathcal{D}_i$ around $1/n$. For example the distribution $\{1/n+\eps,1/n,\dots,1/n,1/n-\eps\}$ whose $B_{opt}$ has size $1$ can easily transformed to a distribution with an  optimal set of size $n/2$ by adding small noise $\eta\ll \eps$ to $n/2-1$ parts among those of $1/n$ probability mass. Even though the transformed distribution has an optimal set of size $n/2$ (hence a large upper bound of the complexity ), \myalgo{alg-uniform} seems to stop on pretty the same time for both distributions. To overcome this inconvenient in this upper bound, we can use the same method to prove that for $B_d:=\{i: \mathcal{D}_i>(1+d)/n\}\subset B_{opt}$ we have: 
$$\limsup_{\delta\rightarrow 0} \frac{\mathds{E}(\tau_2(\mathcal{D}'))}{\log(1/\delta)}\le \frac{1}{\min\{\KL(|B_{d}|/n\pm d/2,|B_{d}|/n)\}}\underset{d\rightarrow 0}{\sim} \frac{8\frac{|B_{d}|}{n}\left(1-\frac{|B_{d}|}{n}\right)}{d^2}.$$
This upper bound has the advantage to count only the bigger parts of the distributions for which the noise is of the order $d/n$ at the cost of multiplying the upper bound by almost $4$.
\end{proof}

We plot in the same Figure \ref{fig:Bopt} the batch complexity and the sequential stopping time $\tau_2$ if the statistic takes into account the optimal set $B_{opt}$ and if not. It is clear that the proposed \myalgo{alg-uniform} is superior than the batch algorithm of \cite{diakonikolas2017optimal} especially when $|B_{opt}|$ is smaller than $\sqrt{n\log(1/\delta)}$ which is also exhibited in Figure \ref{fig:Bopt} for a specific example. 

We've exposed so far the advantages of sequential procedures over batch algorithms, they share the idea that if the tested distributions are far away we can hope to stop earlier. However one can wonder if there is a tangible improvement independent of distributions.
We focus on the worst case setting where we consider the eventual improvement of sequential procedures that works for all distributions. We show that we cannot hope to improve the dependency on $n$ found in the batch setting more than a constant and replacing $\eps$ by $\eps\vee \TV(\mathcal{D},U_n)$. For instance we can prove the following lower bounds for testing uniform.
 \begin{theorem}\label{theorem-lower-testing unif}
There is no stopping rule $T$ for the problem of testing $\mathcal{D}=U_n$ vs $\TV(\mathcal{D},U_n)>\eps$ with an error probability $\delta$ such that 
\begin{align*}
\mathds{P}&\left( \tau_2(T,\mathcal{D})\le  c\frac{\sqrt{n\log(1/3\delta)}}{\TV(\mathcal{D},U_n)^2}   \right) \ge 1-\delta\; \text{ if } \TV(\mathcal{D},U_n) >\eps \text{ and} 
\\\mathds{P}&\left( \tau_1(T,U_n)\le  c\frac{\sqrt{n\log(1/3\delta)}}{\eps^2}   \right) \ge 1-\delta\; ,
\end{align*}
where $c$ a universal constant. We have similar statement if we replace $\sqrt{n\log(1/3\delta)}$ by $\log(1/3\delta)$.
\end{theorem}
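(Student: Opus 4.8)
The plan is to argue by contradiction: assume a $\delta$-correct stopping rule $T$ satisfies \emph{both} high-probability fast-stopping guarantees, and show the universal constant $c$ must then be bounded below, which is false for $c$ small. The unifying device is to \emph{truncate} the (assumed fast) stopping time at the deterministic horizon $N$ appearing in each threshold, turning the sequential guarantees into an ordinary two-hypothesis inequality at a fixed sample size, on which I can run the data-processing inequality for $\KL$. Fix a hard far distribution $\mathcal{D}'$ at distance $d=\TV(\mathcal{D}',U_n)$ and set $N$ equal to the threshold $c\sqrt{n\log(1/3\delta)}/d^2$ (resp.\ $c\log(1/3\delta)/d^2$). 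Let $E$ be the $\mathcal{F}_N$-measurable event ``$T$ outputs $2$ using at most $N$ samples''. Combining the fast-stopping hypothesis $\mathds{P}_{\mathcal{D}'}(\tau_2\le N)\ge 1-\delta$ with $\delta$-correctness $\mathds{P}_{\mathcal{D}'}(\tau_2\le\tau_1)\ge 1-\delta$ gives $\mathds{P}_{\mathcal{D}'}(E)\ge 1-2\delta$, whereas $\delta$-correctness under $U_n$ gives $\mathds{P}_{U_n}(E)\le\delta$ since $E\subseteq\{\tau_2\le\tau_1\}$. Data processing then yields $\KL(1-2\delta,\delta)\le \KL\big((\mathcal{D}')^{\otimes N},U_n^{\otimes N}\big)$, and with the standard change-of-measure bound $\KL(1-2\delta,\delta)\ge\log(1/3\delta)$ (as already used in Lemma~\ref{lowerbound-kl-id} and Lemma~\ref{lowerbound-kl-clos}) it remains to upper bound the information accumulated in $N$ samples for a well-chosen instance. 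The claim ``for all far $\mathcal{D}$'' is handled by taking $d\downarrow\eps$ for the $\tau_1$ statement, so that $\eps\vee\TV$ is realized.

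\textbf{The $\log(1/3\delta)/(\eps\vee\TV)^2$ bound.} Here a single alternative suffices: spread the perturbation evenly, taking $\mathcal{D}'$ with half its masses equal to $(1+2d)/n$ and half to $(1-2d)/n$, so that $\TV(\mathcal{D}',U_n)=d$ and, by Lemma~\ref{lemma-kl}, $\KL(\mathcal{D}',U_n)\sim 2d^2$. The right-hand side above is $N\,\KL(\mathcal{D}',U_n)\sim 2d^2N$, so the inequality forces $N\ge \log(1/3\delta)/(2d^2)\,(1-o(1))$, contradicting the assumed threshold once $c<\tfrac12$. Wald's Lemma is the clean way to carry this change of measure through the \emph{random} stopping time $\tau=\min(\tau_1,\tau_2)$ rather than the truncated horizon: it identifies the expected log-likelihood ratio at $\tau$ with $\mathds{E}[\tau]\,\KL(\mathcal{D}',U_n)$, giving the same conclusion directly in terms of $\mathds{E}[\tau]$.

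\textbf{The $\sqrt{n\log(1/3\delta)}/(\eps\vee\TV)^2$ bound.} A single point cannot work, since every $\mathcal{D}'$ at distance $d$ has $\KL(\mathcal{D}',U_n)=\Omega(d^2)$, which only reproduces the previous bound. Instead I would use the Paninski family: pair the $n$ coordinates and, for $\sigma\in\{\pm1\}^{n/2}$, let $\mathcal{D}_\sigma$ put masses $(1\pm 2d\,\sigma_i)/n$ on pair $i$, so each $\mathcal{D}_\sigma$ is exactly at distance $d$ from $U_n$. A $\delta$-correct test must reject every $\mathcal{D}_\sigma$, hence reject the mixture $\bar{\mathcal{D}}=\mathds{E}_\sigma[\mathcal{D}_\sigma]$, giving $\mathds{P}_{\bar{\mathcal{D}}}(E)\ge 1-2\delta$ while still $\mathds{P}_{U_n}(E)\le\delta$. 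The key computation is the chi-square between the mixture of product laws and $U_n^{\otimes N}$: a second-moment calculation gives $\chi^2\big(\bar{\mathcal{D}}^{\otimes N},U_n^{\otimes N}\big)=\mathds{E}_{\sigma,\sigma'}\big[(1+\tfrac{8d^2}{n}\langle\sigma,\sigma'\rangle)^N\big]-1\le e^{16 d^4 N^2/n}-1$, using $\cosh(u)^{n/2}\le e^{nu^2/4}$. The step that avoids the saturation of total variation is Jensen's inequality in the form $\KL\big(\bar{\mathcal{D}}^{\otimes N},U_n^{\otimes N}\big)\le\log\big(1+\chi^2\big)\le 16 d^4 N^2/n$; inserting this into the data-processing bound gives $\log(1/3\delta)\le 16 d^4 N^2/n$, i.e.\ $N\ge \tfrac14\sqrt{n\log(1/3\delta)}/d^2$, contradicting the threshold for small $c$. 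The $\tau_1$ version follows by letting $d\downarrow\eps$.

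\textbf{Main obstacle.} The delicate point is the second bound. The naive convexity estimate $\KL\big(\bar{\mathcal{D}}^{\otimes N},U_n^{\otimes N}\big)\le\mathds{E}_\sigma\KL\big(\mathcal{D}_\sigma^{\otimes N},U_n^{\otimes N}\big)=N\,\mathds{E}_\sigma\KL(\mathcal{D}_\sigma,U_n)=\Theta(d^2N)$ discards exactly the mixture structure responsible for the $\sqrt n$ factor, while the pure chi-square/total-variation route saturates at a constant and never sees the $\log(1/\delta)$. Obtaining the correct $\sqrt{n\log(1/3\delta)}$ scaling hinges on using chi-square to exploit the mixture---the cross term $\langle\sigma,\sigma'\rangle$ concentrates and yields the $e^{d^4N^2/n}$ growth---and then passing back to $\KL$ through $\KL\le\log(1+\chi^2)$, so that the decision event still contributes $\KL(1-2\delta,\delta)\ge\log(1/3\delta)$. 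Care is also needed to make the truncation rigorous, in particular to check that the decision event is genuinely $\mathcal{F}_N$-measurable once the stopping rule is truncated at $N$, and to confirm that the same argument, applied to the event ``$T$ outputs $1$'' under $U_n$ against the family, delivers the $\tau_1$ statement.
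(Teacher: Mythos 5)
Your proposal is correct, and its skeleton coincides with the paper's: argue by contradiction, fix the horizon $m = c\sqrt{n\log(1/3\delta)}/d^2$, note that $\{\tau_2\le m\}$ is determined by the first $m$ samples and has probability at least $1-\delta$ under every member of the Paninski family (hence under the mixture) but at most $\delta$ under $U_n$, and apply data processing to force $\log(1/3\delta)\le \KL\bigl(\text{mixture of products}\,,\,U_n^{\otimes m}\bigr)$. Where you genuinely depart from the paper is in how that KL is bounded by $O(m^2d^4/n)$. The paper cites Section 3 of \cite{diakonikolas2016new} for the Poissonized bound $\KL(D^{\otimes Poi(m)},U_n^{\otimes Poi(m)})\le C m^2d^4/n$ and de-Poissonizes through Wald's lemma, whereas you run the Ingster second-moment computation directly at fixed sample size,
\begin{align*}
\chi^2+1=\mathds{E}_{\sigma,\sigma'}\Bigl[\bigl(1+\tfrac{8d^2}{n}\langle\sigma,\sigma'\rangle\bigr)^{N}\Bigr]\le \cosh\bigl(8d^2N/n\bigr)^{n/2}\le e^{16d^4N^2/n}\;,
\end{align*}
and pass back to KL via $\KL\le\log(1+\chi^2)$. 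Your route is self-contained, yields explicit constants, and sidesteps the weakest step in the paper's write-up: the displayed chain $\KL(D^{\otimes m},U_n^{\otimes m})=m\KL(D,U_n)=\dots$ cannot be read literally for a mixture (KL of a mixture does not tensorize, and the one-sample marginal of the Paninski mixture is exactly $U_n$, which would make $\KL(D,U_n)=0$); what the paper really needs is precisely the mixture-of-products bound that you prove. The price of your route is a computation in place of a citation.

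Two cautions. First, your notation carries the same trap you criticize: you set $\bar{\mathcal{D}}=\mathds{E}_\sigma[\mathcal{D}_\sigma]$, which as a distribution on $[n]$ \emph{is} $U_n$, so $\bar{\mathcal{D}}^{\otimes N}$ read literally is $U_n^{\otimes N}$ and the chi-square would vanish. The object you actually bound (correctly, as your Ingster formula shows) is the mixture of product laws $\mathds{E}_\sigma[\mathcal{D}_\sigma^{\otimes N}]$; the write-up must say so, since the whole $\sqrt{n}$ gain lives in the gap between these two objects --- the very point of your final paragraph. Second, the intersection with $\delta$-correctness that produces the $1-2\delta$ is unnecessary: by the consistency property of stopping rules, $\tau_2\le N<\infty$ already implies the rule outputs $2$ and never outputs $1$, so $\mathds{P}_{\mathcal{D}_\sigma}(\tau_2\le N)\ge 1-\delta$ can be used directly. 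This matters slightly, because your claimed inequality $\KL(1-2\delta,\delta)\ge\log(1/3\delta)$ fails for moderate $\delta$ (e.g.\ $\delta=0.2$); using $\KL(1-\delta,\delta)\ge\log(1/3\delta)$ as in the paper removes the issue.
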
 
This can be proven using pretty the same construction of distributions as for the batch lower bounds along with Wald's lemma. 
\begin{proof}
\label{sec:appendix-lower}

We prove only the first statement
, the others being similar. Suppose that such a stopping rule exists.
Let $d>\eps$ and $m=c\frac{\sqrt{n\log(1/3\delta)}}{d^2}$. Let $U_n$ the uniform distribution and  $D$ a uniformly chosen distribution where $D_{i}=\frac{1\pm2d}{n}$  with probability $1/2$ each. With the work of \cite{diakonikolas2016new} (Section 3), we can show that $\KL(  D^{\otimes Poi(m)}, U_n^{\otimes Poi(m)})\le C\frac{m^2d^4}{n}$ where $C$ is a constant. Therefore
\begin{align*}
\KL(  D^{\otimes m}, U_n^{\otimes m})&=m\KL(  D, U_n)
\\&=\mathds{E}(Poi(m))\KL(  D, U_n)
\\&=\KL(  D^{\otimes Poi(m)}, U_n^{\otimes Poi(m)}) \;\text{ (Wald's lemma)}
\\&\le C\frac{m^2d^4}{n}\;.
\end{align*}But \begin{align*}
\KL(  D^{\otimes m}, U_n^{\otimes m})&\ge \KL(\mathds{P}_D(\tau_2\le m),\mathds{P}_{U_n}(\tau_2\le m))
\\&\ge \KL(1-\delta,\delta)
\\&\ge \log(1/3\delta)\;,
\end{align*}
since $\mathds{P}_{D}(\tau_2\le m)\ge 1-\delta$ and $\mathds{P}_{U_n}(\tau_2\le m)=\mathds{P}_{U_n}(\tau_2\le m,\tau_1<\tau_2)+\mathds{P}_{U_n}(\tau_2\le m,\tau_1\ge \tau_2)\le \delta$.
Hence \begin{align*}
C\frac{\left(c\frac{\sqrt{n\log(1/3\delta)}}{d^2}\right)^2d^4}{n}\ge \log(1/3\delta)\;,
\end{align*}
which gives the contradiction if $c<1/\sqrt{C}$.

\end{proof}

    \begin{figure}[t!]    
    \begin{minipage}[c]{.5\linewidth}
\includegraphics[width=1\linewidth]{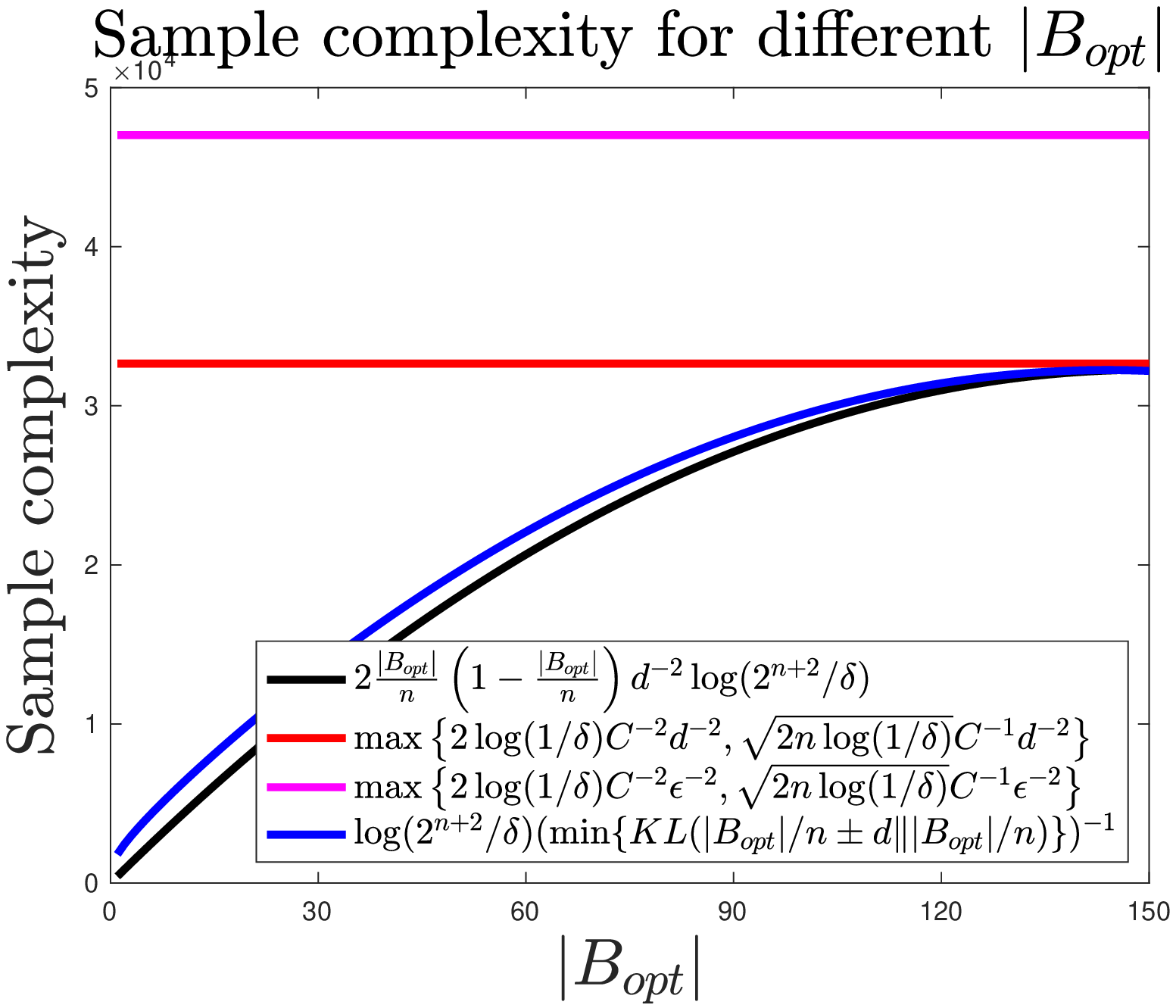}
   \end{minipage}
   \hspace*{-10pt}
\begin{minipage}[c]{.5\linewidth}
\includegraphics[width=1\linewidth]{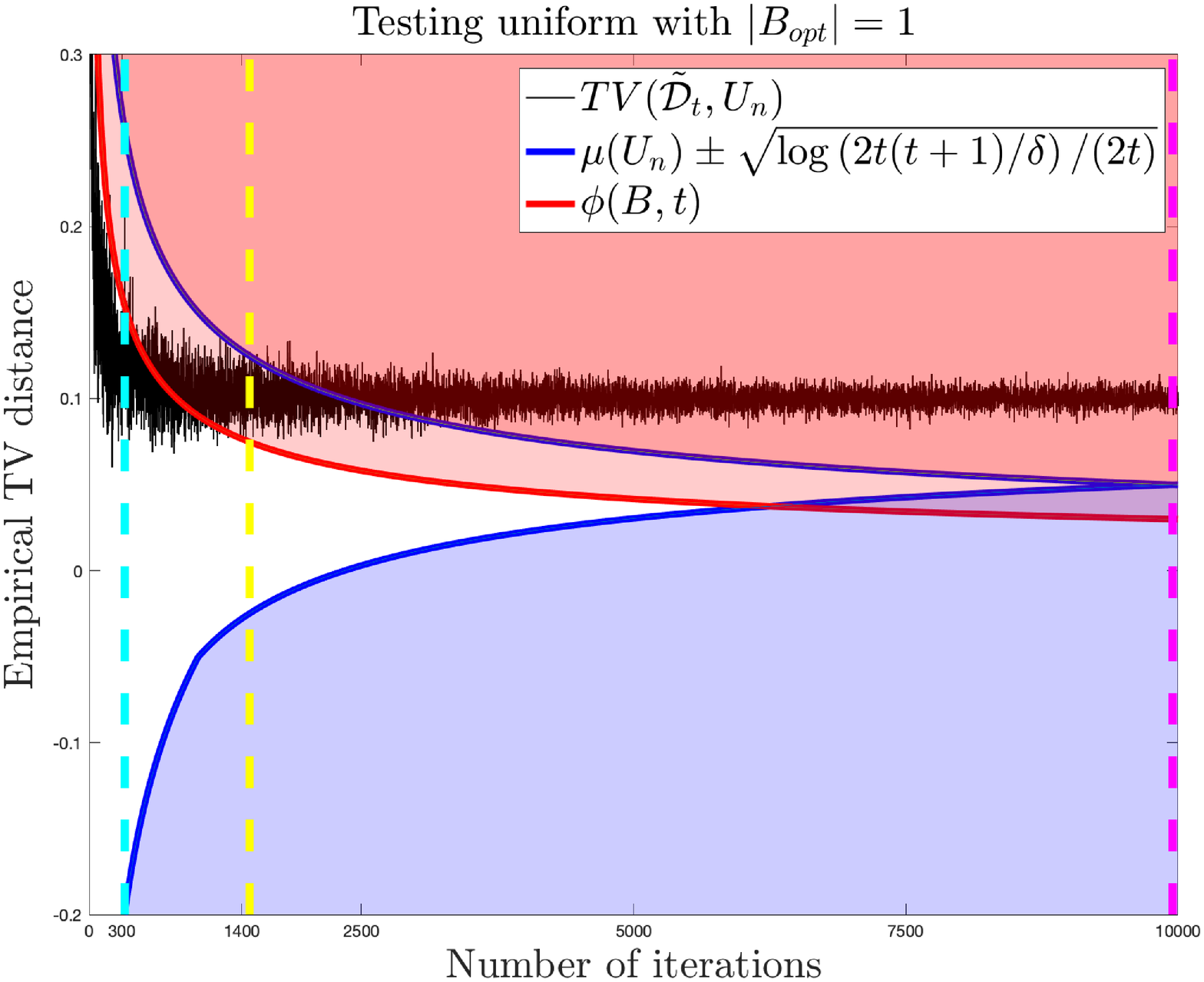}
   \end{minipage}

   \caption{Left: We compare different upper bounds on the sample complexity for testing uniform with $n=300, \eps=0.05, \delta=10^{-10}$ and $ d=\TV(\mathcal{D},U_n)=0.06$. We remark that for $|B_{opt}|<n/2$ we have better upper bounds. Right: Testing uniform for $n=10, \delta=10^{-10}, \TV(\mathcal{D}',U_n)=\eps=0.1$ and $|B_{opt}|=1$. We compare the empirical $\TV$ distance with different thresholds. The red zone corresponds to accepting $H_2$ while the blue one corresponds to accepting $H_1$. The magenta line identifies the batch threshold, the yellow line designates the sequential stopping time without looking on $B_{opt}$ and finally the cyan line defines the actual stopping time  of \myalgo{alg-uniform}.}
     \label{fig:Bopt}
\end{figure}

\section{Testing closeness-the general case}
In this section we consider testing closeness in the general case $n\ge 3$. Let us recall that we have  $\mathcal{D}_1$ and $\mathcal{D}_2$ two unknown distributions on $[n]$ and we want to distinguish $\mathcal{D}_1=\mathcal{D}_2$ and $\TV(\mathcal{D}_1,\mathcal{D}_2)>\eps$ with high probability $1-\delta$.  Inspired by the case of the Bernoulli distribution, we describe how to transform a batch algorithm to a sequential one with better expected sample complexity. 
In that case, however, identifying the sample complexity exactly remains out of reach: the dependency on $\eps$, $\delta$ and $n$ can be computed only up to a multiplicative constant.

\subsection{Batch setting}

Recently, \cite{diakonikolas2020optimal} have shown that the dependence on the error probability in the sample complexity of the closeness problem could be better than the $\log1/\delta$ found by repeating $\log1/\delta$ times the classical algorithm of \cite{chan2014optimal} and accepting or rejecting depending on the majority test.  More precisely: 

\begin{theorem}[\cite{diakonikolas2020optimal}]
\label{thm-test-clos}
$\Theta\left( \max\left( \frac{n^{2/3}\log^{1/3}(1/\delta)}{\eps^{4/3}},\frac{n^{1/2}\log^{1/2}(1/\delta)}{\eps^2},\frac{\log(1/\delta)}{\eps^2}\right)\right)$ samples are necessary and sufficient to test whether $\mathcal{D}_1= \mathcal{D}_2$ or $\TV(\mathcal{D}_1,\mathcal{D}_2)>\eps$ with an error $\delta>0$.
\end{theorem}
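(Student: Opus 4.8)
The plan is to prove matching upper and lower bounds, viewing the three terms in the maximum as the outcomes of three distinct regimes that must be balanced against one another; since the claim tracks the dependence on $\delta$ explicitly, both directions must carefully follow how the error probability enters. Because the statement is the worst-case (distribution-independent) complexity, the optimal statistic is not the per-coordinate Bernoulli test of the small-$n$ sections but a global collision/chi-square statistic.

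For sufficiency I would Poissonize the sampling: draw $\mathrm{Poi}(m)$ samples from each of $\cD_1,\cD_2$ so that the bin counts $X_i$ and $Y_i$ are independent Poissons across $i\in[n]$. Following \cite{chan2014optimal}, the natural statistic is
\begin{align*}
Z = \sum_{i=1}^n \frac{(X_i - Y_i)^2 - (X_i + Y_i)}{X_i + Y_i},
\end{align*}
which is (up to scaling) an unbiased estimator of $\sum_i (\cD_{1,i}-\cD_{2,i})^2/(\cD_{1,i}+\cD_{2,i})$, a chi-square surrogate for $\TV$. I would compute $\mathds{E}(Z)$ and $\mathrm{Var}(Z)$ under $H_1$ and $H_2$, lower bound the mean gap by relating the chi-square quantity to $\TV(\cD_1,\cD_2)^2$ through Cauchy--Schwarz, and then threshold $Z$. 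The three regimes fall out of balancing this gap against the variance: the $\log(1/\delta)/\eps^2$ term is the $n=2$ learning floor, the $n^{1/2}\log^{1/2}(1/\delta)/\eps^2$ term comes from the linear part of $\mathrm{Var}(Z)$, and the dominant $n^{2/3}\log^{1/3}(1/\delta)/\eps^{4/3}$ term from its quadratic part.

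The hard part is attaining the correct exponents in the $\delta$-dependence, and this is the technical heart of \cite{diakonikolas2020optimal}. A plain Chebyshev bound only yields constant error, while amplifying by a naive median-of-means over $\log(1/\delta)$ copies wastes a factor and produces $\log(1/\delta)$ rather than $\log^{1/3}(1/\delta)$ in the leading term. I would instead exploit the fact that, after Poissonization, $Z$ is a sum of independent, sub-gamma contributions, and apply a single Bernstein-type tail bound so that probability $1-\delta$ is bought by inflating $m$ only by the stated $\log$-powers.

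For necessity I would mirror the construction already used in Theorem~\ref{theorem-lower-testing unif}. Against the null $\cD_1=\cD_2$, take a random alternative in which bins are paired and each pair is perturbed by $\pm\Theta(\eps/n)$ with independent random signs, so that $\TV(\cD_1,\cD_2)=\eps$ while pairwise first moments match. Following the moment bounds of \cite{diakonikolas2016new}, Poissonization gives $\KL$ between the two Poissonized sample laws of order $m^2\eps^4/n$ (with analogous estimates controlling the $n^{1/2}$ and $\log(1/\delta)$ regimes). Finally, Wald's lemma converts the $m$-fold product divergence into $m\cdot\KL(\cD,\cD')$, and comparing with $\KL(1-\delta,\delta)\ge\log(1/3\delta)$ — the information needed to push the error below $\delta$ — forces $m$ to exceed the claimed maximum, yielding a contradiction below the threshold.
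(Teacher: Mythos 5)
Your lower-bound sketch is sound and matches the style of argument this paper itself uses for its sequential lower bounds (the paired $\pm\Theta(\eps/n)$ perturbation, the KL estimate $\mathcal{O}(m^2\eps^4/n)$ from \cite{diakonikolas2016new}, Poissonization, Wald's lemma, and data processing against $\KL(1-\delta,\delta)\ge\log(1/3\delta)$, as in Theorem~\ref{theorem-lower-testing unif}). The upper bound, however, has a genuine gap: you build it on the chi-square statistic $\sum_{i=1}^n \bigl((X_i-Y_i)^2-(X_i+Y_i)\bigr)/(X_i+Y_i)$ of \cite{chan2014optimal} and hope that a Bernstein-type tail bound on its Poissonized terms buys confidence $1-\delta$ at the cost of only the stated powers of $\log(1/\delta)$. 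This is precisely the route that \cite{diakonikolas2020optimal} show cannot deliver Theorem~\ref{thm-test-clos}: the chi-square statistic is provably suboptimal in the high-probability regime, and the present paper flags this explicitly right after Lemma~\ref{lemma-clos} (``for far distributions the lower bound on the expected value of the chi-square estimator does not allow the best dependency on $n$, $\eps$ and $\delta$''). The deficiency lies in the statistic itself, not in which concentration inequality you apply to it, so no refinement of the tail analysis can rescue your plan; thresholding this statistic forces essentially a full $\log(1/\delta)$ factor on the $n^{2/3}/\eps^{4/3}$ term rather than $\log^{1/3}(1/\delta)$.

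What the actual proof does instead --- and note that this paper does not reprove the theorem but imports it, summarizing the key ingredient as Lemma~\ref{lemma-clos} --- is to split the samples into two halves and use the absolute-difference statistic $Z=\sum_{i=1}^n |X_i-Y_i|+|X'_i-Y'_i|-|X_i-X'_i|-|Y_i-Y'_i|$. This statistic has three properties your chi-square surrogate lacks simultaneously: its expectation is exactly $0$ under $H_1$; under $H_2$ its expectation is at least $C\min\bigl(kd,\,k^2d^2/n,\,k^{3/2}d^2/\sqrt{n}\bigr)-c\sqrt{k}$; and it has $(2,\dots,2)$-bounded differences in the individual samples, so a single application of McDiarmid's inequality gives sub-Gaussian deviations of order $\sqrt{k\log(1/\delta)}$ with no heavy-tail loss. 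Equating the expectation gap with this one deviation term is what produces the three regimes with the exponents $1/3$, $1/2$, $1$ on $\log(1/\delta)$ --- so your framing of the regimes as coming from ``the linear versus quadratic parts of $\mathrm{Var}(Z)$'' should also be replaced by: the three branches of the mean gap balanced against a single sub-Gaussian fluctuation.
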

The main ingredient of a closeness tester is an efficient test statistic which can distinguish between the two hypothesis. Let us define by $X_i$ (resp. $Y_i$) the number of samples from $\mathcal{D}_1$ (resp. $\mathcal{D}_2$) whose values are equal to $i\in[n]$. Thinking to the $\TV$ distance we use, we could be tempted to take a decision based on the statistic $\sum_{i=1}^n |X_i-Y_i|$. However this simple statistic suffers from a principal caveat: its expected value is neither zero nor easily lower bounded when $\mathcal{D}_1=\mathcal{D}_2$. As a remedy,  \cite{diakonikolas2020optimal} propose to use the following statistic:  $Z= \sum_{i=1}^n |X_i-Y_i|+|X'_i-Y'_i|-|X_i-X'_i|-|Y_i-Y'_i|,$
where $X_i'$ and  $Y_i'$ correspond to a second set of independent samples. 
The expected value of the estimator $Z$ is obviously $0$ when the distributions $\mathcal{D}_1$ and  $\mathcal{D}_2$ are equal. On the other hand when $\TV(\mathcal{D}_1,\mathcal{D}_2)>\eps$, they provide a lower bound on the expected value of the estimator $Z$ which enable to  test closeness between $\mathcal{D}_1$ and $\mathcal{D}_2$. Since these results turn out to be similarly useful in our subsequent analysis, we summarized them in the following lemma.
\begin{lemma}[\cite{diakonikolas2020optimal}]
\label{lemma-clos} Let $d=\TV(\mathcal{D}_1,\mathcal{D}_2)$. Let $k\ge1$ and  $(k_1,k_2,k_1',k_2')\sim Multinom(4k,(1/4,1/4,1/4,1/4))$. Let $(X_i)_{i=1}^{k_1}$ and $(X'_i)_{i=1}^{k'_1}$ two sets of i.i.d. samples from $\mathcal{D}_1$ and  $(Y_i)_{i=1}^{k_2}$ and $(Y'_i)_{i=1}^{k'_2}$ two sets of i.i.d. samples from $\mathcal{D}_2$. Then there are universal constants  $c$ and $C$ such that
\begin{itemize}
\item If $\mathcal{D}_1=\mathcal{D}_2$, $\mathds{E}[Z]=0$.
\item If $\TV(\mathcal{D}_1,\mathcal{D}_2)>\eps$, $
\mathds{E}[Z] \ge C\min\left\{kd, \frac{k^2d^2}{n},\frac{k^{3/2}d^2}{\sqrt{n}} \right\}-c\sqrt{k}$. 
\end{itemize}
\end{lemma}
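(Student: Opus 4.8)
The plan is to reduce everything to a per-symbol computation and then aggregate. Write $p_i=\mathcal{D}_1(i)$, $q_i=\mathcal{D}_2(i)$, and keep $X_i,X_i',Y_i,Y_i'$ for the histogram counts of symbol $i$ in the four sample groups, so that by linearity $\mathds{E}[Z]=\sum_{i=1}^n\mathds{E}[g_i]$ with $g_i=|X_i-Y_i|+|X_i'-Y_i'|-|X_i-X_i'|-|Y_i-Y_i'|$. The only structural inputs I would use are: (i) the four groups are exchangeable, since they come from a $\mathrm{Multinom}(4k,(1/4,1/4,1/4,1/4))$ split, so marginally $X_i\stackrel{d}{=}X_i'$, $Y_i\stackrel{d}{=}Y_i'$ and jointly $(X_i,Y_i)\stackrel{d}{=}(X_i',Y_i')$; and (ii) after Poissonizing the group sizes the counts become independent across $i$ with $X_i,X_i'\sim\mathrm{Poi}(kp_i)$ and $Y_i,Y_i'\sim\mathrm{Poi}(kq_i)$, at the cost of an additive $O(\sqrt k)$ error in $\mathds{E}[Z]$ from the fluctuation of the group sizes around $k$. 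Fact (i) already gives $\mathds{E}[g_i]=2\,\mathds{E}|X_i-Y_i|-\mathds{E}|X_i-X_i'|-\mathds{E}|Y_i-Y_i'|$.

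The equal case is then immediate: if $p_i=q_i$ all four groups are fully exchangeable, so $(X_i,Y_i)\stackrel{d}{=}(X_i,X_i')$ and the three expectations above coincide, giving $\mathds{E}[g_i]=0$ and hence $\mathds{E}[Z]=0$ exactly. Note this holds already in the multinomial model, so no Poissonization and no $\sqrt k$ slack enter in this direction.

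The core of the far case is a single-symbol estimate. Setting $\lambda_i=kp_i$, $\mu_i=kq_i$ and invoking the Cram\'er (energy-distance) identity $2\mathds{E}|A-B|-\mathds{E}|A-A'|-\mathds{E}|B-B'|=2\int_{\mathbb{R}}(F_A(t)-F_B(t))^2\,dt$ for independent copies, fact (i) rewrites $\mathds{E}[g_i]=2\int_{\mathbb{R}}\big(F_{\lambda_i}(t)-F_{\mu_i}(t)\big)^2\,dt$, where $F_\lambda$ is the cdf of $\mathrm{Poi}(\lambda)$. I would then lower-bound this integral by $c'\min\big(|\lambda_i-\mu_i|,\,(\lambda_i-\mu_i)^2/(\sqrt{\lambda_i+\mu_i}+1)\big)$: when $|\lambda_i-\mu_i|$ exceeds the Skellam standard deviation $\sqrt{\lambda_i+\mu_i}$ the two cdfs are nearly disjoint shifts and the integral is $\gtrsim|\lambda_i-\mu_i|$; otherwise $F_{\lambda_i}-F_{\mu_i}\approx(\lambda_i-\mu_i)f$ with $f$ the Poisson pmf envelope, whence $\int(F_{\lambda_i}-F_{\mu_i})^2\approx(\lambda_i-\mu_i)^2\int f^2\asymp(\lambda_i-\mu_i)^2/\sqrt{\lambda_i+\mu_i}$, truncated at denominator $1$ when $\lambda_i+\mu_i<1$. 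Controlling these Poisson cdf/pmf local estimates is the step I expect to be the main obstacle.

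Finally I would aggregate. Summing the per-symbol bound and using $\sum_i|p_i-q_i|=2d$ and $\sum_i(p_i+q_i)=2$ over $n$ symbols, the three regimes fall out by elementary inequalities: the linear branch gives $\sum_i k|p_i-q_i|=2kd$; the small-difference branch with denominator $\sqrt{\lambda_i+\mu_i}$ gives, via Cauchy--Schwarz on $\sum_i|p_i-q_i|=\sum_i\frac{|p_i-q_i|}{(p_i+q_i)^{1/4}}(p_i+q_i)^{1/4}$ together with $\sum_i\sqrt{p_i+q_i}\le\sqrt{2n}$, a bound $\gtrsim k^{3/2}d^2/\sqrt n$; and the branch where $k(p_i+q_i)<1$ (denominator truncated at $1$) gives, via Cauchy--Schwarz $\sum_i(p_i-q_i)^2\ge(2d)^2/n$, a bound $\gtrsim k^2d^2/n$. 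Taking the worst of the three regimes and subtracting the $O(\sqrt k)$ Poissonization slack yields $\mathds{E}[Z]\ge C\min\big(kd,\,k^{3/2}d^2/\sqrt n,\,k^2d^2/n\big)-c\sqrt k$, as claimed.
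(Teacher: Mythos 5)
You should first note that the paper contains no proof of this statement: Lemma~\ref{lemma-clos} is imported as a black box from \cite{diakonikolas2020optimal}, the only hint given being that ``the lower bound on the expectation of $Z$ is obtained by a technique of Poissonization.'' Your blind reconstruction follows exactly that route, and its structure is sound: the exchangeability of the four multinomial groups gives $\mathds{E}[Z]=0$ exactly when $\mathcal{D}_1=\mathcal{D}_2$ (correctly, with no Poissonization needed in that direction); the coupling argument bounding the de-Poissonization cost by $2\,\mathds{E}|N-4k|=O(\sqrt{k})$ is valid because adding or removing one sample changes $Z$ by at most $2$; the Cram\'er/energy-distance identity $2\mathds{E}|A-B|-\mathds{E}|A-A'|-\mathds{E}|B-B'|=2\int(F_A-F_B)^2\,dt$ becomes applicable precisely because Poissonization makes the four per-symbol counts independent; and the aggregation (one of the three classes of bins must carry a constant fraction of the total-variation mass, then Cauchy--Schwarz within that class) correctly produces the minimum of $kd$, $k^{3/2}d^2/\sqrt{n}$ and $k^2d^2/n$. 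The one step you leave at sketch level --- the per-bin estimate $\int(F_\lambda-F_\mu)^2\,dt\ge c'\min\bigl(|\lambda-\mu|,\,(\lambda-\mu)^2/(\sqrt{\lambda+\mu}+1)\bigr)$ --- is indeed the technical core of the cited result; it is true, and your heuristic can be made rigorous via the identity $\partial_\lambda F_\lambda(t)=-\mathds{P}(\mathrm{Poi}(\lambda)=\lfloor t\rfloor)$, which writes $F_\mu(t)-F_\lambda(t)=\int_\mu^\lambda\mathds{P}(\mathrm{Poi}(s)=\lfloor t\rfloor)\,ds$ and lets you handle your two regimes (bulk perturbation when $|\lambda-\mu|\lesssim\sqrt{\lambda+\mu}$, separated shifts otherwise) with elementary Poisson pmf bounds. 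So there is no structural gap: your proposal is essentially the proof of the source paper, which the present paper merely cites.
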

The lower bound on the expectation of $Z$ is obtained by a technique of Poissonization. Note that this lower bound is stronger than the one obtained for the $\textit{chi}$-square estimator; $\sum_{i=1}^n \frac{(X_i-Y_i)^2-X_i-Y_i}{X_i+Y_i}$, used by \cite{chan2014optimal}. In fact, for far distributions the lower bound on the expected value of the $\textit{chi}$-square estimator  does not allow the best dependency on $n$, $\eps$ and $\delta$. This lemma is the key ingredient behind the batch algorithm. Indeed, for sufficiently large $k=\Omega\left(\max\left( \frac{n^{2/3}\log^{1/3}(1/\delta)}{\eps^{4/3}},\frac{n^{1/2}\log^{1/2}(1/\delta)}{\eps^2},\frac{\log(1/\delta)}{\eps^2}\right)\right)$, \cite{diakonikolas2020optimal} show that  $\mathds{E}[Z] \ge C''\sqrt{k\log1/\delta}$ for a universal constant $C''$ if $\TV(\mathcal{D}_1,\mathcal{D}_2)>\eps$, then by applying McDiarmid's inequality they prove that the algorithm consisting of returning $H_2$ if $Z \ge C''\sqrt{k\log1/\delta}/2$ 
and returning $H_1$ otherwise is $\delta$-correct. In the following   we draw our inspiration from them to design a sequential algorithm for testing closeness.

\subsection{Sequential setting}
In this section we present how the sequential setting can improve the sample complexity found in the batch setting. We base our sequential tester on the same test statistic $Z$ as \cite{diakonikolas2020optimal}, but we allow the stopping rules of this new algorithm to be time-dependent. When the distributions to be tested $\mathcal{D}_1$ and $\mathcal{D}_2$ are equal, the estimator $Z_t$ cannot be very large, and if they are $\eps$-far the estimator cannot be very small: at each step, the tester compares $Z_t$ to some well chosen thresholds. If she cannot decide with sufficient confidence, she asks for more samples. This can possibly last until the two regions of decision meet. This time has the order of the complexity of the batch algorithm. 
This tester is formally defined in  \myalgo{alg-clos-gen}. For sake of simplicity, let us denote by $\Delta_t=C \min\left\{t\eps, \frac{t^2\eps^2}{n},\frac{t^{3/2}\eps^2}{\sqrt{n}} \right\}-c\sqrt{t}$ and by $\Psi_t =2\sqrt{2t\log\left(\frac{\pi^2}{3\delta}\right)+4et\log(\log(4t)+1)}$. The stopping times  $\tau_1$ and $\tau_2$  of \myalgo{alg-clos-gen} are then defined by 
\begin{align*}
\tau_1=\inf\left\{ t\ge 1 :     \left|Z_t\right|\le \Delta_t -\Psi_t\right\}, \text{ and } 
\tau_2=\inf\left\{ t\ge 1 :     \left|Z_t\right|>\Psi_t\right\}\;.
\end{align*}
We prove now that \myalgo{alg-clos-gen} is $\delta$-correct and then study its sample complexity. 
\begin{algorithm}[t]
\caption{Distinguish between  $\mathcal{D}_1=\mathcal{D}_2$ and $\TV(\mathcal{D}_1,\mathcal{D}_2)>\eps$ with high probability}
\label{algo:alg-clos-gen}
\begin{algorithmic}
\REQUIRE $A_1,\dots$ samples from $\mathcal{D}_1$ and $B_1,\dots$ samples from $\mathcal{D}_2$ 
\ENSURE Accept if  $\mathcal{D}_1=\mathcal{D}_2$  and Reject if  $\TV(\mathcal{D}_1,\mathcal{D}_2)>\eps$  with probability of error less than $\delta$
\STATE $t=1$,  $W=1$
\WHILE{$W=1$}
\STATE $(m_{1,t},m_{1,t}',m_{2,t},m_{2,t}')\sim Multinom(4t,(1/4,1/4,1/4,1/4)) $ 
 \begin{align*}Z_t= \sum_{i=1}^n |X_i-Y_i|+|X'_i-Y'_i|-|X_i-X'_i|-|Y_i-Y'_i|\;, 
\end{align*} where $X_i $(resp. $X_i',Y_i,Y_i')$ are the numbers of $i$'s in the word formed with $m_{1,t}$ (resp. $m_{1,t}',m_{2,t},m_{2,t}'$) samples from $\mathcal{D}_1$ (resp. $\mathcal{D}_1, \mathcal{D}_2,\mathcal{D}_2$) . We need only to sample the difference of $(m_{1,t}-m_{1,t-1})^+ +(m_{1,t}'-m_{1,t-1}')^+ $ from $\mathcal{D}_1$ and $(m_{2,t}-m_{2,t-1})^+ +(m_{2,t}'-m_{2,t-1}')^+ $ from $\mathcal{D}_2$.
  \IF{$\left|Z_t\right|>2\sqrt{2t\log\left(\frac{\pi^2}{3\delta}\right)+4et\log(\log(4t)+1)}$}
    \STATE $W=0$
    \RETURN 2
  \ELSIF{$\left|Z_t\right|\le C\min\left\{t\eps, \frac{t^2\eps^2}{n},\frac{t^{3/2}\eps^2}{\sqrt{n}} \right\}-c\sqrt{t} -2\sqrt{2t\log\left(\frac{\pi^2}{3\delta}\right)+4et\log(\log(4t)+1)}$}
    \STATE $W=0$
    \RETURN 1
  \ELSE \STATE $t=t+1$
  \ENDIF
\ENDWHILE
\end{algorithmic}
\end{algorithm}

\subsubsection{Correctness}

We prove here that \myalgo{alg-clos-gen} has an error probability less than $\delta$. The proof relies on the following uniform concentration lemma for $Z_t$:
\begin{lemma}
\label{lem:mcdiarmid}
For $\eta, s>1$, let  $J(\eta,s,t)= \sqrt{2\eta ts\log\left(\frac{\log(t)}{\log(\eta)}+1\right)+2t\log\left(\frac{2\zeta(s)}{\delta}\right)}$ , where $ \zeta(s)=    \sum_{n\geqslant 1} \frac{1}{n^s}$. Then 
\begin{align*}
\mathds{P}\left(\exists t\ge 1 :  \left|Z_t-\mathds{E}[Z_t]\right|>J(\eta,s,4t)\right)\le \delta\;.
\end{align*}
\end{lemma}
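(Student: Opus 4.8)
The plan is to combine a fixed-time bounded-differences estimate for $Z_t$ with a time-uniform ``stitching'' argument in the spirit of \cite{howard2018uniform,howard2020time}, the geometric-epoch structure being exactly what produces the $\log\log$ factor, the power $s$, and the $\zeta(s)$ normalisation appearing in $J$.

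First I would record the one-step structure of the statistic. Writing $Z_t=\sum_{i=1}^n |X_i-Y_i|+|X_i'-Y_i'|-|X_i-X_i'|-|Y_i-Y_i'|$, passing from $t$ to $t+1$ adds exactly four fresh multinomial trials (one into each of the four sub-samples on average, since the total goes from $4t$ to $4(t+1)$). Moving a single sample from value $a$ to value $b$ changes two counts by one, and hence alters at most four of the summands by at most one each, so $Z_t$ has bounded differences $\le 4$ per sample. Conditioning on the multinomial allocation (or Poissonising as in Lemma~\ref{lemma-clos}) makes the counts functions of independent samples, so that, at each fixed $t$, McDiarmid's inequality shows $Z_t-\mathds{E}[Z_t]$ is sub-Gaussian with variance proxy of order $t$. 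More precisely, letting $\mathcal F_s$ be the filtration generated by the samples through step $s$, the martingale increments $D_s:=(Z_s-Z_{s-1})-\mathds{E}[Z_s-Z_{s-1}\mid\mathcal F_{s-1}]$ are conditionally bounded and conditionally sub-Gaussian with $O(1)$ variance proxy; hence $\exp(\lambda M_t-\lambda^2 V_t/2)$ is an exponential supermartingale, where $M_t=\sum_{s\le t}D_s$ and $V_t\le c_0(4t)$ (here $c_0$ absorbs the factor $4$ from the bounded difference and the four trials added per step).

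Next I would upgrade to a statement that holds simultaneously for all $t$. I would partition the integers into geometric epochs $I_k=\{t:\eta^{k-1}\le t<\eta^k\}$ and allocate the error budget $\delta_k=\delta\,k^{-s}/\zeta(s)$ to epoch $k$, so that $\sum_k\delta_k=\delta$, the two-sided control of $|Z_t-\mathds{E}[Z_t]|$ doubling this and thereby giving the $2\zeta(s)$ in $J$. On $I_k$ the intrinsic time stays within a factor $\eta$ of its left-endpoint value, so Ville's maximal inequality applied to the supermartingale above (equivalently the line-crossing inequality of \cite{howard2020time}) bounds $\sup_{t\in I_k}|M_t|$ by a threshold of order $\sqrt{\eta\,V_t\,\log(1/\delta_k)}$. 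Since the epoch containing $t$ has index $k\approx \log t/\log\eta$, one has $\log(1/\delta_k)=s\log k+\log(\zeta(s)/\delta)\approx s\log\!\big(\tfrac{\log t}{\log\eta}+1\big)+\log\!\big(\tfrac{\zeta(s)}{\delta}\big)$; substituting $V_t\le c_0(4t)$ reproduces the two terms of $J(\eta,s,4t)$, and summing the epoch failure probabilities gives total probability at most $\delta$.

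The main obstacle is that $Z_t-\mathds{E}[Z_t]$ is \emph{not} itself a martingale in $t$: the one-step conditional mean $\mathds{E}[Z_s-Z_{s-1}\mid\mathcal F_{s-1}]$ is a genuinely random compensator (it depends on the current signs of $X_i-Y_i$, etc.), so the clean supermartingale lives on the martingale part $M_t$ rather than on $Z_t-\mathds{E}[Z_t]$ directly. The delicate step is therefore to set up the Doob decomposition, verify that $M_t$ is a sub-$\psi$ process with intrinsic time $4t$ in the sense required by \cite{howard2020time}, and reconcile the fixed-$t$ McDiarmid control of $Z_t-\mathds{E}[Z_t]$ with the stitched boundary obtained for $M_t$ — i.e.\ to check that the predictable compensator does not spoil the concentration and that the constants hidden in $c_0$ and in the epoch-inflation factor $\eta$ can be absorbed to yield precisely the stated $J$. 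Calibrating these constants and justifying this reconciliation, rather than any conceptual difficulty in the peeling itself, is where the real work lies.
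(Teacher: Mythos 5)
Your high-level architecture is exactly the paper's: geometric epochs of ratio $\eta$, an error budget proportional to $(k+1)^{-s}/\zeta(s)$ on the $k$-th epoch (which is where the $2\zeta(s)$ and the $s\log\log$ terms in $J$ come from), and within each epoch a linear boundary calibrated by McDiarmid-type sub-Gaussianity with variance proxy proportional to the $4t$ coordinates. Your bookkeeping for the epoch index and the resulting form of $J(\eta,s,4t)$ is consistent with the paper's computation, which takes $I_k=[\eta^k,\eta^{k+1})$, $b_k=\tfrac12\log\bigl(2\zeta(s)(k+1)^s/\delta\bigr)$, and sums $2e^{-2b_k}\le \delta$.

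The problem is that your proposal is incomplete at precisely the step that carries the proof, and you concede this yourself. The paper never performs a Doob decomposition: it applies a time-uniform line-crossing inequality \emph{directly to the centered statistic}, namely $\mathds{P}\left(\exists t\ge 1: |Z_t-\mathds{E}[Z_t]|\ge a+4bt/a\right)\le 2e^{-2b}$, presented as a consequence of McDiarmid's inequality in the style of \cite{howard2018uniform}, and then chooses $(a_k,b_k)$ so that the line $a_k+4b_k t/a_k$ sits below $J(\eta,s,4t)$ throughout $I_k$. Your route instead builds an exponential supermartingale only for the martingale part $M_t$ of a Doob decomposition, and the transfer back to $Z_t-\mathds{E}[Z_t]$ is deferred as ``where the real work lies.'' That deferral is a genuine gap, not a calibration issue: the compensator $A_t=Z_t-\mathds{E}[Z_t]-M_t$ is a sum of $t$ predictable terms, each of size $O(1)$, so a priori it can be of order $t$; Ville/Azuma arguments cannot be applied to a predictable sum, and its magnitude depends on the evolving configuration of count differences $X_i-Y_i$, $X_i-X_i'$, etc., which is exactly the hard object. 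Controlling $A_t$ is essentially as hard as the lemma itself, so as written the plan does not yield the stated bound. To complete it you must bypass the decomposition and establish the within-epoch maximal inequality for $|Z_t-\mathds{E}[Z_t]|$ itself, which is the primitive the paper's proof (tersely) invokes. Two minor further discrepancies, both harmless: the paper takes the bounded-differences constant to be $2$ per coordinate on $4t$ coordinates rather than your $4$, and it does not need any conditional sub-Gaussianity of one-step increments, only the per-coordinate bounded-difference property of each fixed-time function $Z_t$.
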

The proof of this lemma is inspired from \cite{howard2018uniform} and relies on dividing the set of integers into some well chosen subsets, applying union bound and finally invoking McDiarmid's inequality with specific arguments for each interval. 
Note that Lemma~\ref{lem:mcdiarmid} yields the best second order term in the complexity (up to constant factor), in contrast to a simple union bound on McDiarmid's inequality. We do not use this feature in our study of the sample complexity of the testing closeness problem as we are interested here in leading terms only (see Theorem~\ref{clos-gen-compl}). However, the $\log-\log$ dependency proves useful when showing that \myalgo{alg-clos-gen}  used with  $\eps=0$ obtains the optimal sample complexity for testing $\mathcal{D}_1=\mathcal{D}_2$ vs $\mathcal{D}_1\neq\mathcal{D}_2$ (see Theorem~\ref{thm:epszero}).
\begin{proof}
The proof uses similar arguments of \cite{howard2018uniform}.  Actually $Z_t$ is a function of $4t$ variables (the samples from the distributions) and has the property $(2,\dots,2)$-bounded differences. McDiarmid's inequality implies $\mathds{P}\left(\exists t\ge 1 :  \left|Z_t-\mathds{E}[Z_t]\right|\ge a+4bt/a\right)\le 2e^{-2b}$, taking the intervals $I_k=[\eta^k,\eta^{k+1})$ for $k$ integer we deduce for $b_k=\frac{1}{2}\log\left(\frac{2(k+1)^s}{\zeta(s)^{-1}\delta}\right)$ and  $a_k=\frac{b_k}{a_k}\eta^{k+1}$  that 
\begin{align*}
\mathds{P}\left(\exists t\ge 1 :  \left|Z_t-\mathds{E}[Z_t]\right|\ge J(\eta,s,4t)\right)&\le \sum_{k\ge 0}\mathds{P}\left(\exists t\in I_k :  \left|Z_t-\mathds{E}[Z_t]\right|\ge J(\eta,s,4t)\right)
\\&\le \sum_{k\ge 0}\mathds{P}\left(\exists t\in I_k :  \left|Z_t-\mathds{E}[Z_t]\right|\ge a_k+4b_kt/a_k\right)
\\&\le \sum_{k\ge 0} 2e^{-2b_k}
\le \sum_{k\ge 0} \delta\frac{\zeta(s)^{-1}}{(k+1)^s}
\le  \delta\;.
\end{align*}

\end{proof}


For $\eta=e$ and $s=2$, the function $J$ becomes $J(e,2,4t)=\Psi_t$ and we can use  Lemma~\ref{lem:mcdiarmid} to prove the correctness of \myalgo{alg-clos-gen} as sketched below:
\begin{itemize}
\item If $\mathcal{D}_1=\mathcal{D}_2$, using Lemma~\ref{lem:mcdiarmid}, the probability of error can be bounded as:
\begin{align*}
\mathds{P}\left(\tau_2\le \tau_1\right)\le  \mathds{P}\left(\exists t \ge 1 : \left|Z_t\right|>\Psi_t\right) \le \delta\;.
\end{align*}
\item If $\TV(\mathcal{D}_1,\mathcal{D}_2)>\eps$, the probability of error can be bounded as:
\begin{align*}
\mathds{P}\left(\tau_1\le \tau_2\right)& =\mathds{P}\left(\exists t \ge 1 : \left|Z_t\right|\le\Delta_t -\Psi_t\right) \overset{(i)}{\le}\mathds{P}\left(\exists t \ge 1 : \left|Z_t-\mathds{E}(Z_t)\right|\ge\mathds{E}(Z_t)-\Delta_t +\Psi_t\right) 
\\& \overset{(ii)}{\le} \mathds{P}\left(\exists t \ge 1 : \left|Z_t-\mathds{E}(Z_t)\right|\ge \Psi_t\right) \overset{(iii)}{\le} \delta\;.
\end{align*}
where $(i)$ follows from the triangular inequality $\left|Z_t-\mathds{E}(Z_t)\right| \geq \mathds{E}(Z_t) - Z_t$,  $(ii)$ follows by the fact that $\mathds{E}(Z_t)\geq \Delta_t$ from Lemma~\ref{lemma-clos} and  $(iii)$ follows from Lemma~\ref{lem:mcdiarmid}.
\end{itemize}



\subsubsection{Complexity}
In order to show the advantage of our sequential algorithm, we need to upper bound the expectations of the stopping times $\tau_1$ and $\tau_2$. This is done in the following theorem: 

\begin{theorem}\label{clos-gen-compl}
Let $d=\TV(\mathcal{D}_1,\mathcal{D}_2)$. The sample complexity of \myalgo{alg-clos-gen} satisfies 

\begin{itemize}
\item If $\mathcal{D}_1=\mathcal{D}_2$, $\mathds{E}(\tau_1(T,\mathcal{D}_1,\mathcal{D}_2) )\le 2N_\eps.$
\item If $\TV(\mathcal{D}_1,\mathcal{D}_2)>\eps$, $\mathds{E}(\tau_2(T,\mathcal{D}_1,\mathcal{D}_2)) \le 2N_d$.
\end{itemize}
where for all $\eta>0$, $N_\eta$ is defined by 
\begin{align*}
N_\eta=\max\Bigg\{
 \frac{128}{C^2}\frac{\log(\frac{\pi^2}{3\delta})}{\eta^2}+\frac{512e}{C^2\eta^2}\log\left(\log\left(\frac{128\log(\frac{\pi^2}{3\delta})}{\eta^2C^2}\right)+1\right)+\frac{16c^2}{C^2\eta^2},
\\ \left( \frac{128}{C^2}\frac{n^2\log(\frac{\pi^2}{3\delta})}{\eta^4}+\frac{512en^2}{C^2\eta^4}\log\left(   \log\left(   \frac{128}{C^2}\frac{n^2\log(\frac{\pi^2}{3\delta})}{\eta^4}    \right)+1    \right)+\frac{16c^2n^2}{\eta^4C^2}\right)^{1/3},
\\ \left( \frac{128}{C^2}\frac{n\log(\frac{\pi^2}{3\delta})}{\eta^4}+\frac{512en}{C^2\eta^4}\log\left(   \log\left(   \frac{128}{C^2}\frac{n\log(\frac{\pi^2}{3\delta})}{\eta^4}    \right)+1    \right)+\frac{16c^2n}{\eta^4C^2}\right)^{1/2}
\Bigg\}, 
\end{align*}
and the constants $c$ and $C$ come from Lemma~\ref{lemma-clos}.
\end{theorem}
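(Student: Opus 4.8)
The plan is to bound both expected stopping times through the elementary inequality $\mathds{E}(\tau)\le N+\sum_{t\ge N}\mathds{P}(\tau\ge t)$ of Lemma~\ref{lemma-complexity}, with $N=N_\eta$ taken at $\eta=\eps$ for $\tau_1$ and at $\eta=d$ for $\tau_2$. The crux is to show that $N_\eta$ is exactly the time at which the two stopping boundaries of \myalgo{alg-clos-gen} meet, namely that $\Delta_t\ge 2\Psi_t$ for every $t\ge N_\eta$ (with $\Delta_t$ formed using the relevant $\eta$). Once this geometric picture is established, past time $N_\eta$ the algorithm fails to stop only if $Z_t$ deviates far from its mean, which Lemma~\ref{lem:mcdiarmid} (or directly McDiarmid, since $Z_t$ has $(2,\dots,2)$-bounded differences in its $4t$ arguments) forbids with high probability.

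First I would record the two one-sided reductions. If $\mathcal{D}_1=\mathcal{D}_2$ then $\mathds{E}[Z_t]=0$ by Lemma~\ref{lemma-clos}, and since $\tau_1=\inf\{t:|Z_t|\le\Delta_t-\Psi_t\}$ we get $\{\tau_1>t\}\subseteq\{|Z_t|>\Delta_t-\Psi_t\}=\{|Z_t-\mathds{E}[Z_t]|>\Delta_t-\Psi_t\}$. If $\TV(\mathcal{D}_1,\mathcal{D}_2)=d>\eps$ then $\mathds{E}[Z_t]\ge\Delta_t$ with $\eta=d$ by Lemma~\ref{lemma-clos}, and since $\tau_2=\inf\{t:|Z_t|>\Psi_t\}$ the triangle inequality gives $\{\tau_2>t\}\subseteq\{|Z_t|\le\Psi_t\}\subseteq\{|Z_t-\mathds{E}[Z_t]|\ge\Delta_t-\Psi_t\}$. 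In both cases, for $t\ge N_\eta$ the required deviation satisfies $\Delta_t-\Psi_t\ge\Delta_t/2\ge\Psi_t>0$, so McDiarmid with $\sum_i c_i^2=16t$ yields $\mathds{P}(\tau>t)\le 2\exp(-(\Delta_t-\Psi_t)^2/(8t))\le 2\exp(-\Delta_t^2/(32t))$.

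Next I would verify the explicit value of $N_\eta$. Since $\Delta_t=C\min\{t\eta,\,t^2\eta^2/n,\,t^{3/2}\eta^2/\sqrt n\}-c\sqrt t$, the inequality $\Delta_t\ge 2\Psi_t$ amounts to requiring each of the three branches to exceed $2\Psi_t+c\sqrt t$. Squaring and using $\Psi_t^2=8t\log(\pi^2/3\delta)+16et\log(\log(4t)+1)$, each branch produces a transcendental inequality in $t$ whose explicit solution, extracted via Lemma~\ref{lemma-log}, is precisely one of the three entries in the maximum defining $N_\eta$: the leading $\tfrac{128}{C^2}\log(\pi^2/3\delta)/\eta^2$-type term comes from the dominant part of $2\Psi_t$, the iterated-logarithm term absorbs $\log(\log(4t)+1)$, and the $16c^2/(C^2\eta^2)$-type term accounts for the $c\sqrt t$ correction. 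Taking the maximum over the three branches guarantees $\Delta_t\ge 2\Psi_t$ for all $t\ge N_\eta$.

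Finally I would assemble the estimate: splitting $\mathds{E}(\tau)=\sum_{t\ge 0}\mathds{P}(\tau>t)\le N_\eta+\sum_{t\ge N_\eta}\mathds{P}(\tau>t)$, the first term is at most $N_\eta$, and the tail is controlled by the McDiarmid bound above. Because $\Delta_t^2/t$ grows at least linearly in every branch while already equalling at least $32\log(\pi^2/3\delta)$ at $t=N_\eta$, the summands $2\exp(-\Delta_t^2/(32t))$ decay geometrically, so the tail is $O(\delta)$ and hence at most $N_\eta$; this gives $\mathds{E}(\tau)\le 2N_\eta$. I expect the main obstacle to be the threshold verification of the third paragraph: one must handle three distinct growth regimes simultaneously and propagate the iterated-logarithm and $c\sqrt t$ corrections cleanly through Lemma~\ref{lemma-log} so that the three branch thresholds align exactly with the three terms of $N_\eta$. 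By contrast, the reductions and the geometric tail summation are routine once $\Delta_t\ge 2\Psi_t$ is in hand.
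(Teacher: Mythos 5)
Your proposal is correct and follows essentially the same route as the paper's proof: the decomposition $\mathds{E}(\tau)\le N_\eta+\sum_{t\ge N_\eta}\mathds{P}(\tau\ge t)$, the reduction of non-stopping events to deviations of $Z_t$ from its mean via Lemma~\ref{lemma-clos} and the triangle inequality, the verification that $N_\eta$ is precisely the time past which $\Delta_t$ dominates the threshold $\Psi_t$ (this is the paper's Lemma~\ref{lemma_general-case}, extracted branch by branch through Lemma~\ref{lemma-log}), and a fixed-time McDiarmid bound summed geometrically over $t\ge N_\eta$ (the paper's final appendix lemma). One small correction: the tail sum is not $O(\delta)$ as you state — the geometric series has roughly $1/\text{rate}$ effective terms, so it is of order $\eta^{-2}+n^{2/3}\eta^{-4/3}+\sqrt{n}\,\eta^{-2}$ — but this is still at most $N_\eta$, which is all the argument requires.
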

This theorem states that $\mathcal{O}\left( \max\left( \frac{n^{2/3}\log^{1/3}(1/\delta)}{(\eps\vee \TV(\mathcal{D}_1,\mathcal{D}_2))^{4/3}},\frac{n^{1/2}\log^{1/2}(1/\delta)}{(\eps\vee \TV(\mathcal{D}_1,\mathcal{D}_2))^2},\frac{\log(1/\delta)}{(\eps\vee \TV(\mathcal{D}_1,\mathcal{D}_2))^2}\right)\right)$ samples are sufficient to distinguish between  $\mathcal{D}_1=\mathcal{D}_2$ and $\TV(\mathcal{D}_1,\mathcal{D}_2)>\eps$ with high probability.
We remark that after $N_\eps$ steps, the two stopping conditions of \myalgo{alg-clos-gen} cannot be both unsatisfied. Therefore,  the \myalgo{alg-clos-gen} stops surely before $N_\eps$ hence it has at least a comparable complexity, in the leading terms, of  the batch algorithm of \cite{diakonikolas2020optimal} when $\mathcal{D}_1=\mathcal{D}_2$. Moreover, \myalgo{alg-clos-gen} has the advantage of stopping rapidly when $\mathcal{D}_1$ and $\mathcal{D}_2$ are far away.



\begin{proof}
We start by the case $\mathcal{D}_1=\mathcal{D}_2$, we know that $\mathds{E}(\tau_1)
\le \sum_{s\le N_\eps}\mathds{P}(\tau_1\ge s)+ \sum_{s> N_\eps}\mathds{P}(\tau_1\ge s)\le N_\eps+ \sum_{s> N_\eps}\mathds{P}(\tau_1\ge s)$  so it suffices to prove that $\sum_{s> N_\eps}\mathds{P}(\tau_1\ge s)\le N_\eps$. By the definition of $\tau_1$, $\tau_1\ge s$ implies $|Z_{s-1}|>\Delta_{s-1}-\Psi_{s-1}$ but we have chosen $N_\eps$ so that if $t=s-1\ge N_\eps$, $\Delta_{s-1}-\Psi_{s-1}\ge \frac{C}{2}\min\left\{(s-1)\eps, \frac{(s-1)^2\eps^2}{n},\frac{(s-1)^{3/2}\eps^2}{\sqrt{n}} \right\}$. This last claim follows from Lemma~\ref{lemma-clos}.
 Finally    \begin{align*}
       \sum_{s> N_\eps}\mathds{P}(\tau_1\ge s) &\le  \sum_{t\ge N_\eps}\mathds{P}\left( \left|Z_t\right|>\frac{C}{2}\min\left\{t\eps, \frac{t^2\eps^2}{n},\frac{t^{3/2}\eps^2}{\sqrt{n}} \right\}\right)  
       \\&\overset{\text{(McDiarmid's inequality)}}{\le} \sum_{t\ge N_\eps-1}e^{-\frac{C^2}{16}\min\left\{t\eps^2, \frac{t^3\eps^4}{n^2},\frac{t^{2}\eps^4}{n} \right\}}  
       \le N_\eps\;.
       \end{align*}
  The last inequality is proven in~\myapp{tools}. Our claim follows.   

For the case  $d=\TV(\mathcal{D}_1,\mathcal{D}_2)>\eps$. By the definition of $\tau_2$, $\tau_2\ge s$ implies $|Z_{s-1}|\le \Psi_{s-1}$ hence by triangular inequality  $|Z_{s-1}-\mathds{E}(Z_{s-1})|\ge\mathds{E}(Z_{s-1})-\Psi_{s-1}\ge \Delta_{s-1}-\Psi_{s-1} $ therefore $|Z_{s-1}-\mathds{E}(Z_{s-1})|\ge\frac{C}{2}\min\left\{(s-1)d, \frac{(s-1)^2d^2}{n},\frac{(s-1)^{3/2}d^2}{\sqrt{n}} \right\}$ by Lemma~\ref{lemma-clos}. Hence 
    \begin{align*}
       \sum_{s> N_\eps}\mathds{P}(\tau_2\ge s) &\le  \sum_{t\ge N_\eps}\mathds{P}\left( \left|Z_t-\mathds{E}(Z_{s-1})\right|>\frac{C}{2}\min\left\{td, \frac{t^2d^2}{n},\frac{t^{3/2}d^2}{\sqrt{n}} \right\}\right)  
       \\&\overset{\text{ (McDiarmid's inequality) }}{\le} \sum_{t\ge N_d-1}e^{-\frac{C^2}{16}\min\left\{td^2, \frac{t^3d^4}{n^2},\frac{t^{2}d^4}{n} \right\}}  
       \le N_d\;.
       \end{align*}
  The later inequality is proven in~\myapp{tools}. Finally $\mathds{E}(\tau_2)
\le  N_d+ \sum_{s> N_d}\mathds{P}(\tau_2\ge s)\le 2N_d$.
 
\end{proof}
Similar to testing uniform We show that we cannot improve the dependency on $n$ found in the batch setting more than a constant and replacing $\eps$ by $\eps\vee \TV(\mathcal{D}_1,\mathcal{D}_2)$. We can prove the following lower bounds for testing closeness in the worst case setting. 
\begin{theorem}\label{theorem-lower-testing clos}
There is no stopping rule $T$ for the problem of testing $\mathcal{D}_1=\mathcal{D}_2$ vs $\TV(\mathcal{D}_1,\mathcal{D}_2)>\eps$ with an error probability $\delta$ such that 
\begin{align*}
\mathds{P}&\left( \tau_2(T,\mathcal{D}_1,\mathcal{D}_2)\le  c\frac{\sqrt{n\log(1/3\delta)}}{\TV(\mathcal{D}_1,\mathcal{D}_2)^2}   \right) \ge 1-\delta\; \text{ if } \TV(\mathcal{D}_1,\mathcal{D}_2) >\eps \text{ and} 
\\\mathds{P}&\left( \tau_1(T,\mathcal{D}_1,\mathcal{D}_2)\le  c\frac{\sqrt{n\log(1/3\delta)}}{\eps^2}   \right) \ge 1-\delta\; \; \text{ if } \mathcal{D}_1=\mathcal{D}_2,
\end{align*}
where $c$ a universal constant. We have similar statement if we replace $\frac{\sqrt{n\log(1/3\delta)}}{(\eps\vee \TV(\mathcal{D}_1,\mathcal{D}_2))^2} $ by $\frac{\log(1/3\delta)}{(\eps\vee \TV(\mathcal{D}_1,\mathcal{D}_2))^2} $ or  $\frac{n^{2/3}\log(1/3\delta)^{1/3}}{(\eps\vee \TV(\mathcal{D}_1,\mathcal{D}_2))^{4/3}} $.
\end{theorem}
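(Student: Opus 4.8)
The plan is to mirror the argument of Theorem~\ref{theorem-lower-testing unif}, exploiting the fact that the closeness problem with one marginal held fixed collapses onto the uniformity computation. I would argue by contradiction: assume a stopping rule $T$ with the forbidden guarantee exists, fix $d>\eps$, and set $m=c\sqrt{n\log(1/3\delta)}/d^2$. I take the two joint hypotheses $H_{\mathrm{eq}}:(\mathcal{D}_1,\mathcal{D}_2)=(U_n,U_n)$ and $H_{\mathrm{far}}:(\mathcal{D}_1,\mathcal{D}_2)=(U_n,D)$, where $D$ is the Paninski-style random perturbation with $D_i=(1\pm 2d)/n$ (signs chosen in pairs so that the mass sums to one), so that $\TV(U_n,D)=d>\eps$. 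Each round supplies an independent pair $(A_t,B_t)$ with law $U_n\otimes U_n$ under $H_{\mathrm{eq}}$ and $U_n\otimes D$ under $H_{\mathrm{far}}$. The key observation is that the first marginal is $U_n$ in both worlds, so the samples $A_1,\dots,A_m$ are uninformative and the divergence between the two $m$-round laws factorizes as $\KL\big((U_n\otimes D)^{\otimes m},(U_n\otimes U_n)^{\otimes m}\big)=m\,\KL(D,U_n)$ --- exactly the quantity controlled in the uniformity lower bound.

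From here the two-sided sandwich is identical to Theorem~\ref{theorem-lower-testing unif}. On the upper side, I would combine Wald's lemma with the Poissonization bound of \cite{diakonikolas2016new} (Section 3), applied to the mixture over the random sign pattern of $D$, to obtain $m\,\KL(D,U_n)=\KL\big(D^{\otimes\mathrm{Poi}(m)},U_n^{\otimes\mathrm{Poi}(m)}\big)\le C m^2 d^4/n$. On the lower side, the indicator $\mathbf{1}\{\tau_2\le m\}$ is a function of the first $m$ pairs, so the data-processing inequality gives $\KL\big((U_n\otimes D)^{\otimes m},(U_n\otimes U_n)^{\otimes m}\big)\ge \KL\big(\mathds{P}_{\mathrm{far}}(\tau_2\le m),\mathds{P}_{\mathrm{eq}}(\tau_2\le m)\big)\ge \KL(1-\delta,\delta)\ge\log(1/3\delta)$, using the assumed guarantee $\mathds{P}_{\mathrm{far}}(\tau_2\le m)\ge 1-\delta$ and the $\delta$-correctness bound $\mathds{P}_{\mathrm{eq}}(\tau_2\le m)\le\mathds{P}_{\mathrm{eq}}(\tau_2\le\tau_1)\le\delta$. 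Substituting $m$ yields $Cc^2\log(1/3\delta)\ge\log(1/3\delta)$, which is impossible once $c<1/\sqrt{C}$. The statement for $\tau_1$ under $\mathcal{D}_1=\mathcal{D}_2$ is the mirror image: take $d=\eps$, swap the roles of the two hypotheses, and use $\mathds{P}_{\mathrm{eq}}(\tau_1\le m)\ge1-\delta$ together with $\mathds{P}_{\mathrm{far}}(\tau_1\le m)\le\delta$.

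For the two variants in the final sentence I would vary only the construction feeding the $\KL$ bound. The $\log(1/3\delta)/(\eps\vee d)^2$ term comes from the two-point instances $\{1/2,1/2,0,\dots\}$ versus $\{1/2\pm d/2,1/2\mp d/2,0,\dots\}$ already used in Lemma~\ref{lowerbound-kl-clos}, where $\KL\asymp d^2$ and no Poissonization is needed, so Wald alone gives $m\,d^2\gtrsim\log(1/3\delta)$. The $n^{2/3}\log^{1/3}(1/3\delta)/(\eps\vee d)^{4/3}$ term is the genuinely harder case and the main obstacle: the single-fixed-marginal construction above produces only the $\sqrt{n\log}/d^2$ scaling, so here I would randomize \emph{both} marginals independently and control the divergence of a mixture against a mixture. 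The corresponding Poissonized $\chi^2$ picks up a higher-order contribution scaling as $m^3 d^4/n^2$, and setting this equal to $\log(1/3\delta)$ is exactly what forces $m\gtrsim n^{2/3}\log^{1/3}/d^{4/3}$; making this mixture-versus-mixture estimate rigorous (rather than the cleaner mixture-versus-point bound) is the technical crux, after which the data-processing/Wald sandwich closes the argument as before.
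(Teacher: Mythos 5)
Your argument for the two displayed claims is correct and is essentially the paper's own proof: the paper disposes of them in one line by invoking the technique of Theorem~\ref{theorem-lower-testing unif} with the $\KL$ divergence between samples from $U_n\otimes U_n$ and samples from $U_n\otimes D$, which is precisely the data-processing/Wald/Poissonization sandwich you write out (your step $\mathds{P}_{\mathrm{eq}}(\tau_2\le m)\le\mathds{P}_{\mathrm{eq}}(\tau_2\le\tau_1)\le\delta$ is valid because the freezing property of stopping rules gives $\{\tau_1<\tau_2\}\subseteq\{\tau_2=\infty\}$). Your treatment of the $\log(1/3\delta)/(\eps\vee d)^2$ variant via the two-point instances of Lemma~\ref{lowerbound-kl-clos} also matches the paper's intent. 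The one place you genuinely depart from the paper is the $n^{2/3}\log^{1/3}(1/3\delta)/(\eps\vee d)^{4/3}$ variant, and your diagnosis there is accurate: with the first marginal pinned to $U_n$, the construction only gives $\KL\le Cm^2d^4/n$, and substituting $m=cn^{2/3}\log^{1/3}(1/3\delta)/d^{4/3}$ yields $Cc^2\,n^{1/3}d^{4/3}\log^{2/3}(1/3\delta)$, which fails to sit below $\log(1/3\delta)$ exactly in the regime $nd^4>\log(1/3\delta)$ where this term is the binding one (outside that regime the variant already follows from the $\sqrt{n\log(1/3\delta)}/d^2$ statement, since impossibility at a larger threshold implies impossibility at a smaller one). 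So the paper's one-line proof, read literally, does not cover this case, and your proposed remedy --- randomizing both marginals with the heavy/light ensemble of \cite{chan2014optimal} and \cite{diakonikolas2016new}, whose Poissonized divergence scales as $m^3d^4/n^2$ --- is the standard and correct route, in line with the batch lower bound of \cite{diakonikolas2020optimal} combined with the same Wald/data-processing reduction. You leave that mixture-versus-mixture estimate as a sketch, but on this point your proposal is no less complete than the paper itself, and it correctly identifies a step the paper glosses over.
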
 
\begin{proof}
We can use the same techniques as the previous proof of Theorem~\ref{theorem-lower-testing unif}, by taking the $\KL$ between samples from $U_n\otimes U_n$  and samples from $U_n\otimes D$.
\end{proof}
On the other hand, we can deduce from Theorem~\ref{clos-gen-compl}'s proof  that with high probability we have $\tau_2\le N_{\TV(\mathcal{D}_1,\mathcal{D}_2)}$ and this upper bound has the equivalent $\mathcal{O}\left( \frac{\log\log(1/d)}{d^2}\vee\frac{n^{2/3}\log\log(1/d)^{1/3}}{d^{4/3}} \vee\frac{n^{1/2}\log\log(1/d)^{1/2}}{d^{2}} \right)$ for $d=\TV(\mathcal{D}_1,\mathcal{D}_2)\rightarrow 0$ . If we take $\eps=0$ the \myalgo{alg-clos-gen} provides stopping rules for which it does not stop if $\mathcal{D}_1=\mathcal{D}_2$  and rejects if  $\mathcal{D}_1\neq\mathcal{D}_2$ with probability at least $1-\delta$.
\begin{theorem}
\label{thm:epszero}
There is a stopping rule that can decide $\mathcal{D}_1\neq\mathcal{D}_2$ with probability at least $9/10$ using at most $\mathcal{O}\left( \frac{\log\log(1/d)}{d^2}\vee\frac{n^{2/3}\log\log(1/d)^{1/3}}{d^{4/3}} \vee\frac{n^{1/2}\log\log(1/d)^{1/2}}{d^{2}} \right)$ samples where $d=\TV(\mathcal{D}_1,\mathcal{D}_2)$. 
\end{theorem}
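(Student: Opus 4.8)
\section*{Proof proposal}

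The plan is to instantiate \myalgo{alg-clos-gen} with $\eps=0$ and a fixed risk $\delta=1/10$, and to read the stopping guarantee off the analysis already carried out for Theorem~\ref{clos-gen-compl}. First I would observe that when $\eps=0$ the threshold collapses: $\Delta_t=C\min\{t\eps,t^2\eps^2/n,t^{3/2}\eps^2/\sqrt n\}-c\sqrt t$ becomes $\Delta_t=-c\sqrt t<0$, so the acceptance condition $|Z_t|\le\Delta_t-\Psi_t<0$ defining $\tau_1$ can never be met. Hence the rule never declares $H_1$: its only way to halt is through $\tau_2$, i.e. when $|Z_t|>\Psi_t$. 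This already gives a tester of the right shape, one that never stops under $\mathcal{D}_1=\mathcal{D}_2$ and can only ever output ``$\mathcal{D}_1\neq\mathcal{D}_2$''.

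Next I would verify both guarantees on the single good event $G=\{\forall t\ge 1:\ |Z_t-\mathds{E}[Z_t]|\le\Psi_t\}$, which by Lemma~\ref{lem:mcdiarmid} (with $\eta=e$, $s=2$, so that $J(e,2,4t)=\Psi_t$) has probability at least $1-\delta=9/10$. Under $\mathcal{D}_1=\mathcal{D}_2$, Lemma~\ref{lemma-clos} gives $\mathds{E}[Z_t]=0$, so on $G$ one has $|Z_t|\le\Psi_t$ for every $t$ and the algorithm never halts; it therefore never wrongly reports a difference, with probability at least $9/10$. Under $\TV(\mathcal{D}_1,\mathcal{D}_2)=d>0$, Lemma~\ref{lemma-clos} gives $\mathds{E}[Z_t]\ge\Delta_t$, and on $G$ we get $|Z_t|\ge\mathds{E}[Z_t]-\Psi_t\ge\Delta_t-\Psi_t$. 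By the very property used to define $N_d$ in the proof of Theorem~\ref{clos-gen-compl}, for $t\ge N_d$ one has $\tfrac{C}{2}\min\{td,t^2d^2/n,t^{3/2}d^2/\sqrt n\}\ge c\sqrt t+\Psi_t$, hence $\Delta_t-\Psi_t\ge\tfrac{C}{2}\min\{\cdots\}>\Psi_t$, so $|Z_t|>\Psi_t$ and $\tau_2$ fires. Thus $\tau_2\le N_d$ on $G$, and with probability at least $9/10$ the tester halts and correctly outputs ``$\mathcal{D}_1\neq\mathcal{D}_2$'' after at most $N_d$ samples.

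It then remains to extract the asymptotics of $N_d$ as $d\to 0$ with $\delta$ held fixed. The key observation is that the $\log(1/\delta)$ contributions are now absorbed into constants, so the genuine second-order growth comes from the iterated-logarithm terms $\log(\log(\cdots)+1)$ in each of the three expressions defining $N_\eta$: substituting $\eta=d$ makes each inner argument polynomial in $1/d$, whence $\log(\log(\cdots)+1)\sim\log\log(1/d)$. The three branches then have respective equivalents $\Theta(\log\log(1/d)/d^2)$, $\Theta(n^{2/3}\log\log(1/d)^{1/3}/d^{4/3})$ and $\Theta(n^{1/2}\log\log(1/d)^{1/2}/d^{2})$, and their maximum is exactly the claimed bound. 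I expect the one delicate point to be the inequality $\Delta_t-\Psi_t>\Psi_t$ for $t\ge N_d$: it must be checked that the constant built into $N_\eta$ dominates both $\Psi_t$ and the $c\sqrt t$ slack simultaneously, via the same Lemma~\ref{lemma-clos} computation as in Theorem~\ref{clos-gen-compl}. This is precisely where the sharp $\log\log$ scaling of Lemma~\ref{lem:mcdiarmid} is essential, since a crude union bound over $t$ would produce a $\log(1/d)$ factor here and destroy the stated complexity.
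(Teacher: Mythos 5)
Your proposal is correct and follows essentially the same route as the paper: instantiate \myalgo{alg-clos-gen} with $\eps=0$ and fixed $\delta=1/10$, note that the acceptance rule can never fire while the $\log\log$-type uniform concentration of Lemma~\ref{lem:mcdiarmid} forces $\tau_2\le N_d$ on the good event (and no stopping at all when $\mathcal{D}_1=\mathcal{D}_2$), then read the claimed bound off the asymptotics of $N_d$ with $\delta$ held constant. The paper presents this argument only as a brief remark preceding the theorem; you fill in the same steps in more detail, including the key point that the iterated-logarithm term in $\Psi_t$ is what yields $\log\log(1/d)$ rather than $\log(1/d)$.
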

 This improves the results of \cite{daskalakis2017optimal} where the dependency in $n$ is $n/\log n$. Furthermore, we cannot find stopping rules whose sample complexity is tighter than this upper bound as stated in the following theorem. 
 \begin{theorem}\label{theorem-lower-testing =vs neq}
There is no stopping rule $T$ for the problem of testing $\mathcal{D}_1=\mathcal{D}_2$ vs $\mathcal{D}_1\neq\mathcal{D}_2$ with an error probability $1/16$ such that 
\begin{align*}
\mathds{P}\left( \tau_2(T,\mathcal{D}_1,\mathcal{D}_2)\le  C\frac{n^{1/2}\log\log(1/d)^{1/2}}{d^{2}}   \right) \ge \frac{15}{16}\;,
\end{align*}
where  $d=\TV(\mathcal{D}_1,\mathcal{D}_2)$ and $C$ a universal constant. We have similar statements if we replace $\frac{n^{1/2}\log\log(1/d)^{1/2}}{d^{2}}$ by $\frac{\log\log(1/d)}{d}$ or $\frac{n^{2/3}\log\log(1/d)^{1/3}}{d^{4/3}}$.
\end{theorem}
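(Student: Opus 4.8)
The plan is to upgrade the worst-case lower bound of Theorem~\ref{theorem-lower-testing unif} from a single fixed alternative to a whole geometric family of alternatives, which is where the extra $\log\log(1/d)$ factor of \eqref{eq:testing_diff_samples} must come from: at a fixed error level (here $1/16$, rather than $\delta\to 0$) a single change of measure only yields the $\eps$-independent rate $1/d^2$, so the refinement has to exploit that one and the same stopping rule must succeed simultaneously at every scale $d_k\to 0$. Concretely I would fix the null $\mathcal{D}_1=\mathcal{D}_2=U_n$ and, for each integer $k$, introduce an alternative $H_k$ with $\mathcal{D}_1=U_n$ and $\mathcal{D}_2=D^{(k)}$, where $\TV(D^{(k)},U_n)=d_k:=2^{-k}$. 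For the three stated rates I would use the three perturbation shapes already appearing in the paper: a two-point perturbation supported on $\{1,2\}$ for the rate $\log\log(1/d)/d^2$, where $\KL(D^{(k)},U_n)\asymp d_k^2$ is dimension free, and the random-sign perturbation $D^{(k)}_i=(1\pm 2d_k)/n$ of Theorem~\ref{theorem-lower-testing unif} (and \cite{diakonikolas2016new}) for the $n$-dependent rates, where the relevant quantity is the Poissonized estimate $\KL\big((D^{(k)})^{\otimes m},U_n^{\otimes m}\big)\le C\,m^2 d_k^4/n$. Assume, for contradiction, a rule $T$ that is $1/16$-correct and, on every $H_k$, satisfies $\mathds{P}_{k}(\tau_2\le m_k)\ge 15/16$ with $m_k$ the claimed budget at distance $d_k$.

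The core is a per-scale change-of-measure estimate in the spirit of \cite{karp2007noisy}. Writing $\mathds{P}_k$ and $\mathds{P}_0$ for the laws of the coupled sample stream under $H_k$ and under the null, and $L_t=\log\frac{d\mathds{P}_k}{d\mathds{P}_0}\big|_{\mathcal{F}_t}$ for the log-likelihood ratio, I would apply optional stopping at the bounded time $\tau_2\wedge m_k$ together with Wald's lemma, $\mathds{E}_{k}[L_{\tau_2\wedge m_k}]=\mathds{E}_{k}[\tau_2\wedge m_k]\,\KL(D^{(k)},U_n)$, to control the typical size of $L_t$ over $t\le m_k$. Choosing $m_k$ of the claimed order makes this effective information $\lambda_k\asymp \log\log(1/d_k)\asymp\log k$ in each regime (respectively $m_k^2 d_k^4/n\asymp\log k$ for the $\sqrt{n}$ rate, and the analogous balance coming from Lemma~\ref{lemma-clos} for the $n^{2/3}$ rate). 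A maximal inequality for $L_t$ under $\mathds{P}_k$ then lets me transport the event $\{\tau_2\le m_k\}$ back to the null: on the high-probability event that $\max_{t\le m_k}L_t\le \lambda_k+O(\sqrt{\lambda_k})$ one obtains $\mathds{P}_{0}(\tau_2\le m_k)\ge c\,e^{-\lambda_k}\gtrsim k^{-\beta}$ for an explicit exponent $\beta$ proportional to the budget constant $C$.

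Finally I would glue the scales. Taking the disjoint windows $W_k=(m_{k-1},m_k]$, disjointness and null-correctness give $\sum_k \mathds{P}_0(\tau_2\in W_k)\le \mathds{P}_0(\tau_2<\infty)\le 1/16$. Transporting the scale-$k$ guarantee into its own window yields $\mathds{P}_0(\tau_2\in W_k)\gtrsim k^{-\beta}$, and as soon as the budget constant $C$ is small enough that $\beta\le 1$, the partial sums $\sum_{k\le \log_2(1/d)}k^{-\beta}$ diverge like $\log\log(1/d)$; for $d$ small this exceeds $1/16$, the desired contradiction. This simultaneously fixes the universal constant $C$ below which no such $T$ exists and, on reading off the three balances for $\lambda_k$, produces the three rates of the statement. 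The main obstacle is the window-placement step: I must ensure that the scale-$k$ alternative forces stopping \emph{inside} $W_k$ rather than earlier, i.e. a matching lower bound on $\tau_2$ under $\mathds{P}_k$ guaranteeing $\mathds{P}_k(\tau_2> m_{k-1})\ge \mathrm{const}$. I would arrange this by making the budgets grow geometrically ($m_{k-1}\approx m_k/4$) so that at time $m_{k-1}$ the accumulated signal $d_k m_{k-1}$ stays below the detection scale, while controlling the maximal log-likelihood ratio uniformly; handling this together with the two stopping directions and the $n$-dependence across the three regimes is the delicate bookkeeping of the argument.
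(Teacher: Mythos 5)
Your overall architecture is the same as the paper's (which follows \cite{karp2007noisy}): a sequence of alternatives $U_n$ vs.\ $D^{(k)}$ at shrinking distances, with $D^{(k)}_i=(1\pm 2d_k)/n$ and the Poissonized bound $\KL\le C'' m^2 d_k^4/n$ from \cite{diakonikolas2016new}; a change-of-measure transport of stopping events from each alternative to the null; disjoint time windows whose null probabilities must sum to at most the null error probability, yet are individually bounded below by $\asymp k^{-\beta}$, forcing a divergent series and a contradiction. The only cosmetic difference is the transport vehicle: you invoke optional stopping, Wald's lemma and a maximal inequality for the log-likelihood ratio, whereas the paper needs none of that machinery — it applies the data-processing inequality (Lemma~\ref{lem-kl}) directly to the process truncated at the \emph{deterministic} budget, which is both simpler and avoids having to prove a maximal inequality under the alternative.

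There is, however, a genuine gap in the window-placement step, which you correctly single out as the main obstacle but then resolve incorrectly. With your exactly geometric choice $d_k=2^{-k}$, hence $m_{k-1}\approx m_k/4$, the KL divergence accumulated by time $m_{k-1}$ under the scale-$k$ alternative is
\begin{align*}
\KL_{m_{k-1}}\big(\mathds{P}_k \,\|\, \mathds{P}_0\big)\;\asymp\; \frac{C''\,m_{k-1}^2 d_k^4}{n}\;\asymp\; \frac{C^2 C''}{16}\,\log\log(1/d_k)\;\asymp\; C^2\log k\;,
\end{align*}
which is unbounded in $k$: it is only a constant factor below the scale-$k$ detection threshold, but detection at \emph{constant} confidence needs only $O(1)$ nats of divergence. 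Consequently the change-of-measure argument cannot rule out that a $1/16$-correct rule stops before $m_{k-1}$ under $H_k$ (data processing only yields $\KL(\mathds{P}_k(E),\mathds{P}_0(E))\le C^2 C''\log k/16$, which is vacuous for large $k$), so your claim $\mathds{P}_k(\tau_2>m_{k-1})\ge \mathrm{const}$ is unsupported, and the per-window bound $\mathds{P}_0(\tau_2\in W_k)\gtrsim k^{-\beta}$ collapses; the heuristic ``the accumulated signal $d_k m_{k-1}$ stays below the detection scale'' is quantitatively wrong for exactly this reason. The paper's fix is to choose the scales \emph{recursively} rather than geometrically: $T_{k-1}=C'\sqrt{n}/\eps_k^2$, i.e.\ the previous budget equals the $\log\log$-free complexity at the current scale. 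Then $\KL_{T_{k-1}}(\mathds{P}_k\|\mathds{P}_0)\le C''(C')^2=O(1)$, so early stopping under $H_k$ would transport to a constant-probability stop under the null, contradicting null correctness, and $\mathds{P}_k(T_{k-1}<\tau_2\le T_k)\ge 15/16-1/3>1/2$. The windows then have growing aspect ratios $T_k/T_{k-1}\asymp\sqrt{\log k}$, and the price is the extra analysis you have no counterpart of: showing $\log(1/\eps_k)=O(k\log\log k)$ so that the transported window probabilities $\asymp(\log(1/\eps_k))^{-1/2}\gtrsim 1/k$ still form a divergent series. Two smaller points: your exponent $\beta$ scales as $C^2$, not $C$; and for the $n^{2/3}d^{-4/3}$ regime a ``$U_n$ versus perturbed $U_n$'' pair cannot be hard (identity testing at that distance is cheaper), so a genuinely two-unknown-distributions construction is needed there — though the paper is equally terse about that case.
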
 
To sum up, 
a number $\Theta\left( \frac{\log\log(1/d)}{d^2}\vee\frac{n^{2/3}\log\log(1/d)^{1/3}}{d^{4/3}} \vee\frac{n^{1/2}\log\log(1/d)^{1/2}}{d^{2}} \right)$ of samples is necessary and sufficient to decide whether $\mathcal{D}_1=\mathcal{D}_2$ or $\mathcal{D}_1\neq\mathcal{D}_2$  with probability $15/16$.
\begin{proof}
We use ideas similar to \cite{karp2007noisy}. We prove only the first statement, the others being similar. 
Let's start by a lemma:
\begin{lemma}\label{lem-kl}
Let $X$  and $Y$ two random variables and $E$ some event verifying $\mathds{P}_X(E)\ge 1/3$ and $\mathds{P}_Y(E)<1/3$, we have 
\begin{align*}
\KL(\mathds{P}_X,\mathds{P}_Y)\ge -\frac{1}{3}\log(3\mathds{P}_Y( E))-\frac{1}{e}.
\end{align*}
\end{lemma}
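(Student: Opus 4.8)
The plan is to reduce the general statement to the binary case via the data processing inequality and then exploit the monotonicity of the Bernoulli KL divergence. Throughout I would write $a=\mathds{P}_X(E)\ge 1/3$ and $b=\mathds{P}_Y(E)<1/3$. If $b=0$ then $\mathds{P}_X$ puts positive mass on a $\mathds{P}_Y$-null event, so $\KL(\mathds{P}_X,\mathds{P}_Y)=+\infty$, while the right-hand side is also $+\infty$; hence the inequality is trivial and we may assume $b\in(0,1/3)$.

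First I would apply the data processing inequality for KL divergence to the deterministic map $\omega\mapsto \mathds{1}_{\{\omega\in E\}}$, which pushes $\mathds{P}_X$ and $\mathds{P}_Y$ forward to the Bernoulli laws $\mathcal{B}(a)$ and $\mathcal{B}(b)$. Since KL can only contract under such a map,
\begin{align*}
\KL(\mathds{P}_X,\mathds{P}_Y)\ge \KL(a,b)=a\log\frac{a}{b}+(1-a)\log\frac{1-a}{1-b},
\end{align*}
using the paper's convention $\KL(p,q)=\KL(\mathcal{B}(p),\mathcal{B}(q))$.

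Next I would remove the dependence on the exact value of $a$ by monotonicity. Differentiating gives $\partial_a \KL(a,b)=\log\frac{a(1-b)}{b(1-a)}$, which is strictly positive for $a>b$ because $x\mapsto x/(1-x)$ is increasing; thus $a\mapsto \KL(a,b)$ is nondecreasing on $(b,1)$, and since $a\ge 1/3>b$ its worst admissible value is $a=1/3$:
\begin{align*}
\KL(a,b)\ge \KL(1/3,b)=\tfrac13\log\frac{1/3}{b}+\tfrac23\log\frac{2/3}{1-b}=-\tfrac13\log(3b)+\tfrac23\log\frac{2/3}{1-b}.
\end{align*}
The first term is already exactly the leading term of the claim.

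It then remains to check that the residual term is at least $-1/e$. Since $b\in(0,1/3)$ we have $1-b\in(2/3,1)$, so $\frac{2/3}{1-b}\ge \frac23$ and therefore $\tfrac23\log\frac{2/3}{1-b}\ge \tfrac23\log\frac23\approx -0.270\ge -\tfrac1e$; this numerical step is the only place where the explicit constant $1/e$ enters. Combining the three displays yields $\KL(\mathds{P}_X,\mathds{P}_Y)\ge -\tfrac13\log(3b)-\tfrac1e$, as desired. There is no genuine obstacle in this argument; the only points requiring care are justifying the reduction to the Bernoulli pair $(a,b)$ through data processing and observing that the monotonicity of $\KL(\cdot,b)$ permits replacing $a$ by its worst admissible value $1/3$.
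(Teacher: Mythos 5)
Your proof is correct and takes essentially the same approach as the paper's: both reduce $\KL(\mathds{P}_X,\mathds{P}_Y)$ via data processing to the Bernoulli divergence $\KL(\mathds{P}_X(E),\mathds{P}_Y(E))$ and then bound it by an elementary computation. The only difference is cosmetic: you use monotonicity of $a\mapsto\KL(a,b)$ to pass to the worst case $a=1/3$ and then verify $\tfrac{2}{3}\log\tfrac{2}{3}\ge-\tfrac{1}{e}$ (in fact slightly sharper), whereas the paper keeps $a=\mathds{P}_X(E)$ and bounds the two terms separately, using $a\log(a/b)\ge-\tfrac{1}{3}\log(3b)$ and $(1-a)\log(1-a)\ge-\tfrac{1}{e}$.
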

\begin{proof}
By data processing property of Kullback-Leibler’s divergence: 
\begin{align*}
\KL(\mathds{P}_X,\mathds{P}_Y)&\ge \KL(\mathds{P}_X(E),\mathds{P}_Y(E))
\\&\ge \mathds{P}_X(E)\log\frac{\mathds{P}_X(E)}{\mathds{P}_Y(E)}+(1-\mathds{P}_X(E))\log\frac{1-\mathds{P}_X(E)}{1-\mathds{P}_Y(E)}
\\&\ge -\frac{1}{3}\log(3\mathds{P}_Y( E))+(1-\mathds{P}_X(E))\log(1-\mathds{P}_X(E))
\\&\ge-\frac{1}{3}\log(3\mathds{P}_Y( E))-\frac{1}{e}\;.
\end{align*}
\end{proof}
Suppose by contradiction that there is a stopping rule such that 
\begin{align*}
\mathds{P}\left( \tau_2(T,\mathcal{D}_1,\mathcal{D}_2)> \frac{n^{1/2}\log\log(1/d)^{1/2}}{Cd^{2}}   \right) \le  \frac{1}{16}\;,
\end{align*} whenever $d=\TV(\mathcal{D}_1,\mathcal{D}_2)>0$. Let $\eps_1=1/3$, we construct recursively $T_k=\left\lceil \frac{n^{1/2}\log\log(1/\eps_k)^{1/2}}{C\eps_k^{2}} \right\rceil =\frac{C'\sqrt{n}}{\eps_{k+1}^2}$ where $C$ and $C'$ are constants defined later. For each integer $j$, we take $m_j\sim Poi(j)$. Let $U_n$ the uniform distribution and  $D_k$ a uniformly chosen distribution where $D_{k,i}=\frac{1\pm2\eps_k}{n}$  with probability $1/2$ each. With the work of \cite{diakonikolas2016new} (Section 3), we can show that $\KL(   U_n^{\otimes m_j}\otimes D_k^{\otimes m_j},U_n^{\otimes m_j}\otimes U_n^{\otimes m_j})\le C''\frac{j^2\eps_k^4}{n}$ where $C''$ is a constant.   Since $\TV(U_n,D_k)=\eps_k>0$, $\mathds{P}\left( \tau_2(T,U_n,D_k)> T_k \right) \le 1/16$. Let $E_k$ be the event that the stopping rule decides that the distributions are not equal between $T_{k-1}$ and $T_k$. We have $\mathds{P}\left( \tau_2(T,U_n,D_{k})\le T_{k-1} \right) \le 1/3$ since otherwise  
Lemma~\ref{lem-kl} implies:
\begin{align*}
-\frac{1}{3}\log\left(3\mathds{P}\left( \tau_2(T,U_n,U_n)\le T_{k-1} \right) \right)-\frac{1}{e}&\le \KL( U_n^{\otimes m_{T_{k-1}}}\otimes D_k^{\otimes m_{T_{k-1}}},U_n^{\otimes m_{T_{k-1}}}\otimes U_n^{\otimes m_{T_{k-1}}} )\\&\le C''\frac{T_{k-1}^2\eps_k^4}{n}
\\&\le C''C'\;,
\end{align*}thus 
\begin{align*}
    \mathds{P}\left( \tau_2(T,U_n,U_n)\le T_{k-1} \right) \ge e^{-3C''C'-3/e}/3 >0.1,
\end{align*}
for good choice of $C'$ and this contradicts the fact the the stopping rule is infinite with a probability at least $0.9$. The stopping rule is $0.1$ correct so  $\mathds{P}\left( \tau_2(T,U_n,D_{k})< +\infty \right) \ge 0.9$ then
\begin{align*}
    \mathds{P}\left( T_{k-1}<\tau_2(T,U_n,D_{k})\le T_{k} \right) \ge 0.9-1/3-1/16>0.5.
\end{align*} The same inequalities for the Kullback-Leibler’s divergence as above permits to deduce: 
\begin{align*}
1\ge \sum_{k\ge 1}\mathds{P}\left( T_{k-1}<\tau_2(T,U_n,U_n)\le T_k \right)& \ge \sum_{k\ge 1}\frac{1}{3}e^{-3C''T_k^2\eps_k^4/n-3/e}
\\&\ge \sum_{k\ge 1}\frac{1}{3e^2}e^{-3C''/C^2 \log\log(1/\eps_k)} \text{ and choosing }C \text{ st } 3C''/C^2=1/2
\\&\ge\sum_{k\ge 1} \frac{1}{3e^2}\frac{1}{\sqrt{\log(1/\eps_k)}}\;.
\end{align*}But the later sum is divergent because if we denote $a_k=\log(1/\eps_k)$, we have $a_{k+1}\le a_k+\frac{1}{4}\log\log a_k +\mathcal{O}(1)$ thus $a_k=\mathcal{O}(k\log\log k)$ therefore $\frac{1}{\sqrt{\log(1/\eps_k)}}\ge \frac{c}{k }$ which is divergent.
\end{proof}


\begin{remark}\label{rem:doubling}
We note that we can transform the batch algorithms to sequential ones using 
the doubling search technique. For instance, we can use the algorithm of  \cite{diakonikolas2020optimal} as a black box and test sequentially for $1\le t\le \log(1/\eps)$ whether $\cD_1=\cD_2$ or $\TV(\cD_1,\cD_2)>2^{-t}$ with a probability of failure no more than $\delta_t= \delta/t^2$. If at some step the batch-algorithm rejects we reject and halt, otherwise we accept.  Note that one could think, at first sight, that this reduction even permits to estimate $\TV(\cD_1,\cD_2)$; it is however not the case, since we cannot ensure for different distributions of TV distance less than $\eps$ that the proposed algorithm will respond the right answer (the black box algorithm can return any hypothesis $H_1$ or $H_2$ if the $\TV$ distance is strictly between $0$ and $\eps$) hence the doubling search algorithm (as described) cannot be used for tolerant testing.
On the other hand, the doubling search method can lead to the same order of sample complexity as Theorem~\ref{clos-gen-compl} for testing closeness problem.
Nevertheless, when it comes to multiplicative constants, the two algorithms appear to have significantly different behaviors. In all experiments the actual sample complexity of \myalgo{alg-clos-gen} appears to be better by an important constant factor.  This does not show off in the bounds (since the multiplicative constants are not known), but this can be understood at least when the TV distance $d$ satisfies $\eps< 2^{-k-1}< d = 2^{-k}(1-\eta)  < 2^{-k}$, when the doubling search obviously requires $\approx 4$ times more samples than necessary. Actually, even without this discretization effect, the difference is significant. This sub-optimality of the doubling search algorithm can be seen clearly for small alphabets where we can characterize the sample complexity to the constant: We gain a factor $4$ using our approach while using doubling search algorithm requires up to $4$ times more than the batch sample complexity when the $\TV$ distance is strictly greater than $\eps$. 
We thank the unknown reviewer and Clément L Canonne for bringing up the doubling search technique.
\end{remark}






\section{Conclusion}
We have provided a tight analysis of the complexity of testing identity and closeness  for small $n$, where the importance of sequential procedures is clearly exhibited. 
\newline
For the general case, we proposed tight algorithms for testing identity and closeness where the complexity can depend on the actual $\TV$ distance between the two distributions. 
We note that for some specific families of distributions the improvement can be much more than the general one. This is the case of distributions  concentrated in small sets which can be tested rapidly by  sequential strategies.

\textbf{Acknowledgements.}\\
Omar Fawzi acknowledges the support of the European Research Council (ERC Grant Agreement No. 851716).
Aurélien Garivier acknowledges the support of the Project IDEXLYON of the University of Lyon, in the framework of the Programme Investissements d’Avenir (ANR-16-IDEX-0005), and Chaire SeqALO (ANR-20-CHIA-0020).




\appendix

\section{General lower bounds and their proofs}\label{sec:proofs-low-ber}

In this section we present lower bounds for testing identity and testing closeness in the general case of $n\geq 2$ and provide the proofs of the lower bounds presented in the paper. 

\subsection{ Lower bound for testing identity in the general case $n\geq 2$}
\label{sec:proof-lowerbound-kl-id}
We first provide and prove a lower bound result for testing identity.
\begin{lemma}
Let $\mathcal{D} $ be a known distribution on $[n]$. Let $T$ a stopping rule for testing identity: $\mathcal{D}'= \mathcal{D}$ vs ${\TV(\mathcal{D}',\mathcal{D})>\eps}$ with an error probability $\delta$. Let  $\tau_1$ and $\tau_2$ the associated stopping times. We have 
\begin{itemize}
\item $\mathds{E}(\tau_1(T,\mathcal{D})) \ge \frac{\log1/3\delta}{\inf_{\mathcal{D}'' \text{s.t.}\TV(\mathcal{D}'',\mathcal{D})>\eps } \KL(\mathcal{D} , \mathcal{D}'')}$ \text{ if } $\mathcal{D}'= \mathcal{D}$.
\item $\mathds{E}(\tau_2(T,\mathcal{D}'))\ge \frac{\log1/3\delta} {\KL(\mathcal{D}' , \mathcal{D})}$    \text{ if }${\TV(\mathcal{D}',\mathcal{D})>\eps}$.
\end{itemize}
\end{lemma}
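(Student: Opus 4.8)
The plan is to prove both bounds by a single change-of-measure (transportation) argument in the spirit of Garivier and Kaufmann \cite{garivier2019non}, combining Wald's identity with the data-processing inequality for the Kullback--Leibler divergence. Let $\tau=\min(\tau_1,\tau_2)$ be the stopping time of the procedure and, for two candidate sources $P$ and $Q$ on $[n]$, write $\mathds{P}_P$ (resp.\ $\mathds{P}_Q$) for the law of the i.i.d.\ sample stream. The central inequality I would use is: for every event $E$ measurable with respect to $\mathcal{F}_\tau$,
\begin{align*}
\mathds{E}_P[\tau]\,\KL(P,Q)\ge \KL\big(\mathds{P}_P(E),\mathds{P}_Q(E)\big),
\end{align*}
where the right-hand side is the binary relative entropy $\KL(a,b)=a\log\frac{a}{b}+(1-a)\log\frac{1-a}{1-b}$. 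If $\mathds{E}_P[\tau]=+\infty$ the lemma is vacuous, so one may assume the left-hand side finite.

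To establish this inequality I would first invoke Wald's identity. Writing $L_t=\sum_{s=1}^t\log\frac{P(X_s)}{Q(X_s)}$ for the log-likelihood ratio of the first $t$ samples, Wald's lemma gives $\mathds{E}_P[L_\tau]=\mathds{E}_P[\tau]\,\KL(P,Q)$. Since $\mathds{E}_P[L_\tau]$ equals the divergence between the restrictions of $\mathds{P}_P$ and $\mathds{P}_Q$ to $\mathcal{F}_\tau$, applying the data-processing inequality to the two-cell partition $\{E,E^c\}$ yields $\mathds{E}_P[\tau]\,\KL(P,Q)=\KL(\mathds{P}_P|_{\mathcal{F}_\tau},\mathds{P}_Q|_{\mathcal{F}_\tau})\ge \KL(\mathds{P}_P(E),\mathds{P}_Q(E))$, which is the claimed bound. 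I would also record the elementary estimate $\KL(1-\delta,\delta)=(1-2\delta)\log\frac{1-\delta}{\delta}\ge\log\frac{1}{3\delta}$.

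It then remains to make the two specializations, each using the $\delta$-correctness of $T$ and the monotonicity of $\KL(a,b)$ (increasing in $a$, decreasing in $b$ for $a>b$). For the bound on $\tau_1$, take $P=\mathcal{D}$, let $\mathcal{D}''$ be any distribution with $\TV(\mathcal{D}'',\mathcal{D})>\eps$, set $Q=\mathcal{D}''$, and let $E=\{\tau_1<\tau_2\}$ be the event that the procedure accepts $H_1$. Correctness condition~1 gives $\mathds{P}_{\mathcal{D}}(E)\ge 1-\delta$, while condition~2 (applicable since $\TV(\mathcal{D}'',\mathcal{D})>\eps$) gives $\mathds{P}_{\mathcal{D}''}(E)\le\delta$; the transportation inequality then yields $\mathds{E}_{\mathcal{D}}[\tau]\,\KL(\mathcal{D},\mathcal{D}'')\ge\log\frac{1}{3\delta}$. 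Because $\tau\le\tau_1$ and $\mathcal{D}''$ was arbitrary, taking the infimum over such $\mathcal{D}''$ in the denominator gives the first bound. The bound on $\tau_2$ is symmetric: take $P=\mathcal{D}'$ (with $\TV(\mathcal{D}',\mathcal{D})>\eps$), $Q=\mathcal{D}$, and $E=\{\tau_2<\tau_1\}$; conditions~2 and~1 give $\mathds{P}_{\mathcal{D}'}(E)\ge 1-\delta$ and $\mathds{P}_{\mathcal{D}}(E)\le\delta$ respectively, and since $\tau\le\tau_2$ we conclude $\mathds{E}(\tau_2(T,\mathcal{D}'))\ge\log(1/3\delta)/\KL(\mathcal{D}',\mathcal{D})$.

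I expect the main obstacle to be the clean justification of the transportation inequality rather than the specializations, which are routine. In particular one must handle the measure-theoretic details of Wald's identity at the random horizon $\tau$ (integrability of $L_\tau$ and the optional-stopping step), and be careful that the quantities $\mathds{E}(\tau_i(T,\cdot))$ in the statement are to be read as expected sample counts of the procedure: the inequalities $\tau\le\tau_1$ and $\tau\le\tau_2$ are precisely what transfer the lower bound on the genuine stopping time $\tau$ to each of $\tau_1$ and $\tau_2$.
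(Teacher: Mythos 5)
Your proof is correct and takes essentially the same route as the paper's own proof: a Kullback--Leibler change of measure combining Wald's lemma with the data-processing inequality applied to the acceptance event, the correctness conditions giving probabilities $\ge 1-\delta$ and $\le\delta$ under the two measures, and the bound $\KL(1-\delta,\delta)\ge\log(1/3\delta)$. The only difference is cosmetic: you apply Wald at $\tau=\min(\tau_1,\tau_2)$ with the event $\{\tau_1<\tau_2\}$ and then use $\tau\le\tau_i$, whereas the paper stops directly at $\tau_i$ with the event $\{\tau_i<\infty\}$; since a rule that has decided never changes its decision, these events coincide, and your variant is if anything slightly more careful about the case where $\tau_i$ is infinite with positive probability.
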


\begin{proof}
We consider the two different cases $\mathcal{D}'=\mathcal{D}$ and $\TV(\mathcal{D}',\mathcal{D})>\eps$.
\paragraph{The case $\mathcal{D}'=\mathcal{D}$.}
We denote by $\mathds{P}_\mathcal{D}$  the probability distribution on $[n]^{\mathds{N}}$ with independent marginals $X_i$ of distribution $\mathcal{D}$.  Let $Z=(X_1,\dots ,X_{\tau_1})$ and $\mathcal{D}''$ be a distribution such that  $\TV(\mathcal{D}'',\mathcal{D})>\eps$.
The data processing property of the Kullback-Leibler divergence implies
\begin{align}\label{proof-kll-id}
\KL\left(\mathds{P}_{\mathcal{D}}^Z , \mathds{P}_{\mathcal{D}''}^Z\right)\ge \KL\left(\mathds{P}_{\mathcal{D}}(\tau_1<\infty) , \mathds{P}_{\mathcal{D}''}(\tau_1<\infty)\right) \;. 
\end{align}
But $\mathds{P}_{\mathcal{D}}(\tau_1<\infty) \ge 1-\delta$ and $\mathds{P}_{\mathcal{D}''}(\tau_1<\infty)\le \delta $.  Moreover, $x\rightarrow \KL(x , y)$ is increasing on $(y,1)$ and $y\rightarrow \KL(x , y)$ is decreasing on $(0,x)$ hence $\KL\left(\mathds{P}_{X}(E) , \mathds{P}_{Y}(E)\right) \ge \KL(1-\delta , \delta)$. Tensorization property and Wald's lemma (\ref{Wald-lemma}) lead to
\begin{align*}
\KL\left(\mathds{P}_{\mathcal{D}}^Z , \mathds{P}_{\mathcal{D}''}^Z\right)=\mathds{E}(\tau_1(T,\mathcal{D}))\KL(\mathcal{D} , \mathcal{D}'')\;.
\end{align*} 
The inequality \ref{proof-kll-id} becomes 
\begin{align*}
\mathds{E}(\tau_1(T,\mathcal{D} )\KL(\mathcal{D} , \mathcal{D}'')\ge \KL(1-\delta , \delta)\ge \log1/3\delta\;,
\end{align*}
which is valid for all distribution $\mathcal{D}''$, consequently  
\begin{align*}
\mathds{E}(\tau_1(T,\mathcal{D})\ge\frac{\log1/3\delta}{ \inf_{\mathcal{D}'': \TV(\mathcal{D},\mathcal{D}'')>\eps }  \KL(\mathcal{D}, \mathcal{D}'')}\;.
\end{align*}

\paragraph{The case $\TV(\mathcal{D}',\mathcal{D})>\eps$.}
With similar notations and techniques we find for $Z=(X_1,\dots ,X_{\tau_2})$
\begin{align*}\mathds{E}(\tau_2(T,\mathcal{D}' )\KL(\mathcal{D}' , \mathcal{D})&=\KL\left(\mathds{P}_{\mathcal{D}'}^Z , \mathds{P}_{\mathcal{D}}^Z\right)
\\&\ge \KL\left(\mathds{P}_{\mathcal{D}'}(\tau_2<\infty) , \mathds{P}_{\mathcal{D}}
(\tau_2<\infty)\right) 
\\&\ge \KL(1-\delta, \delta)
\\&\ge\log1/3\delta\;.
\end{align*}
Finally we can deduce
\begin{align*}
\mathds{E}(\tau_2(T,\mathcal{D}' ))\ge\frac{\log1/3\delta}{ \KL(\mathcal{D}', \mathcal{D})}\;.
\end{align*}
\end{proof}

\subsection{ Lower bound for testing closeness in the general case $n\geq 2$} 
\label{sec:proof-lowerbound-kl}
We propose the following lower bounds for testing closeness in general case 
\begin{lemma}\label{lowerbound-kl}
Let $T$ a stopping rule for testing $\mathcal{D}_1= \mathcal{D}_2$ vs ${\TV(\mathcal{D}_1,\mathcal{D}_2)>\eps}$ with an error probability $\delta$. Let  $\tau_1$ and $\tau_2$ the associated stopping times. We have 
\begin{itemize}
\item $\mathds{E}(\tau_1(T,\mathcal{D}_1,\mathcal{D}_2)) \ge \frac{\log1/3\delta}{\inf_{\mathcal{D}'_{1,2} \text{s.t.}\TV(\mathcal{D}_1',\mathcal{D}'_2)>\eps } \KL(\mathcal{D}_1 , \mathcal{D}'_1)+\KL(\mathcal{D}_2 , \mathcal{D}'_2)}$ \text{ if } $\mathcal{D}_1= \mathcal{D}_2$.
\item $\mathds{E}(\tau_2(T,\mathcal{D}_1,\mathcal{D}_2))\ge \frac{\log1/3\delta}{\inf_{\mathcal{D} }  \KL(\mathcal{D}_1 , \mathcal{D})+ \KL(\mathcal{D}_2 , \mathcal{D})}$    \text{ if }${\TV(\mathcal{D}_1,\mathcal{D}_2)>\eps}$.
\end{itemize}
\end{lemma}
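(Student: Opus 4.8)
The plan is to follow exactly the template of the testing-identity lower bound established in \myapp{proof-lowerbound-kl-id}, the only genuinely new ingredient being that each observation is now a \emph{pair} $(X_i,Y_i)$ with $X_i\sim\mathcal{D}_1$ independent of $Y_i\sim\mathcal{D}_2$. Consequently the law of one observation under a candidate pair $(\mathcal{D}'_1,\mathcal{D}'_2)$ is the product measure $\mathcal{D}'_1\otimes\mathcal{D}'_2$, and the per-step Kullback--Leibler divergence factorizes as $\KL(\mathcal{D}_1\otimes\mathcal{D}_2,\mathcal{D}'_1\otimes\mathcal{D}'_2)=\KL(\mathcal{D}_1,\mathcal{D}'_1)+\KL(\mathcal{D}_2,\mathcal{D}'_2)$ by the chain rule for $\KL$ on product distributions. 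This sum is precisely what appears in the two denominators, so the whole argument reduces to running the single-distribution proof with this factorized per-step divergence.

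For the first item, assume $\mathcal{D}_1=\mathcal{D}_2$ and fix any pair $(\mathcal{D}'_1,\mathcal{D}'_2)$ with $\TV(\mathcal{D}'_1,\mathcal{D}'_2)>\eps$. First I would set $Z=((X_1,Y_1),\dots,(X_{\tau_1},Y_{\tau_1}))$, the trajectory observed when the rule declares $H_1$, and apply the data-processing inequality to the prefix-determined event $\{\tau_1<\infty\}$ to get $\KL(\mathds{P}^Z_{(\mathcal{D}_1,\mathcal{D}_2)},\mathds{P}^Z_{(\mathcal{D}'_1,\mathcal{D}'_2)})\ge \KL\bigl(\mathds{P}_{(\mathcal{D}_1,\mathcal{D}_2)}(\tau_1<\infty),\mathds{P}_{(\mathcal{D}'_1,\mathcal{D}'_2)}(\tau_1<\infty)\bigr)$. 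By $\delta$-correctness the rule accepts $H_1$ with probability at least $1-\delta$ under the null and at most $\delta$ under the $\eps$-far alternative, so the right-hand side is at least $\KL(1-\delta,\delta)\ge\log(1/3\delta)$. I would then invoke Wald's lemma (tensorization of $\KL$ at the stopping time) to rewrite the left-hand side as $\mathds{E}(\tau_1)\bigl(\KL(\mathcal{D}_1,\mathcal{D}'_1)+\KL(\mathcal{D}_2,\mathcal{D}'_2)\bigr)$, rearrange, and finally take the infimum over all admissible $(\mathcal{D}'_1,\mathcal{D}'_2)$.

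The second item is identical with the roles reversed: assuming $\TV(\mathcal{D}_1,\mathcal{D}_2)>\eps$, I would compare the true far pair against the null pair $(\mathcal{D},\mathcal{D})$ for an arbitrary $\mathcal{D}$, use $Z=((X_1,Y_1),\dots,(X_{\tau_2},Y_{\tau_2}))$ together with the event that $H_2$ is declared (probability $\ge 1-\delta$ under the far instance, $\le\delta$ under $(\mathcal{D},\mathcal{D})$), and obtain $\mathds{E}(\tau_2)\bigl(\KL(\mathcal{D}_1,\mathcal{D})+\KL(\mathcal{D}_2,\mathcal{D})\bigr)\ge\log(1/3\delta)$ before taking the infimum over $\mathcal{D}$. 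The main point to be careful about is the application of Wald's lemma to the log-likelihood ratio of the \emph{stopped} paired sequence: one must check $\mathds{E}(\tau)<\infty$ (otherwise the claimed bound holds trivially) and that the decision event is a measurable function of the observed prefix so that data-processing applies; the factorization of the per-step divergence into the sum $\KL(\mathcal{D}_1,\cdot)+\KL(\mathcal{D}_2,\cdot)$ is then the only new computation relative to the identity case.
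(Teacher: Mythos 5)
Your proposal is correct and follows essentially the same route as the paper's proof: data processing applied to the acceptance event, the bound $\KL(1-\delta,\delta)\ge\log(1/3\delta)$, Wald's lemma combined with tensorization to write the divergence of the stopped paired sequence as $\mathds{E}(\tau)\bigl(\KL(\mathcal{D}_1,\cdot)+\KL(\mathcal{D}_2,\cdot)\bigr)$, and a final infimum over the alternative pair (resp.\ over $\mathcal{D}$). Your added remarks on checking $\mathds{E}(\tau)<\infty$ and the measurability of the decision event are sensible housekeeping that the paper leaves implicit, but they do not change the argument.
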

\begin{proof}
Similarly as in the previous proof, we consider the two different cases $\mathcal{D}'=\mathcal{D}$ and $\TV(\mathcal{D}',\mathcal{D})>\eps$.
\paragraph{The case $\mathcal{D}_1=\mathcal{D}_2$.}
We denote by $\mathds{P}_{\mathcal{D}_1,\mathcal{D}_2}$
the probability distribution on $([n]\times [n])^{\mathds{N}}$ with independent marginals $(X_i,Y_i)$ of distribution $\mathcal{D}_1\otimes \mathcal{D}_2$.  Let $Z=(X_1,Y_1\dots ,X_{\tau_1},Y_{\tau_1})$.  Let $\mathcal{D}'_1,\mathcal{D}'_2$ be two distributions such that  $\TV(\mathcal{D}'_1,\mathcal{D}'_2)>\eps$. Data processing property of Kullback-Leibler’s divergence implies
\begin{align}\label{proof-kll}
\KL\left(\mathds{P}_{\mathcal{D}_1,\mathcal{D}_2}^Z, \mathds{P}_{\mathcal{D}'_1,\mathcal{D}'_2}^Z\right)\ge \KL\left(\mathds{P}_{\mathcal{D}_1,\mathcal{D}_2}(\tau_1<\infty), \mathds{P}_{\mathcal{D}'_1,\mathcal{D}'_2}(\tau_1<\infty)\right) \;. 
\end{align}
By definition of $\tau_1$ we have $\mathds{P}_{\mathcal{D}_1,\mathcal{D}_2}(\tau_1<\infty)\ge 1-\delta$ and $\mathds{P}_{\mathcal{D}'_1,\mathcal{D}'_2}(\tau_1<\infty)\le \delta$.
Tensorization property and Wald's lemma (\ref{Wald-lemma}) lead to
\begin{align*}
\KL\left(\mathds{P}_{\mathcal{D}_1,\mathcal{D}_2}^Z, \mathds{P}_{\mathcal{D}'_1,\mathcal{D}'_2}^Z\right)=\mathds{E}(\tau_1(T,\mathcal{D}_1,\mathcal{D}_1))\KL(\mathcal{D}_{1} , \mathcal{D}'_1)+\mathds{E}(\tau_1(T,\mathcal{D}_1,\mathcal{D}_2))\KL(\mathcal{D}_{2}  , \mathcal{D}'_2)\;.
\end{align*} 
The inequality \ref{proof-kll} becomes 
\begin{align*}
\mathds{E}(\tau_1(T,\mathcal{D}_1,\mathcal{D}_2))\KL(\mathcal{D}_{1} , \mathcal{D}'_1)+\mathds{E}(\tau_1(T,\mathcal{D}_1,\mathcal{D}_2))\KL(\mathcal{D}_{2}  , \mathcal{D}'_2)\ge \KL(1-\delta , \delta)\ge \log1/3\delta\;,
\end{align*}
which is valid for all distribution $\mathcal{D}'_1$ and $\mathcal{D}_2'$ such that $\TV(\mathcal{D}'_1,\mathcal{D}'_2)>\eps$, consequently  
\begin{align*}
\mathds{E}(\tau_1(T,\mathcal{D}_1,\mathcal{D}_2))\ge\frac{\log1/3\delta}{ \inf_{\mathcal{D}'_{1,2} \text{s.t. } \TV(\mathcal{D}'_1,\mathcal{D}'_2)>\eps  }  \KL(\mathcal{D}_{1} , \mathcal{D}'_1)+\KL(\mathcal{D}_{2}  , \mathcal{D}'_2)}\;.
\end{align*}

\paragraph{The case $\TV(\mathcal{D}_1,\mathcal{D}_2)>\eps$.}
Likewise we prove for $Z=(X_1,Y_1\dots ,X_{\tau_2},Y_{\tau_2})$ and $\mathcal{D}$ a distribution on $[n]$. 
\begin{align*}
\mathds{E}(\tau_2(T,\mathcal{D}_1,\mathcal{D}_2))\KL(\mathcal{D}_{1} , \mathcal{D})+\mathds{E}(\tau_2(T,\mathcal{D}_1,\mathcal{D}_2))\KL(\mathcal{D}_{2}  , \mathcal{D})&=\KL\left(\mathds{P}_{\mathcal{D}_1,\mathcal{D}_2}^Z, \mathds{P}_{\mathcal{D},\mathcal{D}}^Z\right)
\\&\ge \KL\left(\mathds{P}_{\mathcal{D}_1,\mathcal{D}_2}(\tau_2<\infty) , \mathds{P}_{\mathcal{D},\mathcal{D}}
(\tau_2<\infty)\right) 
\\&\ge \KL(1-\delta, \delta)
\\&\ge\log1/3\delta\;.
\end{align*} 
which is valid for all distribution $\mathcal{D}$, consequently  
\begin{align*}
\mathds{E}(\tau_2(T,\mathcal{D}_1,\mathcal{D}_2))\ge\frac{\log1/3\delta}{ \inf_{\mathcal{D} }  \KL(\mathcal{D}_1 , \mathcal{D})+\KL(\mathcal{D}_2 , \mathcal{D})}.
\end{align*}
\end{proof}


\section{Technical lemmas}

\subsection{Kullback-Leibler divergence}\label{KL-sec}
\begin{definition}[Kullback Leibler divergence]
The Kullback Leibler divergence is defined for two distributions $p$ and $q$ on $[n]$ as 
\begin{align*}
\KL(p,q)=\sum_{i=1}^n p_i\log\left(\frac{p_i}{q_i}\right)\;.
\end{align*}We denote by $\KL(p,q)=\KL(\mathcal{B}(p),\mathcal{B}(q))$.

\end{definition}
Kullback-Leibler’s divergence satisfies data-processing  and tensorization properties: 
\begin{proposition}
 Let $p,p',q$ and $q'$ distributions on $[n]$, we have 
 \begin{itemize}
     \item \textbf{Non negativity} $\KL(p,q)\ge0$. 
     \item \textbf{Data processing} Let $X$ a random variable and $g$ a function. Define the random variable $Y=g(X)$,  we have 
\begin{align}
\KL\left(p^X,q^X\right)\ge \KL\left(p^Y,q^Y\right).
\end{align}
    \item \textbf{Tensorization}  \begin{align*}
\KL(p\otimes p',q\otimes q')=\KL(p,q)+\KL(p',q').
\end{align*} 
 \end{itemize}
 
\end{proposition}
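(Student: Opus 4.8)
The plan is to prove the three items in turn, using only convexity of the map $t\mapsto t\log t$ (equivalently, concavity of $\log$) together with the fact that $p$ and $q$ are probability vectors. Throughout I would adopt the usual conventions $0\log 0 = 0$ and $\KL(p,q)=+\infty$ as soon as $q_i=0<p_i$ for some $i$, so that every sum below ranges over $\{i : p_i>0\}$.

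For non-negativity the quickest route is Jensen's inequality applied to the concave function $\log$. Writing $-\KL(p,q)=\sum_i p_i\log(q_i/p_i)$ and using concavity, $\sum_i p_i\log(q_i/p_i)\le \log\big(\sum_i p_i\cdot q_i/p_i\big)=\log\big(\sum_i q_i\big)\le 0$, whence $\KL(p,q)\ge 0$, with equality exactly when $p=q$.

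Tensorization is a one-line computation requiring no inequality. Since $\log\frac{p_ip'_j}{q_iq'_j}=\log\frac{p_i}{q_i}+\log\frac{p'_j}{q'_j}$, the double sum defining $\KL(p\otimes p',q\otimes q')$ splits into two pieces, and using $\sum_j p'_j=1$ and $\sum_i p_i=1$ it collapses to $\KL(p,q)+\KL(p',q')$.

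The only item with genuine content is the data-processing inequality, for which I would invoke the log-sum inequality (itself a consequence of the convexity of $t\mapsto t\log t$): for nonnegative reals, $\sum_x a_x\log(a_x/b_x)\ge\big(\sum_x a_x\big)\log\big(\sum_x a_x/\sum_x b_x\big)$. Writing the pushforward masses $p^Y_y=\sum_{x\in g^{-1}(y)}p^X_x$ and $q^Y_y=\sum_{x\in g^{-1}(y)}q^X_x$, I would group the sum defining $\KL(p^X,q^X)$ according to the fibres $g^{-1}(y)$ and apply the log-sum inequality on each fibre (with $a_x=p^X_x$, $b_x=q^X_x$) to obtain $\sum_{x\in g^{-1}(y)}p^X_x\log(p^X_x/q^X_x)\ge p^Y_y\log(p^Y_y/q^Y_y)$; summing over $y$ then yields $\KL(p^X,q^X)\ge\KL(p^Y,q^Y)$. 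There is no serious obstacle here: the main points to get right are the bookkeeping of the fibres, the edge conventions when some masses vanish, and a clean derivation of the log-sum inequality from convexity (or, alternatively, from the already-established non-negativity applied to the conditional laws within each fibre).
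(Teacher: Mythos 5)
Your proof is correct. The paper states this proposition without proof, treating all three properties as classical facts about the Kullback-Leibler divergence, so there is no argument to compare against; your route --- Jensen's inequality (concavity of $\log$) for non-negativity, direct splitting of the double sum for tensorization, and the log-sum inequality applied fibre-by-fibre of $g$ for data processing --- is the standard derivation, and your handling of the conventions $0\log 0=0$ and $q_i=0<p_i$ is appropriate.
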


\subsection{Poissonization}
The Poisson law of parameter $\lambda$ is denoted $Poi(\lambda)$ and defined as follows. 
\begin{align*}
\forall k \in \mathds{N},  \space ~~~~ \mathds{P}(Poi(\lambda)=k) = \frac{\lambda^k}{k!}e^{-\lambda}\;.
\end{align*}
Poisson law is important for the analysis of testing' algorithms. In fact, some important random variables becomes independent when we take a number of samples following a Poisson law. 
\begin{lemma}[Poissonization]\label{poisson}
Let $k\sim Poi(\tau)$ and $X=(X_1,\dots,X_k)$ i.i.d samples from a distribution $p$ on $[n]$. For $i\in [n]$, we denote $Y_i$ the number of times $i$ appears in the tuple  $X$. We have 
\begin{enumerate}
\item $\{Y_1,\dots,Y_n\}$ are independent. 
\item For all $i\in [n]$, $Y_i\sim Poi(\tau p_i)$.
\end{enumerate}
\end{lemma}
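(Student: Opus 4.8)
The plan is to compute the joint probability mass function of $(Y_1,\dots,Y_n)$ directly and exhibit it as a product of Poisson mass functions; this single computation establishes both claims at once, since a joint law that factorizes into single-coordinate terms is exactly a product of independent marginals.

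First I would condition on the (random) sample size $k$. Given $k=m$, the vector $(Y_1,\dots,Y_n)$ records how many of $m$ i.i.d.\ draws from $p$ land on each symbol, so conditionally it is $Multinom(m,(p_1,\dots,p_n))$. Fix nonnegative integers $k_1,\dots,k_n$ and set $m=\sum_{i=1}^n k_i$. Since the event $\{Y_1=k_1,\dots,Y_n=k_n\}$ forces $k=m$, I can write
\begin{align*}
\mathds{P}(Y_1=k_1,\dots,Y_n=k_n)
&=\mathds{P}(k=m)\,{m \choose k_1\cdots k_n}\prod_{i=1}^n p_i^{k_i}
\\&=\frac{\tau^m}{m!}e^{-\tau}\cdot\frac{m!}{\prod_{i=1}^n k_i!}\prod_{i=1}^n p_i^{k_i}
\\&=e^{-\tau}\prod_{i=1}^n\frac{(\tau p_i)^{k_i}}{k_i!}\;.
\end{align*}

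The key algebraic step is to use the normalization $\sum_{i=1}^n p_i=1$ to split the exponential as $e^{-\tau}=\prod_{i=1}^n e^{-\tau p_i}$. Substituting this in yields
\begin{align*}
\mathds{P}(Y_1=k_1,\dots,Y_n=k_n)
=\prod_{i=1}^n e^{-\tau p_i}\frac{(\tau p_i)^{k_i}}{k_i!}
=\prod_{i=1}^n\mathds{P}(Poi(\tau p_i)=k_i)\;.
\end{align*}
Because the joint mass function factorizes into a product of terms, each depending on a single coordinate $k_i$, the variables $Y_1,\dots,Y_n$ are independent; the $i$-th factor is precisely the mass function of $Poi(\tau p_i)$, and summing out the remaining coordinates confirms the marginal $Y_i\sim Poi(\tau p_i)$.

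There is no genuine obstacle here: this is a standard identity whose entire content is the cancellation of $m!$ against the multinomial coefficient and the factoring of $e^{-\tau}$ using $\sum_i p_i=1$. The only point deserving a word of care is that the multinomial description of $(Y_1,\dots,Y_n)\mid k=m$ is itself a lemma about sampling, so I would state it explicitly (or cite it) rather than treat it as self-evident, and I would note that the displayed identity holds for every tuple of nonnegative integers, which is what legitimizes reading off both independence and the marginals.
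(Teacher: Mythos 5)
Your proof is correct and complete: conditioning on $k=m$, using the multinomial law of $(Y_1,\dots,Y_n)$ given $m$, cancelling $m!$ against the multinomial coefficient, and splitting $e^{-\tau}=\prod_i e^{-\tau p_i}$ via $\sum_i p_i=1$ is exactly the standard argument, and the factorized joint mass function does yield both independence and the $Poi(\tau p_i)$ marginals in one stroke. For comparison, the paper states this Poissonization lemma without any proof, treating it as a known fact, so your write-up simply supplies the canonical derivation that the authors omitted; there is nothing in the paper to diverge from, and no gap in what you wrote.
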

\subsection{Wald's lemma}

\begin{lemma}[\cite{wald1944cumulative}]\label{Wald-lemma}
Let $(X_n)_{n\ge0}$ i.i.d random variables and $N\in \mathds{N}$ a random variable independent of $(X_n)_n$. Suppose that $N$ and $X_1$ have finite expectations. we have   
\begin{align*}
\mathds{E}(X_1+\dots+X_N)=\mathds{E}(N)\mathds{E}(X_1)\;.
\end{align*}
\end{lemma}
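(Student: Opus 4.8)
The plan is to prove this special case of Wald's identity --- the one in which the index $N$ is independent of the entire sequence $(X_n)_n$ --- by the ``layer-cake'' decomposition of the random sum, which converts the claim into an application of Fubini's theorem combined with the elementary tail-sum identity $\mathds{E}(N)=\sum_{n\ge 1}\mathds{P}(N\ge n)$ valid for an $\mathds{N}$-valued random variable.

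First I would rewrite the random partial sum as an infinite series with indicator weights:
\begin{align*}
X_1+\dots+X_N=\sum_{n\ge 1}X_n\,\mathds{1}_{\{N\ge n\}}\;.
\end{align*}
This identity holds pointwise, since on the event $\{N=k\}$ exactly the indicators $\mathds{1}_{\{N\ge 1\}},\dots,\mathds{1}_{\{N\ge k\}}$ equal $1$ and all later ones vanish, so the right-hand side collapses to $X_1+\dots+X_k$.

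Next, taking expectations, I would interchange $\mathds{E}$ with the infinite sum; justifying this exchange is the one genuinely delicate point of the argument. I would invoke Fubini--Tonelli after verifying absolute integrability: because $\{N\ge n\}$ is $\sigma(N)$-measurable and $N$ is independent of $(X_n)_n$, the variables $X_n$ and $\mathds{1}_{\{N\ge n\}}$ are independent, whence
\begin{align*}
\sum_{n\ge 1}\mathds{E}\bigl(|X_n|\,\mathds{1}_{\{N\ge n\}}\bigr)=\mathds{E}(|X_1|)\sum_{n\ge 1}\mathds{P}(N\ge n)=\mathds{E}(|X_1|)\,\mathds{E}(N)<\infty\;,
\end{align*}
where finiteness uses precisely the two hypotheses, that $X_1$ and $N$ have finite expectation. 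This absolute summability licenses the interchange.

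Finally, repeating the computation without absolute values yields
\begin{align*}
\mathds{E}(X_1+\dots+X_N)=\sum_{n\ge 1}\mathds{E}(X_n)\,\mathds{P}(N\ge n)=\mathds{E}(X_1)\sum_{n\ge 1}\mathds{P}(N\ge n)=\mathds{E}(X_1)\,\mathds{E}(N)\;,
\end{align*}
using $\mathds{E}(X_n)=\mathds{E}(X_1)$ (identical distributions) and the tail-sum formula for $\mathds{E}(N)$. Everything beyond the Fubini step is bookkeeping. An equally valid route would be to condition on $N$ directly, writing $\mathds{E}(X_1+\dots+X_N)=\sum_k\mathds{E}(X_1+\dots+X_k)\,\mathds{P}(N=k)$ --- here independence is exactly what permits dropping the conditioning inside each expectation --- but the layer-cake version makes the role of the two finiteness hypotheses most transparent.
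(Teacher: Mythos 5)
Your proof is correct, but there is nothing in the paper to compare it against: the paper states this lemma with a citation to Wald (1944) and gives no proof at all, treating it as a classical black box. Your layer-cake argument is the standard and right one for this version of the statement. Two points worth noting. First, your use of independence is exactly where the paper's (stronger than usual) hypothesis matters: you need $X_n$ independent of $\mathds{1}_{\{N\ge n\}}$, which follows immediately here because $N$ is assumed independent of the \emph{entire} sequence; in the classical Wald identity one only assumes $N$ is a stopping time, and then one instead uses that $\{N\ge n\}=\{N\le n-1\}^c$ is measurable with respect to $X_1,\dots,X_{n-1}$ and hence independent of $X_n$ --- your proof adapts to that setting with a one-line change, while your alternative route of conditioning on $N$ and dropping the conditioning does not. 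Second, your handling of the interchange is complete: verifying $\sum_{n\ge 1}\mathds{E}\bigl(|X_n|\,\mathds{1}_{\{N\ge n\}}\bigr)=\mathds{E}(|X_1|)\,\mathds{E}(N)<\infty$ before repeating the computation without absolute values is precisely the Fubini--Tonelli justification that sloppier write-ups omit, and it makes transparent why both finiteness hypotheses are needed. No gaps.
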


\subsection{Tools for non asymptotic inequalities}
\label{sec:tools}
We group here different lemmas that help us to deal with the kl-divergence or logarithmic relations in order to find non asymptotic results. 
We start by giving some useful lemmas for the Kullback-Leibler’s divergence  between Bernoulli variables.
\begin{lemma}[Lemmas for kl-divergence.]\label{lemma-kl} Let $q>p$ two numbers in $[0,1]$. Then
\begin{itemize}
\item $2(p-q)^2\le \KL(p , q)\le \frac{(p-q)^2}{q(1-q)},$
\item $\KL(p , q)\underset{q\rightarrow p}{\sim} \frac{(p-q)^2}{2q(1-q)},$
\item $\KL(q , p)=\int_p^q du\int_p^u dv \frac{1}{v(1-v)}.$ 
\end{itemize}

\end{lemma}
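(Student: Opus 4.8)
The plan is to prove the third identity first, since the other two both follow cleanly from it. Writing $\KL(q,p)=q\log\frac{q}{p}+(1-q)\log\frac{1-q}{1-p}$ and viewing it as a function $f(x)=\KL(x,p)$ of its first argument, I would compute $f'(x)=\log\frac{x(1-p)}{p(1-x)}$ and $f''(x)=\frac{1}{x(1-x)}$, and observe that $f(p)=0$ and $f'(p)=\log 1=0$. Applying the fundamental theorem of calculus twice then gives $\KL(q,p)=f(q)=\int_p^q f'(u)\,du=\int_p^q\big(\int_p^u\frac{dv}{v(1-v)}\big)\,du$, which is exactly the claimed representation. The same computation with the roles of the two arguments exchanged gives $\KL(p,q)=\int_q^p du\int_q^u\frac{dv}{v(1-v)}$, and after reversing both integration directions (two cancelling sign changes, using $q>p$) this reads $\KL(p,q)=\int_p^q du\int_u^q\frac{dv}{v(1-v)}$, the form convenient for the two-sided bound.

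For the bound I would handle the two inequalities separately, since they come from opposite comparisons. The lower bound is Pinsker's inequality: since $v(1-v)\le 1/4$ forces $\frac{1}{v(1-v)}\ge 4$ on $(0,1)$, the representation gives $\KL(p,q)\ge 4\int_p^q(q-u)\,du=2(q-p)^2$. The upper bound, however, does not follow from the representation, because the integrand can exceed $\frac{1}{q(1-q)}$ on $[p,q]$; instead I would route through the $\chi^2$-divergence, using $\KL(p,q)\le \chi^2(\mathcal{B}(p)\,\|\,\mathcal{B}(q))$, which is just $\log t\le t-1$ applied to the two atoms and summed. Evaluating the $\chi^2$-divergence of Bernoullis directly, $\chi^2(\mathcal{B}(p)\,\|\,\mathcal{B}(q))=\frac{(p-q)^2}{q}+\frac{(p-q)^2}{1-q}=\frac{(p-q)^2}{q(1-q)}$, yields the stated upper bound.

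Finally, for the asymptotic equivalence I would fix $p\in(0,1)$ and expand $g(q)=\KL(p,q)$ about $q=p$. Direct differentiation gives $g'(q)=-\frac{p}{q}+\frac{1-p}{1-q}$, so $g(p)=g'(p)=0$, while $g''(q)=\frac{p}{q^2}+\frac{1-p}{(1-q)^2}$ gives $g''(p)=\frac{1}{p(1-p)}$. Taylor's theorem then yields $\KL(p,q)=\frac{(q-p)^2}{2p(1-p)}+o((q-p)^2)$, and since $q(1-q)\to p(1-p)$ as $q\to p$ this is equivalent to $\frac{(p-q)^2}{2q(1-q)}$. The same limit can alternatively be read off the double integral by replacing $\frac{1}{v(1-v)}$ with its value $\frac{1}{p(1-p)}$ at $v=p$ up to a factor $1+o(1)$.

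There is no genuine obstacle here; the only care points are purely clerical. The first is bookkeeping the orientation of the nested integrals when $q>p$ so that the two sign flips cancel and the integrand is integrated over the correct region. The second is recognising that the upper bound must be obtained from the $\chi^2$-comparison rather than from the integral representation, since the monotonicity of $v\mapsto\frac{1}{v(1-v)}$ on $[p,q]$ runs the wrong way. Everything else is elementary single-variable calculus, valid on the interior $(0,1)$ where all the expressions are finite.
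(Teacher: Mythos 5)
Your proposal is correct and follows essentially the same route as the paper's (sketched) proof: the third identity by direct computation of the double integral, the upper bound via $\log t\le t-1$ (equivalently the $\chi^2$ comparison), and the equivalence by Taylor expansion of the logarithm. The only cosmetic difference is that you derive the lower bound $2(p-q)^2$ from the integral representation using $\frac{1}{v(1-v)}\ge 4$, whereas the paper simply cites Pinsker's inequality; your version is self-contained and the orientation bookkeeping you flag is handled correctly.
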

\paragraph{Sketch of proof.} The LHS of the first inequality is Pinsker's inequality, the RHS can be proven using the inequality $\log(1+x)\le x$, the second equivalence can be found by developing the $\log$ function and the third equality is proven by calculating the integral.

\begin{lemma}\label{lemma-kl2}[Developing kl]Let $q, \eps $ and $\alpha$ positive real numbers such that $q+\eps<1$ and $\alpha<1$, we have for $\alpha$ close enough to $1$
\begin{align*}
\frac{1}{\KL(q+\alpha\eps , q)}\le \frac{1}{\KL(q+\eps , q)}+(1-\alpha)\sup_{[q,q+\eps]}\frac{1}{x(1-x)}\;.
\end{align*}

\end{lemma}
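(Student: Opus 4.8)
The plan is to control the gap between the two reciprocals directly. Writing $A:=\KL(q+\eps,q)$ and $B:=\KL(q+\alpha\eps,q)$, I would start from the algebraic identity
\[
\frac{1}{\KL(q+\alpha\eps,q)}-\frac{1}{\KL(q+\eps,q)}=\frac{A-B}{AB}.
\]
Since $x\mapsto \KL(x,q)$ is increasing on $(q,1)$ and $\alpha<1$, the numerator $A-B$ is nonnegative, so the left-hand side is nonnegative and the task reduces to bounding this single fraction from above. The idea is to bound the numerator using the integral formula for $\KL$ and the denominator using Pinsker's inequality, both recorded in Lemma~\ref{lemma-kl}.

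For the numerator I would invoke the third item of Lemma~\ref{lemma-kl}, which gives $\KL(q+t,q)=\int_q^{q+t}\!\int_q^{u}\frac{dv\,du}{v(1-v)}$. Subtracting the two divergences collapses the outer integral onto $[q+\alpha\eps,\,q+\eps]$, so that, with $M:=\sup_{[q,q+\eps]}\frac{1}{x(1-x)}$ bounding the inner integrand uniformly,
\[
A-B=\int_{q+\alpha\eps}^{q+\eps}\Big(\int_q^{u}\frac{dv}{v(1-v)}\Big)\,du\le M\int_{q+\alpha\eps}^{q+\eps}(u-q)\,du=\frac{M\eps^2(1-\alpha^2)}{2}.
\]
This is precisely the step that produces both the factor $1-\alpha$ (through $1-\alpha^2=(1-\alpha)(1+\alpha)$) and the supremum $M$ appearing in the statement.

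For the denominator I would use the Pinsker lower bound $\KL(p,q)\ge 2(p-q)^2$ from the first item of Lemma~\ref{lemma-kl}, giving $A\ge 2\eps^2$ and $B\ge 2\alpha^2\eps^2$, hence $AB\ge 4\alpha^2\eps^4$. Combining the two estimates yields
\[
\frac{1}{\KL(q+\alpha\eps,q)}-\frac{1}{\KL(q+\eps,q)}\le \frac{M\eps^2(1-\alpha^2)/2}{4\alpha^2\eps^4}=(1-\alpha)\,\frac{(1+\alpha)\,M}{8\alpha^2\eps^2},
\]
and taking $\alpha$ close enough to $1$ reduces the prefactor $\frac{1+\alpha}{8\alpha^2}$ to a constant near $\tfrac14$, leaving the bound in the advertised form: $(1-\alpha)$ times a multiple of $\sup_{[q,q+\eps]}\frac{1}{x(1-x)}$.

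The step I expect to be the main obstacle is the denominator. The product $AB$ is only of order $\eps^4$, so any slack in lower-bounding it propagates directly into the final prefactor and, in particular, governs the $\eps$-dependence of the constant multiplying $M$; a crude bound here is what forces the hypothesis ``$\alpha$ close enough to $1$'' to absorb the residual factor $\frac{1+\alpha}{8\alpha^2}$. An alternative, essentially equivalent, route is to apply the mean value theorem to $\alpha\mapsto 1/\KL(q+\alpha\eps,q)$ and then bound its derivative via the same two ingredients (the integrand $\log\frac{(q+\xi\eps)(1-q)}{q(1-q-\xi\eps)}=\int_q^{q+\xi\eps}\frac{dv}{v(1-v)}\le \xi\eps M$ in the numerator and Pinsker in the denominator); this gives the same scaling and confirms that the quadratic lower bound on $\KL$ is the pivotal quantitative input.
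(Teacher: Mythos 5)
Your computations are correct, and they follow essentially the same route as the paper's own proof (the double-integral representation of $\KL$ for the numerator, Pinsker for the denominator; the paper merely replaces your exact identity $\frac{1}{B}-\frac{1}{A}=\frac{A-B}{AB}$ by the device $\frac{1}{1-x}\le 1+2x$). The gap is entirely in your last claim. What you actually establish is
\begin{align*}
\frac{1}{\KL(q+\alpha\eps,q)}-\frac{1}{\KL(q+\eps,q)}\le (1-\alpha)\,\frac{(1+\alpha)}{8\alpha^2\eps^2}\,\sup_{[q,q+\eps]}\frac{1}{x(1-x)}\;,
\end{align*}
and the factor $1/\eps^2$ cannot be absorbed by taking $\alpha$ close to $1$: both your bound and the target $(1-\alpha)\sup_{[q,q+\eps]}\frac{1}{x(1-x)}$ vanish linearly in $(1-\alpha)$, so letting $\alpha\to 1$ rescales the two sides identically and leaves the mismatch of prefactors intact. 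Worse, the statement itself is false whenever $\eps$ is small compared to $q(1-q)$. Since $\KL(q+t,q)\sim t^2/(2q(1-q))$ for small $t$, the left-hand difference behaves like $(1-\alpha)\cdot 4q(1-q)/\eps^2$, which dwarfs $(1-\alpha)\sup_{[q,q+\eps]}\frac{1}{x(1-x)}\approx (1-\alpha)/(q(1-q))$ once $\eps<2q(1-q)$; numerically, $q=1/4$, $\eps=10^{-2}$, $\alpha=0.99$ gives a left-hand difference of about $74$ against a right-hand side of about $0.05$. So no choice of $\alpha$ close to $1$ can rescue the advertised form, and your proof should not be expected to reach it.

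What your careful bookkeeping exposes is an algebra slip in the paper's proof, at exactly the step you flagged as the main obstacle. The paper bounds $\frac{2(1-\alpha)\eps^2}{\KL(q+\eps,q)^2}\sup_{[q,q+\eps]}\frac{1}{v(1-v)}$ by $\frac{2(1-\alpha)\eps^2}{2\eps^2}\sup_{[q,q+\eps]}\frac{1}{v(1-v)}$, i.e.\ it uses $\KL(q+\eps,q)^2\ge 2\eps^2$; but Pinsker gives $\KL(q+\eps,q)\ge 2\eps^2$, hence only $\KL(q+\eps,q)^2\ge 4\eps^4$, and the dropped factor $2\eps^2$ is precisely the $1/\eps^2$ your derivation keeps. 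The inequality that is both true and actually needed downstream is your corrected version, with an extra $\Theta(1/\eps^2)$ multiplying $(1-\alpha)\sup_{[q,q+\eps]}\frac{1}{x(1-x)}$: in the sample-complexity proofs the lemma is invoked with $1-\alpha\approx\log(1/\delta)^{-1/3}$, and your bound then yields exactly the $\mathcal{O}\bigl(\log^{2/3}(1/\delta)\,\eps^{-2}\bigr)$ second-order terms claimed there. The right conclusion is therefore not that your argument needs repair, but that the lemma's statement (and the paper's proof of it) needs the $1/\eps^2$ correction you found.
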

\begin{proof}
We use the inequality $\frac{1}{1-x}\le 1+2x$ for $0<x<1/2$.  We write 
\begin{align*}
\frac{1}{\KL(q+\alpha\eps , q)}= \frac{1}{\KL(q+\eps , q)(1-x)}\;,
\end{align*}
where $x=\frac{\KL(q+\eps , q)-\KL(q+\alpha\eps , q)}{\KL(q+\eps , q)}<\frac{1}{2}$ if $\alpha$ is close enough to $1$. Hence 
\begin{align*}
\frac{1}{\KL(q+\alpha\eps , q)}&\le \frac{1}{\KL(q+\eps , q)(1-x)}
\\&\le \frac{1}{\KL(q+\eps , q)}(1+2x)
\\&\le \frac{1}{\KL(q+\eps , q)}+2 \frac{\KL(q+\eps , q)-\KL(q+\alpha\eps , q)}{\KL(q+\eps , q)^2}
\\ &\le \frac{1}{\KL(q+\eps , q)}+ \frac{2}{\KL(q+\eps , q)^2}\int_{q+\alpha\eps}^{q+\eps} du\int_q^u dv \frac{1}{v(1-v)}
\\ &\le \frac{1}{\KL(q+\eps , q)}+ \frac{2(1-\alpha)\eps^2}{\KL(q+\eps , q)^2}\sup_{[q,q+\eps]}\frac{1}{v(1-v)}
\\ &\le \frac{1}{\KL(q+\eps , q)}+ \frac{2(1-\alpha)\eps^2}{2\eps^2}\sup_{[q,q+\eps]}\frac{1}{v(1-v)}
\\ &\le \frac{1}{\KL(q+\eps , q)}+ (1-\alpha)\sup_{[q,q+\eps]}\frac{1}{v(1-v)}\;.
\end{align*}
\end{proof}
When we deal with inequalities involving $t$ and $\log t$ (or $\log\log t$) and want to deduce inequalities only on $t$, the following lemma proves to be useful.  
\begin{lemma}\label{lemma-log} Let $t,a>1$ and $b$ real numbers. We have the following implications: 
\begin{itemize}
\item If $b\ge a+1 :$\begin{align*}
t\ge b+2a\log(b) \Rightarrow t\ge b+a\log(t)\;,
\end{align*}
\item If $b\ge 1:$\begin{align*}t\ge b-a\log(t) \Rightarrow t\ge b-a\log(b)\;,
\end{align*}
\item If $b\ge 2a : $ \begin{align*}
t\ge b+2a\log(\log(b)+1) \Rightarrow t\ge b+a\log(\log(t)+1)\;.
\end{align*}
\end{itemize}
\end{lemma}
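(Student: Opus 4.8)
The plan is to prove the three implications separately, but to handle each one by the same device: rewrite the desired conclusion as nonnegativity of a ``gap'' function of $t$, show that this function is increasing beyond the stated threshold $t_0$, and thereby reduce the whole claim to verifying nonnegativity \emph{at} $t_0$. This turns each implication into a single elementary inequality between $b$, $a$ and a logarithm of $t_0$, which I would then settle using $\log(1+x)\le x$ and the crude bound $b\ge 2\log b$.

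For the first implication I would set $f(t)=t-b-a\log t$, so that the conclusion is $f(t)\ge 0$. Since $f'(t)=1-a/t$ and $t\ge b+2a\log b\ge b\ge a+1>a$, the function $f$ is increasing on the relevant range, so it suffices to check $f(t_0)\ge 0$ at $t_0=b+2a\log b$. A direct computation gives $f(t_0)=a\,(2\log b-\log t_0)$, so the required inequality becomes $b^2\ge t_0=b+2a\log b$. I would finish this using $a\le b-1$ (from $b\ge a+1$) to get $2a\log b\le 2(b-1)\log b$, and then dividing $b(b-1)\ge 2(b-1)\log b$ by $b-1>0$ to reduce it to the elementary fact $b\ge 2\log b$.

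The second implication is the easiest, and I would argue it by contraposition. Assuming $t<b-a\log b$ and noting $b-a\log b\le b$ (because $b\ge 1$ makes $a\log b\ge 0$), I obtain $t<b$, hence $\log t<\log b$; then $t+a\log t<(b-a\log b)+a\log b=b$, which contradicts the hypothesis $t\ge b-a\log t$, i.e. $t+a\log t\ge b$. No monotonicity computation is needed here.

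The third implication has the same architecture but is the main obstacle, since its threshold check now involves a $\log\log$ term. I would set $g(t)=t-b-a\log(\log t+1)$, whose derivative $1-a/\bigl(t(\log t+1)\bigr)$ is positive throughout the range because $t\ge b\ge 2a>a$ and $\log t+1>1$ force $t(\log t+1)>a$. So again it suffices to verify $g(t_0)\ge 0$ at $t_0=b+2a\log(\log b+1)$, which is equivalent to $(\log b+1)^2\ge\log t_0+1$. The crux is controlling $t_0$: using $b\ge 2a$ I would bound $t_0\le b\bigl(1+\log(\log b+1)\bigr)$, so with $u=\log b$ the claim reduces to $(u+1)^2\ge u+1+\log\!\bigl(1+\log(u+1)\bigr)$, i.e. $u^2+u\ge\log\!\bigl(1+\log(u+1)\bigr)$. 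I would then close the argument by applying $\log(1+x)\le x$ twice, namely $\log(1+\log(u+1))\le\log(u+1)\le u\le u+u^2$. Assembling the three cases completes the proof.
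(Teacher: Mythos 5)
Your proof is correct and follows essentially the same route as the paper's: rewrite the conclusion as nonnegativity of a gap function, use $f'(t)=1-a/t>0$ (resp.\ $g'(t)=1-a/(t(\log t+1))>0$) to reduce everything to a check at the threshold, and close with an elementary logarithmic bound (you use $b\ge 2\log b$ where the paper uses $\log b\le b/e$; both work). The paper writes out only the first bullet and declares the others ``similar''; your contrapositive argument for the second bullet and the double application of $\log(1+x)\le x$ for the third correctly supply those omitted details.
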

\begin{proof}We prove only the first statement, the others being similar.
Let $f(t)=t-b-a\log(t)$, we have $f'(t)=1-a/t$ thus $f$ is increasing on $(a,+\infty)$. Let $t\ge b+2a\log(b)>a$, 
\begin{align*}
f(t)\ge f(b+2a\log(b))&=b+2a\log(b)-b-a\log(b+2a\log(b))
\\&=a\log(b)-a\log(1+2a\log(b)/b))
\\&\ge a\log(1+a)-a\log(1+2ab/eb) \ \ \text{ because } \log(b)\le b/e
\\&\ge 0\;.
\end{align*}
\end{proof}
For instance, by applying this lemma, we can obtain:
\begin{lemma}\label{lemma_general-case}
Recall the definition of $N_\eta$: \begin{align*}
N_\eta=\max\Bigg\{
 \frac{128}{C^2}\frac{\log(\frac{\pi^2}{3\delta})}{\eta^2}+\frac{512e}{C^2\eta^2}\log\left(\log\left(\frac{128\log(\frac{\pi^2}{3\delta})}{\eta^2C^2}\right)+1\right)+\frac{16c^2}{C^2\eta^2},
\\ \left( \frac{128}{C^2}\frac{n^2\log(\frac{\pi^2}{3\delta})}{\eta^4}+\frac{512en^2}{C^2\eta^4}\log\left(   \log\left(   \frac{128}{C^2}\frac{n^2\log(\frac{\pi^2}{3\delta})}{\eta^4}    \right)+1    \right)+\frac{16c^2n^2}{\eta^4C^2}\right)^{1/3},
\\ \left( \frac{128}{C^2}\frac{n\log(\frac{\pi^2}{3\delta})}{\eta^4}+\frac{512en}{C^2\eta^4}\log\left(   \log\left(   \frac{128}{C^2}\frac{n\log(\frac{\pi^2}{3\delta})}{\eta^4}    \right)+1    \right)+\frac{16c^2n}{\eta^4C^2}\right)^{1/2}
\Bigg\} \;.
\end{align*}
Let  $\eta>0$, if $t\ge N_\eta$, then 
\begin{align*}
\min\left\{ t\eta, \frac{t^2\eta^2}{n},\frac{t^{3/2}\eta^2}{\sqrt{n}}\right\}\ge \frac{4}{C}\sqrt{2t\log\left(\frac{\pi^2}{3\delta}\right)+4et\log(\log(t)+1)} +\frac{2c}{C}\sqrt{t}\;.
\end{align*}\end{lemma}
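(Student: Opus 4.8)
The plan is to reduce the single inequality to three independent ones, one per term of the minimum on the left, using the elementary fact that $\min\{A,B,C\}\ge R$ holds exactly when $A\ge R$, $B\ge R$ and $C\ge R$. Writing the right-hand side as $P_t+Q_t$ with $P_t=\frac{4}{C}\sqrt{2t\log\left(\frac{\pi^2}{3\delta}\right)+4et\log(\log(t)+1)}$ and $Q_t=\frac{2c}{C}\sqrt{t}$, I would observe that for any of the three terms $X$ it suffices to prove the two bounds $X\ge 2P_t$ and $X\ge 2Q_t$, since adding them gives $P_t+Q_t\le X/2+X/2=X$. This split is convenient because $2P_t$ absorbs all the $\log$ and $\log\log$ dependence while $2Q_t$ is a clean $\sqrt{t}$ term, and the three summands inside each expression of $N_\eta$ correspond precisely to (the two pieces of) the first bound and to the second bound.

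For each $X\in\{t\eta,\;t^2\eta^2/n,\;t^{3/2}\eta^2/\sqrt{n}\}$ I would square $X\ge 2P_t$ and cancel one factor of $t$. This turns the inequality into a lower bound on $t$, $t^3$, $t^2$ respectively, of the self-referential form $t^k\ge b+a\log(\log(t)+1)$, where $(b,a)$ are exactly the coefficients in the corresponding expression of $N_\eta$ (so that $2a$ matches the $512e$-coefficients and $b$ equals the quantity appearing inside the inner logarithm of $N_\eta$). Using $\log t\le\log t^k$ for $t\ge1$, hence $\log(\log(t)+1)\le\log(\log(t^k)+1)$, and substituting $T=t^k$, the requirement is implied by $T\ge b+a\log(\log(T)+1)$. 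The third statement of Lemma~\ref{lemma-log} then guarantees this as soon as $T\ge b+2a\log(\log(b)+1)$, which is precisely the first two summands inside the $k$-th expression of $N_\eta$. Meanwhile $X\ge 2Q_t$, after squaring and cancelling $t$, reduces to the single monomial bound $t^k\ge\frac{16c^2}{C^2\eta^4}\cdot(\text{power of }n)$ (with $\eta^2$ replacing $\eta^4$ and no power of $n$ when $k=1$), which is exactly the remaining summand.

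Assembling the pieces is then immediate: since $t\ge N_\eta$ and $N_\eta$ dominates its $k$-th component, we get $t^k\ge$ the full sum inside that component; as all summands are nonnegative, this sum dominates both $b+2a\log(\log(b)+1)$ and the monomial term separately, so both $X\ge 2P_t$ and $X\ge 2Q_t$ hold, whence $X\ge P_t+Q_t$. Carrying this out for all three choices of $X$ yields $\min\{t\eta,\,t^2\eta^2/n,\,t^{3/2}\eta^2/\sqrt{n}\}\ge P_t+Q_t$, which is the claim.

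The main obstacle is the iterated-logarithm self-reference $t^k\ge b+a\log(\log(t)+1)$, which has no closed-form solution; the whole point of routing it through Lemma~\ref{lemma-log}, after the substitution $T=t^k$ and the monotone bound $\log(\log(t)+1)\le\log(\log(t^k)+1)$, is to replace the unknown $t$ inside the double logarithm by the explicit constant $b$, paying only the factor-two inflation $a\mapsto 2a$. A side condition to track is the hypothesis $b\ge 2a$ required by that statement of Lemma~\ref{lemma-log}; this holds once $\log\left(\frac{\pi^2}{3\delta}\right)$ is large enough, i.e.\ in the small-$\delta$ regime of interest here.
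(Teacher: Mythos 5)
Your proof is correct and takes essentially the same route the paper intends: the paper presents this lemma as a direct application of Lemma~\ref{lemma-log}, and your reduction --- splitting the minimum into three separate inequalities, halving the right-hand side into the $P_t$ and $Q_t$ pieces, squaring, substituting $T=t^k$ with the monotone bound $\log(\log(t)+1)\le\log(\log(t^k)+1)$, and invoking the third statement of Lemma~\ref{lemma-log} --- reproduces the coefficients of $N_\eta$ exactly (the $512e$ terms as $2a$, the inner logarithms as $\log(\log(b)+1)$, and the $16c^2$ terms from the $2Q_t$ bound). The side conditions you flag ($b\ge 2a$, i.e.\ $\log\left(\frac{\pi^2}{3\delta}\right)\ge 4e$, and $t\ge 1$) are likewise left implicit by the paper and hold in its small-$\delta$ regime.
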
 
Finally, the next lemma shows that the complexity of \myalgo{alg-clos-gen} cannot exceed $N_{d\vee\eps}$ very much.
\begin{lemma}We have for all $d>0$:
$\sum_{t\ge N_d}e^{-\frac{C^2}{16}\min\left\{td^2, \frac{t^3d^4}{n^2},\frac{t^{2}d^4}{n}\right\}}\le N_d$.
\end{lemma}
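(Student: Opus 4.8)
The plan is to bound the tail sum by comparison with an integral, after observing that, by the very definition of $N_d$, the summand is already super-polynomially small at $t=N_d$. Write $\beta=C^2/16$, $L=\log(\pi^2/(3\delta))$, and $g(t)=\min\{t d^2,\,t^3 d^4/n^2,\,t^2 d^4/n\}$, so that the quantity to bound is $\sum_{t\ge N_d}e^{-\beta g(t)}$. The first step is the algebraic identity $g(t)=h(t)^2/t$ with $h(t)=\min\{td,\,t^2d^2/n,\,t^{3/2}d^2/\sqrt n\}$, valid because squaring is monotone on $[0,\infty)$ and hence commutes with the minimum; this $h$ is exactly the function of Lemma~\ref{lemma_general-case}. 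Since $h$ is nondecreasing, so is $g$, and $t\mapsto e^{-\beta g(t)}$ is nonincreasing.

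Next I would record the regime structure: since $0<d\le 1\le n$, the cubic term $t^3d^4/n^2$ realises the minimum for $1\le t<n$, the quadratic term $t^2d^4/n$ for $n\le t<n/d^2$, and the linear term $td^2$ for $t\ge n/d^2$, so $g$ is continuous, increasing and piecewise-monomial. Applying Lemma~\ref{lemma_general-case} with $\eta=d$ gives, for every $t\ge N_d$, the bound $h(t)\ge \tfrac{4}{C}\sqrt{2tL}$, hence $g(t)=h(t)^2/t\ge \tfrac{32}{C^2}L$ and $\beta g(t)\ge 2L$. In particular the boundary term satisfies $e^{-\beta g(N_d)}\le e^{-2L}\le 1$.

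Since $e^{-\beta g}$ is nonincreasing, I would then write
\[
\sum_{t\ge N_d} e^{-\beta g(t)}\;\le\; e^{-\beta g(N_d)}+\int_{N_d}^{\infty} e^{-\beta g(s)}\,ds,
\]
and split the integral at the breakpoints $n$ and $n/d^2$. On each nonempty piece $g(s)=cs^k$ with $k\in\{3,2,1\}$ and $c\in\{\beta d^4/n^2,\ \beta d^4/n,\ \beta d^2\}$, and each contribution is a Gaussian-type tail governed by $\int_a^{\infty}e^{-cs^k}\,ds\le \tfrac{1}{kca^{k-1}}e^{-ca^k}$. The lower endpoint $a$ of every piece is at least $N_d$, which dominates the three terms $N_2,N_3,N_1$ defining it, and the constants $128/C^2$ in those terms are chosen precisely so that $cN_j^k=8L$ in the corresponding regime; thus every piece carries the factor $e^{-8L}$.

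The one point needing genuine care — and the crux of the argument — is matching each prefactor $\tfrac{1}{kca^{k-1}}$ against the definition of $N_d$: a short computation gives that the linear prefactor equals $N_1/(8L)$, the quadratic one is at most $N_3/(16L)$, and the cubic one at most $N_2/(24L)$. Hence the whole integral is at most $\tfrac{11}{48}\,\tfrac{N_d}{L}\,e^{-8L}$, which is far below $N_d$ for every $\delta\in(0,1)$ since $L\ge\log(\pi^2/3)>1$. Adding the boundary term $e^{-2L}\le 1\le N_d$ then yields $\sum_{t\ge N_d}e^{-\beta g(t)}\le N_d$, as claimed; everything beyond this prefactor matching is routine bookkeeping over the three regimes.
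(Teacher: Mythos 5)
Your proof is correct, but it takes a genuinely different route from the paper's. The paper uses the same three-regime decomposition at the breakpoints $n$ and $n/d^2$, but then stays with sums: on each regime it \emph{linearizes} the exponent in $t$ (e.g.\ it lower-bounds $\frac{C^2}{16}\frac{t^3d^4}{n^2}$ by $2C^{1/3}\frac{td^{4/3}}{n^{2/3}}$, valid for $t$ above the regime's threshold because of the constant $128/C^2$ built into $N_d$), so that each piece becomes a geometric series, bounded via $\frac{1}{1-e^{-x}}\le \frac{2}{x}$; the resulting bound is the explicit sum $\frac{32}{C^2d^2}+\frac{n^{2/3}}{C^{1/3}d^{4/3}}+\frac{4\sqrt{n}}{Cd^2}$, absorbed into $N_d$ term by term. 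You instead pass to an integral (using monotonicity of $g$, which your identity $g(t)=h(t)^2/t$ makes transparent) and use the exact tail estimate $\int_a^\infty e^{-cs^k}\,ds\le\frac{1}{kca^{k-1}}e^{-ca^k}$, exploiting the definition of $N_d$ twice: once to force $ca^k\ge 8L$ on every piece, and once to match each prefactor $\frac{1}{kca^{k-1}}$ against the corresponding defining term of $N_d$ divided by $8kL$. What each approach buys: your bound, roughly $e^{-2L}+\frac{11}{48}\frac{N_d}{L}e^{-8L}$, is exponentially small in $L=\log(\pi^2/3\delta)$ and so much stronger than the statement; the paper's argument is cruder but avoids integral estimates entirely and never even uses the $\log(1/\delta)$ part of $N_d$, only its polynomial part. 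Two small points of bookkeeping: the linear prefactor is \emph{at most} $N^{(1)}/(8L)$, not equal to it, since $N^{(1)}$ contains further positive terms; and your closing sentence, read literally (boundary term $\le N_d$ plus a tiny multiple of $N_d$), gives $(1+\text{tiny})N_d$ rather than $N_d$ --- you should instead say the boundary term is at most $e^{-2L}\le (3/\pi^2)^2<1/10$ and the integral at most $10^{-4}N_d$, so the total is below $N_d$ once $N_d\ge 1$, which the term $128L/(C^2d^2)$ guarantees for any reasonable universal constant $C$. This last caveat is the same implicit constant-level slack that the paper's own final inequality relies on, so it is not a gap specific to your argument.
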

\begin{proof}We have
\begin{align*}
\sum_{t\ge N_d}e^{-\frac{C^2}{16}\min\left\{td^2, \frac{t^3d^4}{n^2},\frac{t^{2}d^4}{n}\right\}}&\le \sum_{t\ge nd^{-2}}e^{-\frac{C^2}{16}td^2 }+\sum_{n\ge t\ge N_d-1}e^{-\frac{C^2}{16}\frac{t^3d^4}{n^2} }+\sum_{nd^{-2}> t> n}e^{-\frac{C^2}{16}\frac{t^{2}d^4}{n} }
          \\&\le \sum_{t\ge nd^{-2}}e^{-\frac{C^2}{16}td^2 }+\sum_{n\ge t\ge N_d-1}e^{-2C^{1/3}\frac{td^{4/3}}{n^{2/3}} }+\sum_{nd^{-2}> t> n}e^{-\frac{C}{2}\frac{td^2}{\sqrt{n}}}
          \\&\le \frac{1}{1-e^{-\frac{C^2}{16}d^2 }}+\frac{1}{1-e^{-2C^{1/3}\frac{d^{4/3}}{n^{2/3}} }}+\frac{1}{1-e^{-\frac{C}{2}\frac{d^2}{\sqrt{n}}}}
          \\&\le \frac{32}{C^2d^2}+\frac{n^{2/3}}{C^{1/3}d^{4/3}}+\frac{4\sqrt{n}}{Cd^2} \ \ \text{ since } 1-e^{-x} \ge x/2 \text{ for  }0<x<1 
          \\&\le N_d\;.
\end{align*}
\end{proof}
\end{document}